\documentclass[11pt]{article}

\usepackage[english]{babel}
\usepackage{amsmath}
\usepackage{amssymb}
\usepackage{amsthm}
\usepackage{graphicx}
\usepackage{bbm, xcolor}

\usepackage{cite}

\usepackage{hyperref}
\hypersetup{
    colorlinks=true,
    linkcolor=blue,
    filecolor=magenta,
    urlcolor=cyan,
    citecolor=blue
}

\setlength{\textwidth}{6in}
\setlength{\textheight}{8.75in}
\setlength{\oddsidemargin}{0.25in}
\setlength{\evensidemargin}{0.25in}
\setlength{\topmargin}{-0.5in}

%
% theorem/proposition/etc.
%
\newtheorem{theorem}{Theorem}[section]

\newtheorem{proposition}[theorem]{Proposition}

\newtheorem{corollary}[theorem]{Corollary}
\newtheorem{lemma}[theorem]{Lemma}

\theoremstyle{definition}

\newtheorem{example}[theorem]{Example}

\newtheorem{remark}[theorem]{Remark}
\newtheorem{examples}[theorem]{Examples}

\newcommand{\ep}{\epsilon}
\newcommand{\R}{\mathbb R}
\newcommand{\BS}{\mathrm {BS}}
\newcommand{\E}{\mathbb E}

\newcommand{\p}{\mathbb P}
\newcommand{\J}{\mathcal J}
\newcommand{\I}{\mathcal I}

\title{Short Maturity Forward Start Asian Options in Local Volatility Models}

\author{
Dan Pirjol
\thanks{Email: \texttt{dpirjol@gmail.com}}  \ , \
Jing Wang
\thanks{Department of Mathematics, University of Illinois Urbana--Champaign. Email: \texttt{wangjing@illinois.edu}} \ , \
Lingjiong Zhu
\thanks{Department of Mathematics, Florida State University. Email: \texttt{zhu@math.fsu.edu}} \ 
}

\date{}

\begin{document}

\maketitle

\begin{abstract}
We study the short maturity asymptotics for prices of forward start Asian 
options under the assumption that the underlying asset follows a local 
volatility model. We obtain asymptotics for the cases of out-of-the-money, 
in-the-money, and at-the-money, considering both fixed strike and 
floating Asian options. The exponential decay of the price of an 
out-of-the-money forward start Asian option is handled using 
large deviations theory, and is controlled by a rate function which
is given by a double-layer optimization problem. 
In the Black-Scholes model, the calculation of the rate function is simplified 
further to the solution of a non-linear equation. We obtain closed form for the 
rate function, as well as its 
asymptotic behaviors when the strike is extremely large, small, or close 
to the initial price of the underlying asset.  \\ 

{Keywords}: Forward start Asian option, short maturity asymptotics, local volatility model, large deviation, variational problems.

{2010 Mathematics Subject Classification Numbers}: 91G20, 91G80, 60F10.
\end{abstract}

\tableofcontents

\section{Introduction}

Asian options are among the most popular traded instruments in the equity and
commodity markets. A great variety of numerical and exact methods have been
proposed for their pricing 
\cite{Curran,Du,DufresneReview,FMW,FPP2013,GY,RogersShi,Linetsky,Vecer},
see Boyle and Potapchik \cite{BoylePot} for a survey.
Most of these methods are numerically
and computationally intensive. Monte Carlo methods require a long time, and
the calculation of the Greeks is delicate. Methods based on inverting Laplace
transforms by numerical integration \cite{GY,FMW} require special attention 
in the small maturity and/or small volatility region. 

Recently the pricing of Asian options with continuous time averaging
has been studied in the short maturity asymptotic regime \cite{PZAsian,PZAsianCEV,MLP}
in the local volatility model. This approach uses large deviations theory,
and relates the short maturity asymptotics to a rate function for the time average
of the asset price. Explicit results for the rate function can be obtained
for the Black-Scholes model and the CEV model. This approach avoids the numerical
issues in the short maturity and/or volatility region noted for the methods
mentioned above. A related approach considers 
the asymptotics of the Asian options with discrete time averaging in the limit
of a very large number of averaging dates has been proposed in \cite{PZdiscrete}
under the Black-Scholes model. 

In this paper, we are interested in the forward start Asian options
in the short maturity asymptotic regime, assuming that the asset price follows
the local volatility model. 
A forward start option becomes active at a specified date in the future; 
however, the premium is paid in advance,
and the time to maturity and the underlying asset are established at the time 
the forward start option is purchased, see e.g. \cite{MR}.
Forward starting European options have been studied in various works
in the literature. Such options have the payoff $(S_{T_2} - k S_{T_1})^+$
where $t < T_1 < T_2$ with $t$ the pricing date, and $k$ is the strike. 
More exotic derivatives such as cliquets depend also on the joint
distribution of the asset price at future times, see \cite{Gatheral} for
a survey. 

The forward start options depend on the future level of volatility, and several
approaches have been proposed to describe this quantity and model its 
dynamics. Stochastic volatility models are a popular approach. For the
purpose of pricing volatility derivatives a convenient approach is the
the variance curve model \cite{Buehler}. 
Dynamical models for implied volatility \cite{Schon} have been also 
proposed, although an arbitrage-free dynamical specification leads to 
complicated consistency conditions \cite{RogersTeh}. 
The paper \cite{GlassermanWu} introduced different notions of forward 
volatilities for forecasting purposes. 
Empirical studies of the forward smile have
been carried out in a series of papers  by Bergomi \cite{Bergomi}. 
The paper \cite{EM} empirically studies the forward smile in Sato models and
runs comparisons with a suite of models including Heston and local volatility 
models for forward smile sensitive products such as cliquets.
More closely related to the approach followed here, the asymptotics of the 
forward start European options
has been studied in various settings of small- and large-maturity in
\cite{JackRoome,JackRoome2} under the assumption that the 
asset price follows an exponential L\'{e}vy and Heston model,

Asian options are defined with an averaging period $[T_1,T_2]$ with $T_2>T_1$.
The total averaging period is $T_2-T_1$ and  the option pays at time $T_2$.
For example, a forward start (fixed strike) Asian call option with strike $K$ 
pays $\left(\frac{1}{T_{2}-T_{1}}\int_{T_{1}}^{T_{2}}S_{t}dt - K\right)^+$ 
at time $T_2$, hence the 
price of this option is given by an expectation under risk-neutral measure
\begin{equation}\label{eq-intro-C}
C(T_1,T_2)=e^{-rT_{2}}\mathbb{E}\left[\left(\frac{1}{T_{2}-T_{1}}\int_{T_{1}}^{T_{2}}S_{t}dt-K\right)^{+}\right].
\end{equation}
Similarly, the price of the forward start put fixed strike Asian option is 
given by
\begin{equation}
P(T_1,T_2)=e^{-rT_{2}}\mathbb{E}\left[\left(K-\frac{1}{T_{2}-T_{1}}\int_{T_{1}}^{T_{2}}S_{t}dt\right)^{+}\right].
\end{equation}
Another popular instrument are floating strike Asian options. The
forward start floating strike Asian call option has the price
\begin{equation}\label{eq-intro-Cf}
C_f(T_1,T_2)=e^{-rT_{2}}
\mathbb{E}\left[\left(\kappa S_{T_2} - 
\frac{1}{T_{2}-T_{1}}\int_{T_{1}}^{T_{2}}S_{t}dt\right)^{+}\right]\,,
\end{equation}
and the price of the forward start put floating strike Asian option is given by
\begin{equation}
P_f(T_1,T_2)=e^{-rT_{2}}
\mathbb{E}\left[\left(\frac{1}{T_{2}-T_{1}}\int_{T_{1}}^{T_{2}}S_{t}dt
- \kappa S_{T_2}\right)^{+}\right].
\end{equation}

The papers \cite{PZAsian,PZAsianCEV,MLP} assume that the averaging period 
starts at the valuation time $(T_1=0)$. However, in practice the averaging 
period of the Asian options may start also at some time $T_1 > 0$ in the future. 
Recall that the total averaging period is $T_2-T_1$ and the option pays at 
time $T_2$.
Forward start Asian options have been considered in the literature,
and analytical approximations have been proposed in Bouaziz et al. \cite{BBC} and Tsao et al. \cite{TCL}, also 
including quanto effects, see Chang et al. \cite{quanto}. Vanmaele 
et al. \cite{VDL} considered forward starting Asian options with discrete time 
averaging under the Black-Scholes
model and derived upper bounds on the prices of these instruments 
from comonotonicity.

In this paper, we are interested in the limiting behavior of forward start
Asian options when $T_1$ and $T_2$
approach to $0$ with a constant ratio. 
Let $T_{1}=\tau T$ and $T_{2}=T$  for some fixed ratio $\tau\in(0,1)$ and 
maturity $T>0$. Clearly when the call option is out-of-the-money ($S_0<K$) or 
at-the-money ($S_0=K$), we have $C(T):=C(T_1, T_2)\to 0$ as $T\to 0$.
Similarly, when the put option is out-of-the-money ($S_{0}>K$)
or at-the-money ($S_{0}=K$), we have $P(T):=P(T_1, T_2)\to 0$ as $T\to 0$.
However, the limiting behaviors are quite different. In fact the out-of-the-money case is 
governed by rare events which shall be captured by large deviation techniques, 
whereas the at-the-money case is governed by the fluctuations about the typical events. 

When $\tau=0$, this falls into the case of a standard Asian option, 
whose short maturity asymptotics are studied in \cite{PZAsian}. 
In this paper, we 
consider the strict forward case, namely when $\tau>0$. This situation requires
special consideration. 

We assume that the 
underlying asset follows a local volatility model. From \eqref{eq-intro-C} one 
can realize that the short maturity pricing problem is equivalent to estimating
the probability of the average asset price exceeding the strike price, i.e. 
$\p\left(\frac{1}{1-\tau}\int_{\tau}^{1}S_{Tt}dt>K \right)$ , which is a rare 
event when $K>S_0$ and $T\to0$. A natural approach is to use 
large deviation theory and the contraction principle \cite{DZ}.  
For instance, we can obtain that the price of an forward start Asian call ($S_0$, $K$, 
$\tau$, $T$) satisfies, when $K>S_0$,
\begin{equation*}
\lim_{T\to 0}T\log C(T)=-
\inf_{
\substack{\frac{1}{1-\tau}\int_{\tau}^{1}e^{g(t)}dt=K
\\
g(0)=\log S_{0}, g\in\mathcal{AC}[0,1]}}\frac{1}{2}\int_{0}^{1}\left(\frac{g'(t)}{\sigma(e^{g(t)})}\right)^{2}dt=:-{\cal I}_{\rm fwd}(S_0,K,\tau).
\end{equation*}
This suggests that the major contribution to the probability  $\p\left(\frac{1}{1-\tau}\int_{\tau}^{1}S_{Tt}dt>K \right)$ is closely related to the minimum energy of absolutely continuous paths started from $S_0$ whose arithmetic average during $(\tau,1)$ is $K$.  
This minimum energy is known as the rate function of a large deviation problem. Although it is rather complicated to look for a closed form for the rate function, we are able to reduce it to a double-layer variational problem
\begin{equation}\label{eq-intro-I}
{\cal I}_{\rm fwd}(S_0,K,\tau)= \inf_{c\in \mathbb{R}}
\left\{ \frac12 c^2 \tau + \frac{1}{1-\tau}
{\cal I}\left(S_0 e^{F^{-1}(c\tau)}, K \right) \right\},
\end{equation}
where $F(\cdot)$ is given by
$
F(\cdot) = \int_0^{\cdot} \frac{dz}{\sigma(S_0 e^z)}\, 
$
and ${\cal I}(x,K)$ is given by the variational problem
\begin{equation*}
{\cal I}(x,K)  = \inf_{\substack{\varphi\in \mathcal{AC}[0,1], \varphi(0)=0 \\ \int_0^1  e^{\varphi(u)}du = {K}/{x}}}\left\{ \frac12
\int_0^1  \left(\frac{\varphi'(u)}{\sigma(x e^{\varphi(u)})}\right)^2du\right\}.
\end{equation*}
We find out that the optimal path corresponding to the minimal energy is 
indeed a patch of the optimal path of a European option from time $0$ to $\tau$ 
and the optimal path of an Asian option from time $\tau$ to $1$. 
It also coincides with the intuition that a forward start Asian option is an 
interpolation of a European option and a standard Asian option. 

Furthermore, under the assumption of the Black-Scholes model (where volatility 
of the underlying asset is a constant $\sigma$), we are able to compute the
 rate function explicitly and obtain for $K>S_{0}$,
\begin{equation*}
\lim_{T\to0}T\log C(T)=-\frac{1}{2\sigma^2(1-\tau)^2}\left({\tau}\beta^2\tanh^2\frac{\beta}{2}+(1-\tau)\beta^2-2(1-\tau)\beta\tanh\frac{\beta}{2} \right),
\end{equation*}
and for $K<S_{0}$,
\begin{equation*}
\lim_{T\to0}T\log P(T)=-\frac{2}{\sigma^2(1-\tau)^2}\left({\tau}\xi^2\tan^2\xi-(1-\tau)\xi^2+(1-\tau)\xi\tan\xi \right),
\end{equation*}
where $\beta\in(0,\infty)$  and $\xi\in(0,\pi/2)$ are the unique 
solutions of 
\begin{align*}
\frac{\sinh\beta}{\beta}=\frac{K}{S_0}e^{-\frac{\tau}{1-\tau}\beta\tanh\frac{\beta}{2}},
\quad
\frac{\sin (2\xi)}{2\xi}=\frac{K}{S_0}e^{\frac{2\tau}{1-\tau}\xi\tan{\xi}}.
\end{align*}
We also obtain the optimal path explicitly by gluing the optimal path of an 
European option ($S_0$, $S_0e^{c\sigma\tau}$) over the period $[0, \tau)$
with the optimal path of a standard Asian option $(S_0e^{c\sigma\tau}, K)$ 
over the averaging period $[\tau, 1]$. This is given explicitly as
\begin{equation*}
f(t)=\begin{cases}
ct & 0\le t\le\tau
\\
c\tau+\varphi\left(\frac{t-\tau}{1-\tau}\right) & \tau<t\le 1
\end{cases}\,,
\end{equation*}
where $\varphi(\cdot)$ is the optimal path of a standard Asian option under 
Black-Scholes model with maturity $1$, and $c$ is uniquely determined by the 
$C^1$ smoothness of $f$. 
This also holds for general local 
volatility models, namely the unique $\arg\min$ $c$ of \eqref{eq-intro-I} is 
such that the optimal path has continuous first derivative,
which is guaranteed by \eqref{eq-S-t-condition}.

The paper is organized as follows. 
In Section \ref{sec-loc-vol}, we study
the asymptotics of forward start Asian options in the out-of-the-money, 
at-the-money and in-the-money regimes, under the local volatility model.  
In Section \ref{sec-BS-fwd}, we focus 
on the special case of the Black-Scholes model, for which explicit results can 
be obtained. We also study the asymptotic expansions of the rate functions 
in the cases of deep out-of-the-money and around at-the-money, and introduce 
the notions of $\tau$-AATM and $\tau$-DOTM that are more suitable for forward 
starting Asian options. At last in 
Section \ref{sec-float} we discuss the asymptotics for short maturity forward
start Asian options with floating strike, considering both cases of newly issued
and seasoned floating strike Asian option. These cases correspond to the valuation
period being prior or during the averaging period, respectively. We also
present numerical tests for the asymptotic formulas, by comparing with the
analytical approximation of \cite{TCL} and Monte Carlo simulation.
An Appendix summarizes the notations used
for the various rate functions giving the short maturity asymptotics.

\section{Local volatility model}\label{sec-loc-vol}
In this section we study the short maturity asymptotics for the price of a 
forward start Asian option, under the assumption that the underlying asset 
follows the local volatility model, see e.g. \cite{Gatheral} for an overview. 
We assume the underlying asset price is given by the solution of the 
following stochastic differential equation
\begin{equation}\label{eq-S-t}
dS_t=(r-q)S_tdt+\sigma(S_t)S_t\,dW_t,\quad S_0>0,
\end{equation}
where $W_t$ is a standard  Brownian motion, $r\ge0$ is the risk-free rate,
$q\geq 0$ is the continuous dividend yield, and $\sigma(\cdot)$ is the local
volatility. We impose the following assumptions on the local volatility
$\sigma(\cdot )$: there exist $0 < \underline{\sigma} <  \overline{\sigma} <
\infty$ and $M,\alpha>0$ such that
\begin{align}\label{eq-S-t-condition}
& 0< \underline{\sigma}\le \sigma(x)\le \overline{\sigma}<\infty,\quad \forall \ x\in[0,\infty),\\
& |\sigma(e^x)-\sigma(e^y)|\le M|x-y|^\alpha,\quad \forall\ x,y\in \mathbb{R}\nonumber.
\end{align}
Under these assumptions it was shown in the paper of Varadhan 
\cite{Varadhan67} that the log asset price paths, and hence the asset price paths satisfy 
large deviations principles. This property can be shown to hold in a wider 
class of 
local volatility models including the CEV model \cite{DRYZ,BaldiCaram},
which do not satisfy the conditions \eqref{eq-S-t-condition}.
For simplicity
we restrict ourselves here to the class of functions $\sigma(\cdot )$ 
with property \eqref{eq-S-t-condition}, with the understanding that the 
results can be generalized appropriately.

We are interested in the asymptotic behavior of the forward start Asian call 
option with parameters $(S_0,K,T,\tau)$
\begin{equation*}
C(T)=e^{-rT}\mathbb{E}\left[\left(\frac{1}{(1-\tau)T}\int_{\tau T}^{T}S_{t}dt-K\right)^{+}\right]\,,
\end{equation*}
and the forward start Asian put option $(S_0,K,T,\tau)$
\begin{equation*}
P(T)=e^{-rT}\mathbb{E}\left[\left(K-\frac{1}{(1-\tau)T}\int_{\tau T}^{T}S_{t}dt\right)^{+}\right]\,,
\end{equation*}
as $T\to0$. Obviously $C(T)$ and $P(T)$ converge to $0$ in the out-of-the-money and at-the-money cases. However, the causes are very different, which then lead to very different vanishing speed. When it is out-of-the-money, the rapid decrease to $0$ of the option prices comes from the extremely small probability of positive payoff when maturity $T\to0$. The decrease is exponentially fast. When it is at-the-money, the option prices drop to $0$ comes from the drop of Gaussian fluctuation as $T\to0$. The speed is at the scale of $\sqrt{T}$.

\subsection{Out-of-the-money and in-the-money asymptotics}
First we study the asymptotic for out-of-the-money  case. Recall the definitions for a forward start Asian option. Let $A(T,\tau)$ be the forward averaged asset price under the risk-neutral measure
\begin{equation}\label{eq-A-T}
A(\tau T,\tau):=\frac{1}{(1-\tau)T}\int_{\tau T}^T\E[S_t]dt,
\end{equation}
then 
\begin{equation*}
A(\tau T,\tau)=\begin{cases}
\frac{S_0\left( e^{(r-q)T}-e^{(r-q)\tau T}\right)}{(1-\tau)(r-q)T}
&\text{when $r-q\not=0$},\\
S_0 
&\text{when $r-q=0$}.
\end{cases}
\end{equation*}  
A forward start Asian call option is said to be {out-of-the-money} if 
$K>A(\tau T,\tau)$ and {in-the-money} if $K<A(\tau T,\tau)$. However, note that
$A(\tau T,\tau)=S_0+O(T)$ when $T\to0$. This implies that in the small maturity 
regime, $K>A(\tau T,\tau)$ is  equivalent to $K>S_0$. Hence throughout the 
rest of this paper, we say a forward start Asian call option is 
{out-of-the-money} if $K>S_0$ and {in-the-money} if $K<S_0$. Correspondingly 
we call a forward start Asian put option {out-of-the-money} if $K<S_0$ and 
{in-the-money} if $K>S_0$.

First we prove the following lemma that allows us to transfer the price 
estimate problem to probabilistic estimates of rare events.
\begin{lemma}\label{lemma-TCT}
Let $C(T)$, $P(T)$, $T>0$, $\tau\in (0,1)$ be as given above. Then we have
\begin{equation}\label{eq-TCT}
\lim_{T\rightarrow 0}T\log C(T)
=\lim_{T\rightarrow 0}T\log\mathbb{P}\left(\frac{1}{(1-\tau)T}\int_{\tau T}^{T}S_{t}dt\geq K\right),
\end{equation}
and 
\begin{equation}\label{eq-TPT}
\lim_{T\rightarrow 0}T\log P(T)
=\lim_{T\rightarrow 0}T\log\mathbb{P}\left(\frac{1}{(1-\tau)T}\int_{\tau T}^{T}S_{t}dt\le K\right).
\end{equation}
\end{lemma}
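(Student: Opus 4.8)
The plan is to prove the two limits by sandwiching the call and put prices between suitable probability expressions, one direction being elementary and the other requiring a uniform bound on the tail of the averaged asset price.

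First I would establish the easy inequalities. For the call, since $(x-K)^+ \le x \mathbbm{1}_{x \ge K}$ is not quite enough, I would instead use that for any $\eta > 0$,
\begin{equation*}
C(T) \ge e^{-rT}\,\eta\,\mathbb{P}\left(\frac{1}{(1-\tau)T}\int_{\tau T}^{T}S_{t}dt \ge K + \eta\right),
\end{equation*}
which gives $\liminf_{T\to 0} T\log C(T) \ge \lim_{T\to 0} T\log \mathbb{P}(\cdots \ge K+\eta)$; one then argues (using the large deviations rate function's continuity in $K$, or directly) that letting $\eta \downarrow 0$ recovers the bound with $K$ on the right-hand side. For the reverse direction, I would split
\begin{equation*}
\mathbb{E}\left[\left(\frac{1}{(1-\tau)T}\int_{\tau T}^{T}S_{t}dt-K\right)^{+}\right] = \int_K^\infty \mathbb{P}\left(\frac{1}{(1-\tau)T}\int_{\tau T}^{T}S_{t}dt > y\right)dy,
\end{equation*}
and bound the integrand. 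The key estimate is a uniform exponential tail bound: there exist constants such that $\mathbb{P}(\frac{1}{(1-\tau)T}\int_{\tau T}^T S_t\,dt > y) \le C_1 e^{-C_2 (\log y)^2 / T}$ for $y$ large, which follows from the lower bound $\sigma \ge \underline\sigma$ in \eqref{eq-S-t-condition} together with standard Gaussian/Bernstein estimates for $\sup_{t\le T}\log S_t$ (the drift term is $O(T)$ and harmless). This makes the tail integral $\int_{K'}^\infty$ negligible at the exponential scale compared to $\mathbb{P}(\cdots \ge K)$, while on the bounded piece $\int_K^{K'}$ one has the trivial bound $\le (K'-K)\,\mathbb{P}(\cdots \ge K)$. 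Combining, $T\log C(T) \le T\log\big[(K'-K)\mathbb{P}(\cdots\ge K)\big] + o(1)$, which gives the matching $\limsup$.

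The put case is simpler because the integrand is bounded: $(K - \frac{1}{(1-\tau)T}\int_{\tau T}^T S_t\,dt)^+ \le K$, so immediately $P(T) \le e^{-rT} K\,\mathbb{P}(\frac{1}{(1-\tau)T}\int_{\tau T}^T S_t\,dt \le K)$, giving one direction; and the lower bound $P(T) \ge e^{-rT}\eta\,\mathbb{P}(\cdots \le K-\eta)$ followed by $\eta \downarrow 0$ gives the other, exactly as for the call. I would also need to note that the two limits in \eqref{eq-TCT} and \eqref{eq-TPT} actually exist — but this can be deferred, since the lemma's content is really the \emph{equality} of $\lim T\log C(T)$ with $\lim T\log \mathbb{P}(\cdots)$ in the sense that if one side has a limit so does the other with the same value; the existence itself is established later via the large deviations analysis.

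The main obstacle is the uniform tail bound on $\frac{1}{(1-\tau)T}\int_{\tau T}^T S_t\,dt$ needed to control the tail integral in the call case. One must show that the contribution from $\{$average $> K'\}$ decays fast enough (faster than any fixed exponential rate $e^{-c/T}$ as $K' \to \infty$) that it does not interfere with the leading exponential asymptotics; this is where the sub-Gaussian-in-$\log$ behavior of the running maximum of $S$ under the bounded-volatility assumption \eqref{eq-S-t-condition} enters, via a time-change or Dubins–Schwarz argument reducing $\log S_{tT}/\sqrt{T}$ to a Brownian-type process with bounded quadratic variation. Everything else is routine manipulation of the layer-cake representation and $\liminf/\limsup$ bookkeeping.
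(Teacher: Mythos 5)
Your proposal is correct, and for the call upper bound it takes a genuinely different route from the paper. The paper bounds $C(T)$ via H\"older's inequality, $C(T)\le e^{-rT}\bigl(\mathbb{E}\bigl[|\frac{1}{(1-\tau)T}\int_{\tau T}^{T}S_{t}dt-K|^{p}\bigr]\bigr)^{1/p}\,\mathbb{P}(\cdots\ge K)^{1/p'}$, and then shows that the $L^p$ moment stays bounded as $T\to 0$ by solving the ODE for $\mathbb{E}(S_t^p)$ and invoking $\sigma\le\overline\sigma$; letting $p'\to 1$ then gives the matching $\limsup$. You instead use the layer-cake representation and split the tail integral at a level $K'$, controlling $\int_{K'}^{\infty}$ with a sub-Gaussian-in-$\log$ tail bound for the running maximum of $S$. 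Both arguments hinge on the same hypothesis — the \emph{upper} bound $\sigma\le\overline\sigma$ in \eqref{eq-S-t-condition} (you attribute your tail estimate to the lower bound $\sigma\ge\underline\sigma$, which is a slip: the Dubins--Schwarz/exponential-martingale control of $\sup_{t\le T}\log S_t$ needs the quadratic variation bounded \emph{above}) — and both are valid. The paper's moment-bound route is somewhat more elementary (Gronwall on an ODE versus a pathwise time-change argument) and generalizes painlessly to the put, whereas your route avoids the $p'\to 1$ limit and, as you note, trivializes for the put since the payoff is bounded by $K$; your bookkeeping with $K'\to\infty$ after $T\to 0$ does require observing that $\limsup T\log\mathbb{P}(\cdots\ge K)>-\infty$, which holds here. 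The lower bounds are identical in the two arguments, and both share the same small unaddressed point about continuity of the rate function in $K$ when sending $\eta\downarrow 0$.
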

\begin{proof}
 The proof of \eqref{eq-TPT} is similar to \eqref{eq-TCT}. We only prove \eqref{eq-TCT} here. The idea follows similarly as in \cite{PZAsian}.  
First by H\"older's inequality,
for any $p, p'>1$, $\frac1p+\frac{1}{p'}=1$ we have
\begin{align*}
C(T)&\le e^{-rT}\mathbb{E}\left[\bigg|\frac{1}{(1-\tau)T}\int_{\tau T}^{T}S_{t}dt-K\bigg|
\mathbbm{1}_{\frac{1}{(1-\tau)T}\int_{\tau T}^{T}S_{t}dt\ge K}\right]\\
&\le e^{-rT}\left(\mathbb{E}\left[\bigg|\frac{1}{(1-\tau)T}\int_{\tau T}^{T}S_{t}dt-K\bigg|^p\right]
\right)^{\frac1p}\p\left(\frac{1}{(1-\tau)T}\int_{\tau T}^{T}S_{t}dt\ge K\right)^{\frac{1}{p'}}.
\end{align*}
We claim that when $T\to 0$, there exists a constant $C>0$ such that
\begin{equation}\label{eq-claim}
\log \mathbb{E}\left[\bigg|\frac{1}{(1-\tau)T}\int_{\tau T}^{T}S_{t}dt-K\bigg|^p\right]\le C.
\end{equation}
Assume the above estimate, we can  easily observe for any $1<p'<2$ that
\begin{equation*}
\limsup_{T\rightarrow 0}T\log C(T)\le \frac{1}{p'}\log  \p\left(\frac{1}{(1-\tau)T}\int_{\tau T}^{T}S_{t}dt\ge K\right).
\end{equation*}
By letting $p'\to 1$ we then obtain  the upper bound. On the other hand, note for any $\epsilon>0$,
\begin{align*}
C(T)&\ge e^{-rT}\mathbb{E}\left[\bigg(\frac{1}{(1-\tau)T}\int_{\tau T}^{T}S_{t}dt-K\bigg)\mathbbm{1}_{\frac{1}{(1-\tau)T}\int_{\tau T}^{T}S_{t}dt\ge K+\ep}\right]\\
&\ge e^{-rT} \ep\, \p\left(\frac{1}{(1-\tau)T}\int_{\tau T}^{T}S_{t}dt\ge K+\ep \right).
\end{align*}
By letting $\ep\to 0$ we then obtain $ \liminf_{T\rightarrow 0}T\log C(T)\ge \log  \p\left(\frac{1}{(1-\tau)T}\int_{\tau T}^{T}S_{t}dt\ge K\right)$, 
which implies \eqref{eq-TCT}. Now we are let to prove \eqref{eq-claim}.
When $p>2$, by convexity of $x\to x^p$ on $(0,\infty)$ we have that
\begin{align*}
\mathbb{E}\left[\bigg|\frac{1}{(1-\tau)T}\int_{\tau T}^{T}S_{t}dt-K\bigg|^p\right]\le 2^{p-1}\left( \mathbb{E}\left[\bigg(\frac{1}{(1-\tau)T}\int_{\tau T}^{T}S_{t}dt\bigg)^p\right]+K^p\right).
\end{align*}
Moreover,
\begin{equation*}
\mathbb{E}\left[\bigg(\frac{1}{(1-\tau)T}\int_{\tau T}^{T}S_{t}dt\bigg)^p\right]\le\frac{1}{(1-\tau)T}\int_{\tau T}^{T} \mathbb{E}(S_{t}^p)dt,
\end{equation*}
and $\mathbb{E}(S_{t}^p)$ solves the differential equation
\begin{equation*}
d\mathbb{E}(S_{t}^p)=\left(p(r-q)\mathbb{E}(S_{t}^p)+\frac12 p(p-1)\mathbb{E}[S_{t}^p\sigma(S_t)^2]\right)dt.
\end{equation*}
Using \eqref{eq-S-t-condition} we obtain that $\mathbb{E}(S_{t}^p)\le S_0^pe^{(p(r-q)+\frac12 p(p-1)\overline{\sigma}^2)t}$, hence
\begin{equation*}
\frac{1}{(1-\tau)T}\int_{\tau T}^{T} \mathbb{E}(S_{t}^p)dt\le S_0^pe^{(p(r-q)+\frac12 p(p-1)\overline{\sigma}^2)T}.
\end{equation*}
This implies \eqref{eq-claim}. We then complete the proof.
\end{proof}

From Lemma \ref{lemma-TCT} we can reduce the logarithmic estimates for prices of out-of-the-money forward start Asian options to the probability estimate of the rare event that the underlying asset price goes up from $S_0$ to $K$ within time $T\to0$. A classic tool is 
the large deviation theory. For the definition and basic properties of the large deviation theory,
we refer to \cite{DZ}.

\begin{theorem}\label{thm-otm}
Assume the asset price $S_t$ follows the local volatility model as in \eqref{eq-S-t} and \eqref{eq-S-t-condition}. Then the price of an out-of-the-money forward start Asian call option ($K>S_0$) satisfies
\begin{equation}\label{eq-out-call}
\lim_{T\to0}T\log C(T)=-{\cal I}_{\rm fwd}(S_0,K,\tau);
\end{equation}
and the price of the corresponding out-of-the-money forward start Asian put option ($K<S_0$) satisfies
\begin{equation}\label{eq-out-put}
\lim_{T\to0}T\log P(T)=-{\cal I}_{\rm fwd}(S_0,K,\tau),
\end{equation}
where
\begin{equation*}
{\cal I}_{\rm fwd}(S_0,K,\tau_1/\tau_2)=
\inf_{
\substack{\frac{1}{1-\tau}\int_{\tau}^{1}e^{g(t)}dt=K
\\
g(0)=\log S_{0}, g\in\mathcal{AC}[0,1]}}\frac{1}{2}\int_{0}^{1}\left(\frac{g'(t)}{\sigma(e^{g(t)})}\right)^{2}dt.
\end{equation*}
\end{theorem}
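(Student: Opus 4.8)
The plan is to combine Lemma~\ref{lemma-TCT} with a sample-path large deviations principle for the time-rescaled log-price and the contraction principle, and then to reduce the resulting path-space infimum to the constrained variational problem defining ${\cal I}_{\rm fwd}$ via a reparametrization that turns the energy into a quadratic functional. By Lemma~\ref{lemma-TCT} the claim is equivalent to showing that $\lim_{T\to0}T\log\mathbb{P}(A_T\ge K)=-{\cal I}_{\rm fwd}(S_0,K,\tau)$ for the call ($K>S_0$) and $\lim_{T\to0}T\log\mathbb{P}(A_T\le K)=-{\cal I}_{\rm fwd}(S_0,K,\tau)$ for the put ($K<S_0$), where $A_T:=\frac{1}{(1-\tau)T}\int_{\tau T}^{T}S_t\,dt$.

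\emph{LDP and contraction.} Put $Y_t=\log S_t$, so $dY_t=\big((r-q)-\tfrac12\sigma(e^{Y_t})^2\big)dt+\sigma(e^{Y_t})\,dW_t$ by It\^o's formula; after the time change $t=Ts$ the paths $\tilde Y^T_s:=Y_{Ts}$, $s\in[0,1]$, solve $d\tilde Y^T_s=T\big((r-q)-\tfrac12\sigma(e^{\tilde Y^T_s})^2\big)ds+\sqrt T\,\sigma(e^{\tilde Y^T_s})\,dB_s$ with $\tilde Y^T_0=\log S_0$ and $B$ a standard Brownian motion. This is a small-noise diffusion whose drift is of order $T$, hence negligible at the scale of the noise; under \eqref{eq-S-t-condition} the estimates of Varadhan~\cite{Varadhan67}, applied as in \cite{PZAsian}, give that $\{\tilde Y^T\}_{T>0}$ obeys an LDP in $C[0,1]$ with speed $T$ and good rate function $I(g)=\tfrac12\int_0^1\big(g'(t)/\sigma(e^{g(t)})\big)^2dt$ for $g\in\mathcal{AC}[0,1]$ with $g(0)=\log S_0$, and $+\infty$ otherwise. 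Since $A_T=\Phi(\tilde Y^T)$ with $\Phi(g):=\frac{1}{1-\tau}\int_\tau^1 e^{g(t)}dt$ continuous on $C[0,1]$, the contraction principle yields an LDP for $A_T$ with speed $T$ and good rate function $J(a):=\inf\{I(g):\Phi(g)=a\}$, and $J(K)={\cal I}_{\rm fwd}(S_0,K,\tau)$ because $I$ already enforces $g(0)=\log S_0$. Using the LD upper bound on the closed set $[K,\infty)$ (resp. $(-\infty,K]$) and the lower bound on the open set $(K,\infty)$ (resp. $(-\infty,K)$), the theorem follows once we prove $\inf_{a\ge K}J(a)=\inf_{a>K}J(a)=J(K)$ for $K>S_0$, and $\inf_{a\le K}J(a)=\inf_{a<K}J(a)=J(K)$ for $K<S_0$.

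\emph{Reparametrization and the scaling argument.} Write $g=\log S_0+\psi$ and $w(t)=H(\psi(t))$ with $H(y)=\int_0^y dz/\sigma(S_0e^z)$; as $z\mapsto 1/\sigma(S_0e^z)$ is bounded and bounded away from $0$, $\psi\mapsto w$ is a bi-Lipschitz bijection of the $\mathcal{AC}[0,1]$-paths vanishing at $0$, and in these coordinates $I(g)=\tfrac12\int_0^1 w'(t)^2dt$ while $\Phi(g)=\frac{S_0}{1-\tau}\int_\tau^1 e^{H^{-1}(w(t))}dt$. For the call ($K>S_0$): for any admissible $g$ with $\Phi(g)\ge K$, the map $\lambda\mapsto\Phi\big(\log S_0+H^{-1}(\lambda w)\big)$ is continuous, equals $S_0<K$ at $\lambda=0$ and is $\ge K$ at $\lambda=1$, so by the intermediate value theorem some $\lambda^{\ast}\in(0,1]$ gives a path with $\Phi=K$ and energy $(\lambda^{\ast})^2 I(g)\le I(g)$; hence $\inf_{a\ge K}J(a)\ge{\cal I}_{\rm fwd}$, the reverse being trivial. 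Conversely, for $\epsilon>0$ take $g$ with $\Phi(g)=K$ and $I(g)\le{\cal I}_{\rm fwd}+\epsilon$; since $H^{-1}$ is strictly increasing and $t>0$ on $[\tau,1]$, the perturbed path $w(t)+\delta t$ has $\Phi>K$ and energy converging to $I(g)$ as $\delta\downarrow0$, so $\inf_{a>K}J(a)\le{\cal I}_{\rm fwd}+\epsilon$; letting $\epsilon\downarrow0$ closes the chain. The put case ($K<S_0$) is verbatim the same, now shrinking $\lambda w$ from $\Phi(g)\le K<S_0$ until $\Phi=K$, and perturbing by $w(t)-\delta t$. This proves \eqref{eq-out-call}--\eqref{eq-out-put}.

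\emph{Main obstacle.} The analytic input—the path-space LDP for $\tilde Y^T$ under merely H\"older $\sigma$—is quoted from \cite{Varadhan67}, so within the argument the delicate step is the last one: verifying that relaxing the averaging constraint from $=K$ to $\ge K$ (or $>K$) does not change the infimum. This works precisely because the reparametrization makes the energy scale as $\lambda^2$, a property that fails for the unscaled Lagrangian $\tfrac12(g'/\sigma(e^g))^2$; one should also record that the constraint set is nonempty (e.g. $g$ affine on $[0,\tau]$ and $\equiv\log K$ on $[\tau,1]$), so that ${\cal I}_{\rm fwd}(S_0,K,\tau)<\infty$.
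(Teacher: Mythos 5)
Your proof is correct and follows essentially the same route as the paper: Lemma~\ref{lemma-TCT}, Varadhan's sample-path LDP for the time-rescaled log-price, and the contraction principle applied to the continuous map $g\mapsto\frac{1}{1-\tau}\int_\tau^1 e^{g(t)}dt$. The only difference is that you explicitly verify the boundary-matching step $\inf_{a\ge K}J(a)=\inf_{a>K}J(a)=J(K)$ via the $\lambda$-scaling and $\delta t$-perturbation arguments in the $w=H(\psi)$ coordinates, a step the paper's proof leaves implicit; this addition is sound and makes the argument more complete.
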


\begin{proof}
From Lemma \ref{lemma-TCT} we just need to have a large deviation estimate 
for $\frac{1}{(1-\tau)T}\int_{\tau T}^{T}S_{t}dt$. 
The idea is to use the contraction principle from large deviations theory 
(see e.g. \cite{DZ}).
Let $X_{t}=\log S_{t}$.  First note 
\begin{equation*}
\frac{1}{(1-\tau)T}\int_{\tau T}^{T}S_{t}dt
=\frac{T}{(1-\tau)T}\int_{\tau}^{1}S_{tT}dt
=\frac{1}{1-\tau}\int_{\tau}^{1}e^{X_{tT}}dt.
\end{equation*}
On the other hand we know that $\mathbb{P}(X_{tT}\in\cdot, t\in[0,1])$ 
satisfies a sample path large deviation principle on $L_{\infty}([0,1],\R)$ 
with the rate function (see e.g. \cite{Varadhan67})
\begin{equation*}
I(g)=\begin{cases}
\frac{1}{2}\int_{0}^{1}\left(\frac{g'(t)}{\sigma(e^{g(t)})}\right)^{2}dt,
&\text{for $g(0)=\log S_{0}, g\in\mathcal{AC}[0,1]$}
\\
+\infty
&\text{otherwise}
\end{cases},
\end{equation*}
where $\mathcal{AC}[0,1]$ is the space of absolutely continuous functions.

Since the map $g\mapsto\frac{1}{1-\tau}\int_{\tau}^{1}e^{g(x)}dx$
from $L_{\infty}([0,1],\R)$ to $\mathbb{R}^{+}$ is a continuous map, 
by contraction principle (see e.g. \cite{DZ}), we have
\begin{align}
\lim_{T\rightarrow 0}T\log C(T)
&=\lim_{T\rightarrow 0}T\log\mathbb{P}\left(\frac{1}{(1-\tau)T}\int_{\tau T}^{T}S_{t}dt\geq K\right)
\\
&=-\inf_{
\substack{\frac{1}{1-\tau}\int_{\tau}^{1}e^{g(t)}dt=K
\\
g(0)=\log S_{0}, g\in\mathcal{AC}[0,1]}}\frac{1}{2}\int_{0}^{1}\left(\frac{g'(t)}{\sigma(e^{g(t)})}\right)^{2}dt.
\nonumber
\end{align}
Hence we obtain \eqref{eq-out-call}.
Following the same argument we can easily obtain \eqref{eq-out-put}. 
\end{proof}
As a by-product, we can also obtain the estimate for an in-the-money forward start Asian option, by using put-call parity. We present it in the following corollary.
\begin{corollary}
Assume the asset price $S_t$ satisfies \eqref{eq-S-t} and \eqref{eq-S-t-condition}. Then the price of an in-the-money 
forward start Asian call option ($K<S_0$) satisfies
\begin{equation}\label{eq-in-call}
C(T)=S_0-K+\left(\frac12(r-q)(1+\tau)-(S_0-K)r\right)T+O(T^2),\qquad \mbox{as $T\to0$},
\end{equation}
and the price of the corresponding in-the-money forward start Asian put option ($K>S_0$) satisfies
\begin{equation}\label{eq-in-put}
P(T)=K-S_0-\left(\frac12(r-q)(1+\tau)+(S_0-K)r\right)T+O(T^2), \qquad \mbox{as $T\to0$}.
\end{equation}
\end{corollary}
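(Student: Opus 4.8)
The plan is to prove the in-the-money asymptotics by put-call parity, reducing everything to a short-maturity expansion of the out-of-the-money option price, which decays exponentially and hence contributes only $O(T^2)$ (indeed $o(T^n)$ for all $n$). First I would write down the put-call parity relation for forward start Asian options: since $\left(\frac{1}{(1-\tau)T}\int_{\tau T}^T S_t\,dt - K\right)^+ - \left(K - \frac{1}{(1-\tau)T}\int_{\tau T}^T S_t\,dt\right)^+ = \frac{1}{(1-\tau)T}\int_{\tau T}^T S_t\,dt - K$, taking discounted risk-neutral expectations gives
\begin{equation*}
C(T) - P(T) = e^{-rT}\left( A(\tau T,\tau) - K \right),
\end{equation*}
where $A(\tau T,\tau) = \frac{1}{(1-\tau)T}\int_{\tau T}^T \E[S_t]\,dt$ is exactly the forward averaged price computed in \eqref{eq-A-T}. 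This identity is exact, for all $T>0$.

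Next I would Taylor-expand the right-hand side as $T\to 0$. Using $\E[S_t] = S_0 e^{(r-q)t}$ and the closed form for $A(\tau T,\tau)$ given in the excerpt, a direct expansion of $\frac{e^{(r-q)T} - e^{(r-q)\tau T}}{(1-\tau)(r-q)T}$ in powers of $T$ yields $A(\tau T,\tau) = S_0\left(1 + \tfrac12(r-q)(1+\tau)T + O(T^2)\right)$; note the coefficient $\tfrac12(1+\tau)$ is just the average of $t$ over $[\tau,1]$, so this also covers the $r=q$ case continuously. Multiplying by $e^{-rT} = 1 - rT + O(T^2)$ gives
\begin{equation*}
C(T) - P(T) = (S_0 - K) + \left( \tfrac12(r-q)(1+\tau) S_0 - (S_0-K)r \right) T + O(T^2).
\end{equation*}
Wait — comparing with \eqref{eq-in-call} I see the stated coefficient is $\tfrac12(r-q)(1+\tau)$ without the factor $S_0$, so one should check whether the intended normalization sets $S_0=1$ or whether the $S_0$ belongs there; in the write-up I would keep the $S_0$ explicit unless the paper's convention absorbs it, and flag the discrepancy.

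The final step is to argue that one of $C(T)$, $P(T)$ is exponentially small and may be dropped into the $O(T^2)$ error. In the in-the-money call case $K < S_0$, the corresponding put is out-of-the-money ($K<S_0$), so by \eqref{eq-out-put} of Theorem \ref{thm-otm}, $P(T) = e^{-{\cal I}_{\rm fwd}(S_0,K,\tau)/T + o(1/T)}$ with ${\cal I}_{\rm fwd}(S_0,K,\tau) > 0$ (positivity of the rate function follows since the constraint path must move $g$ away from $\log S_0$, incurring strictly positive energy under \eqref{eq-S-t-condition}); hence $P(T) = o(T^n)$ for every $n$, and substituting into the parity identity gives \eqref{eq-in-call}. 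Symmetrically, for the in-the-money put $K>S_0$ the call is out-of-the-money and \eqref{eq-out-call} makes $C(T)$ superpolynomially small, yielding \eqref{eq-in-put} after moving $C(T)-K+S_0$ across. The only genuine obstacle is the bookkeeping in the Taylor expansion — getting the linear coefficient exactly right, including the $r=q$ limit and the $S_0$ normalization — and confirming ${\cal I}_{\rm fwd}>0$ strictly in the out-of-the-money regime so that the dropped term is truly negligible; everything else is the exact, elementary parity identity.
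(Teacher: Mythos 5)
Your proof is correct and follows essentially the same route as the paper: exact put-call parity for the forward start Asian payoff, Taylor expansion of the discounted forward average $e^{-rT}(A(\tau T,\tau)-K)$, and discarding the exponentially small out-of-the-money leg via Theorem \ref{thm-otm}. The discrepancy you flag is real: the expansion gives the linear coefficient $\frac12(r-q)(1+\tau)S_0-(S_0-K)r$, so the factor $S_0$ multiplying $\frac12(r-q)(1+\tau)$ is missing from the stated formulas (and from the paper's own proof), a typo that only disappears when $S_0=1$.
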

\begin{proof}
From the put-call parity we have
\begin{align*}
C(T)-P(T)
&=e^{-rT}\E\left[\frac{1}{(1-\tau)T}\int_{\tau T}^TS_tdt-K \right]
\\
&=\begin{cases}
e^{-rT}\left(\frac{S_0\left( e^{(r-q)T}-e^{(r-q)\tau T}\right)}{(1-\tau)(r-q)T}-K\right)
&\text{if $r-q\not=0$}\\
e^{-rT}(S_0-K)
&\text{if $r-q=0$}
\end{cases}
\\
&=\begin{cases}
S_0-K+\left(\frac12(r-q)(1+\tau)-(S_0-K)r\right)T+O(T^2) &\text{if $r-q\not=0$}
\\
(S_0-K)(1-rT)+O(T^2) &\text{if $r-q=0$}
\end{cases},
\end{align*}
as $T\to0$. 
On the other hand, when $K<S_0$, from \eqref{eq-out-put} we know that $P(T)\ll O(T^2)$\footnote{For $a, b>0$, we say $a\gg b$ if $\frac{a}{b}\to+\infty$, and $a\ll b$ if $\frac{a}{b}\to 0+$.} when $T\to0$, hence 
\begin{equation*}
C(T)=S_0-K+\left(\frac12(r-q)(1+\tau)-(S_0-K)r\right)T+O(T^2).
\end{equation*} 
We can obtain \eqref{eq-in-put} using the same argument.
\end{proof}

\subsection{At-the-money asymptotics}
In this section we consider at-the-money case, namely when $S_{0}=K$. First we notice that it is much more likely for the asset price to hit $K=S_0$ after an extremely short time  comparing to out-of-the-money case. Though the probability of this event is still small, it is at the scale of Gaussian fluctuation, 
namely $\sqrt{T}$, which is significantly larger than $e^{-\I/T}$ in the 
out-of-the-money case. In the following theorem, we look for the exact 
asymptotic of an at-the-money 
forward start Asian option when $T\to0$. 

\begin{theorem}\label{thm-atm}
Consider an at-the-money forward start Asian option $(S_0=K, T, \tau)$ under 
the local volatility model \eqref{eq-S-t} and \eqref{eq-S-t-condition}. 
If in addition we assume the volatility satisfies Lipschitz continuity conditions: there exist $\alpha,\beta>0$ such that for all $x, y\ge0$,
\begin{equation*}
|\sigma(x)x-\sigma(y)y|\le\alpha |x-y|,\quad |\sigma(x)-\sigma(y)|\le\beta|x-y|.
\end{equation*}
Then the prices of the forward start Asian call and put options satisfy
\begin{equation*}
\lim_{T\rightarrow 0}\frac{1}{\sqrt{T}}C(T)
=\lim_{T\rightarrow 0}\frac{1}{\sqrt{T}}P(T)
=\sigma(S_{0})S_{0}\sqrt{\frac{1+2\tau}{6\pi}}.
\end{equation*}
\end{theorem}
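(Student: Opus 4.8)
The plan is to reduce the at-the-money asymptotics to a computation involving the short-time behavior of the rescaled process $X^T_t := \frac{1}{\sigma(S_0)S_0}\frac{S_{tT}-S_0}{\sqrt T}$, which converges (in a suitable path-space sense) to a standard Brownian motion $B_t$ on $[0,1]$ as $T\to 0$. Indeed, writing $S_{tT} = S_0 + \sigma(S_0)S_0 W_{tT} + (\text{drift and error terms})$ and using $W_{tT}\overset{d}{=}\sqrt T\,\widetilde W_t$ for a Brownian motion $\widetilde W$, one expects
\begin{equation*}
\frac{1}{(1-\tau)T}\int_{\tau T}^{T}S_t\,dt - S_0 \;=\; \sigma(S_0)S_0\sqrt T\cdot \frac{1}{1-\tau}\int_\tau^1 \widetilde W_t\,dt \;+\; o(\sqrt T),
\end{equation*}
so that after dividing by $\sqrt T$ and taking the positive part (or negative part) we are left with $\sigma(S_0)S_0\,\mathbb E\big[\big(\frac{1}{1-\tau}\int_\tau^1 \widetilde W_t\,dt\big)^+\big]$ in the limit. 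The Lipschitz hypotheses on $\sigma(x)x$ and $\sigma(x)$ are exactly what is needed to control the error: the SDE coefficients are Lipschitz, so strong solutions exist and standard $L^p$ moment estimates give $\mathbb E\,|S_{tT}-S_0|^p = O(T^{p/2})$ uniformly in $t\in[0,1]$, and a Gronwall-type argument bounds the difference between $S_{tT}$ and its Gaussian approximation $S_0(1+\sigma(S_0)W_{tT})$ by $O(T)$ in $L^2$.

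The first step I would carry out is to make the coupling precise: on a common probability space, compare the solution $S_{tT}$ of \eqref{eq-S-t} with the linearized process $Y^T_t := S_0\exp\big(\sigma(S_0)W_{tT} + (r-q-\tfrac12\sigma(S_0)^2)tT\big)$, or even more simply with $\widetilde Y^T_t := S_0 + \sigma(S_0)S_0 W_{tT}$; show $\sup_{t\in[0,1]}\mathbb E\,|S_{tT}-\widetilde Y^T_t|^2 = O(T^2)$ using the Lipschitz bounds and Burkholder--Davis--Gundy. The second step is to integrate: $\frac{1}{(1-\tau)T}\int_{\tau T}^T (S_t - \widetilde Y^T_{t})\,dt$ has $L^2$ norm $O(T)$, hence after dividing by $\sqrt T$ it is $O(\sqrt T)\to 0$ in $L^2$, so it is enough to compute the limit with $\widetilde Y^T$ in place of $S$. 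The third step is the exact Gaussian computation: $\frac{1}{(1-\tau)T}\int_{\tau T}^T \widetilde Y^T_t\,dt - S_0 = \frac{\sigma(S_0)S_0}{(1-\tau)T}\int_{\tau T}^T W_t\,dt$, and by Brownian scaling $\frac{1}{(1-\tau)T}\int_{\tau T}^T W_t\,dt \overset{d}{=} \sqrt T\cdot Z$ where $Z = \frac{1}{1-\tau}\int_\tau^1 \widetilde W_t\,dt$ is centered Gaussian; one then evaluates $\mathrm{Var}(Z)$ by the standard formula $\mathrm{Cov}(\widetilde W_s,\widetilde W_t)=s\wedge t$, giving
\begin{equation*}
\mathrm{Var}(Z) = \frac{1}{(1-\tau)^2}\int_\tau^1\!\!\int_\tau^1 (s\wedge t)\,ds\,dt = \frac{1+2\tau}{6},
\end{equation*}
after which $\mathbb E[Z^+] = \sqrt{\mathrm{Var}(Z)/(2\pi)} = \sqrt{(1+2\tau)/(12\pi)}$ and likewise $\mathbb E[(-Z)^+]$ equals the same by symmetry; multiplying by $\sigma(S_0)S_0$ and noting $e^{-rT}\to 1$ yields the claimed constant $\sigma(S_0)S_0\sqrt{(1+2\tau)/(6\pi)}$ — wait, more carefully, $\mathbb E[Z^+]=\sqrt{(1+2\tau)/(6)}\cdot\frac{1}{\sqrt{2\pi}} = \sqrt{(1+2\tau)/(12\pi)}$, so the constant is $\sigma(S_0)S_0\sqrt{(1+2\tau)/(12\pi)}$; I would double-check the variance normalization against the paper's stated $\sqrt{(1+2\tau)/(6\pi)}$, since the discrepancy of a factor $2$ suggests the intended object is $\mathbb E|Z|$-type or a differently normalized average — this bookkeeping is the kind of thing to verify but not a genuine obstacle.

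The genuine obstacle, and the step I expect to require the most care, is the \emph{uniform integrability} needed to pass from convergence in distribution of $\frac{1}{\sqrt T}\big(\frac{1}{(1-\tau)T}\int_{\tau T}^T S_t\,dt - S_0\big)$ to convergence of the expectation of its positive part. Convergence in law plus a uniform $L^p$ bound ($p>1$) for the family $\{\frac{1}{\sqrt T}(\cdots)\}_{T\le 1}$ suffices, and such a bound follows from the moment estimate $\mathbb E\,|S_{tT}-S_0|^p = O(T^{p/2})$ established in the first step (this is where the linear-growth consequence of the Lipschitz conditions, already implicit in \eqref{eq-S-t-condition} and strengthened by the new hypotheses, does its work). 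A secondary subtlety is justifying the $L^2$ coupling bound $\sup_t\mathbb E|S_{tT}-\widetilde Y^T_t|^2 = O(T^2)$ rather than merely $o(T)$: one writes the difference as a stochastic integral $\int_0^{tT}[\sigma(S_u)S_u - \sigma(S_0)S_0]\,dW_u$ plus a drift term, bounds the integrand by $\alpha|S_u - S_0| + \sigma(S_0)S_0\cdot|{\text{nothing}}|$ — actually by $\alpha|S_u-S_0|$ using the first Lipschitz hypothesis — and closes the estimate with Gronwall after invoking the $O(T)$ bound on $\mathbb E|S_u-S_0|^2$. Once these two technical points are in place the result follows by assembling the three steps above.
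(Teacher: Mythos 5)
Your overall strategy is the same as the paper's: replace $S_t$ by the Gaussian process $S_0+\sigma(S_0)S_0W_t$, show the replacement error in the time-average is $O(T)$ in $L^1$ (the paper does this via the martingale $e^{-(r-q)t}S_t$ and Doob's inequality; your Gronwall/BDG route is equivalent), and then do the exact Gaussian computation. The coupling and the passage to the limit are fine as you describe them — indeed, since you have $L^1$ closeness of the averaged processes at scale $\sqrt{T}$, the bound $|\E[A^+]-\E[B^+]|\le\E|A-B|$ makes the separate uniform-integrability discussion unnecessary.

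The one genuine problem is the variance computation, and your diagnosis of the resulting factor-of-$2$ discrepancy is wrong. You have
\begin{equation*}
\int_\tau^1\!\!\int_\tau^1 (s\wedge t)\,ds\,dt \;=\; 2\int_\tau^1\!\!\int_\tau^t s\,ds\,dt \;=\;\frac{(1-\tau)^2(1+2\tau)}{3},
\end{equation*}
so $\mathrm{Var}(Z)=\frac{1+2\tau}{3}$, not $\frac{1+2\tau}{6}$; you appear to have integrated only over the half-square $\{s<t\}$ and dropped the symmetrization factor $2$ (check $\tau=0$: $\mathrm{Var}\bigl(\int_0^1 W_t\,dt\bigr)=\tfrac13$). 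With the correct variance, $\E[Z^+]=\sqrt{\mathrm{Var}(Z)/(2\pi)}=\sqrt{(1+2\tau)/(6\pi)}$ and the constant matches the statement exactly. There is no hidden $\E|Z|$ or alternative normalization in the paper — the discrepancy you flagged is entirely an artifact of your own arithmetic, and leaving it unresolved means your proof as written delivers the wrong constant.
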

\begin{proof}
The idea is to look for a linear estimate of the underlying asset $S_t$, 
which is enough to capture the small maturity asymptotic of $C(T)$ and $P(T)$ 
up to order $\sqrt{T}$. The proof is similar as in \cite{PZAsian} 
(see Theorem 6). Here we only sketch the major steps. \\
\textbf{Step 1}: We first consider an approximate process $X_t:=e^{-(r-q)t}S_t$, which is a martingale and satisfies the SDE
\begin{equation*}
dX_t=\sigma(X_te^{(r-q)t})X_tdW_t,\quad X_0=S_0.
\end{equation*}  
For any underlying process $x_t\in C([0,1],\R_+)$, we denote 
\begin{equation*}
C\left(T, x_t \right):=e^{-rT}\mathbb{E}\left[\left(\frac{1}{(1-\tau)T}\int_{\tau T}^{T}x_{t}dt-S_0\right)^{+}\right].
\end{equation*}
Then we can easily see that
\begin{align*}
e^{rT}|C\left(T, S_t \right)-C\left(T, X_t \right)|&\le \mathbb{E}\left[\frac{1}{(1-\tau)T}\int_{\tau T}^{T}\left|e^{(r-q)t}-1\right|X_tdt\right]\\
&=S_0\left|\frac{e^{(r-q)T}-e^{(r-q)\tau T}}{(r-q)(1-\tau)T}-1\right|=O(T).
\end{align*}
Hence we have
\begin{equation*}
|C\left(T, S_t \right)-C\left(T, X_t \right)|=O(T).
\end{equation*}
\textbf{Step 2}: Next we consider a further approximation of $X_t$ by a Gaussian process $\hat{X}_{t}=S_{0}+\sigma(S_{0})S_{0}W_{t}$, where $W_{t}$ is a standard
Brownian motion. Note 
\begin{equation*}
e^{rT}\left|C\left(T, X_t \right)-C\left(T, \hat{X}_t \right)\right|\le \E\left[ \max_{\tau T\le t \le T}|X_t-\hat{X}_t|\right].
\end{equation*}
We claim that $\E\left[ \max_{\tau T\le t \le T}|X_t-\hat{X}_t|\right]=O(T)$. This is due to Doob's martingale inequality and the fact that
\begin{align}\label{eq-X-t-2}
\E\left[ \max_{\tau T\le t \le T}|X_t-\hat{X}_t|\right]\le 2\left(\E[(X_T-\hat{X}_T)^2]\right)^{\frac12}
\le 2\sqrt{M}T
\end{align}
for some  constant $M>0$. The detailed proof of the last inequality in \eqref{eq-X-t-2} can be found in \cite{PZAsian} 
(see page 21-22). 
At the end, combining Step 1 and Step 2, we obtain that
\begin{equation*}
C(T)=C\left(T, \hat{X}_t\right)+O(T).
\end{equation*}
At the end we just need to compute $C\left(T, \hat{X}_t\right)$ for the Gaussian process $\hat{X}_t$. Note
\begin{equation*}
\frac{1}{(1-\tau)T}\int_{\tau T}^{T}\hat{X}_{t}dt-S_{0}
=\sigma(S_{0})S_{0}\frac{1}{(1-\tau)T}\int_{\tau T}^{T}W_{t}dt,
\end{equation*}
which is a Gaussian random variable with mean $0$ and variance
\begin{align*}
&\sigma(S_{0})^{2}S_{0}^{2}\frac{1}{((1-\tau)T)^{2}}
\mathbb{E}\left[\left(\int_{\tau T}^{T}W_{t}dt\right)^{2}\right]
=\frac{1}{3}\sigma(S_{0})^{2}S_{0}^{2}(1+2\tau )T.
\nonumber
\end{align*}
Hence we have
\begin{align*}
\mathbb{E}\left[\left(\frac{1}{(1-\tau)T}\int_{\tau T}^{T}\hat{X}_{t}dt-S_{0}\right)^{+}\right]
&=\frac{1}{\sqrt{3}}\sigma(S_{0})S_{0}\sqrt{1+2\tau}\sqrt{T}\,\mathbb{E}[Z\mathbbm{1}_{Z>0}]
\\
&=\sigma(S_{0})S_{0}\sqrt{\frac{(1+2\tau)T}{6\pi}},
\end{align*}
where $Z$ is a standard Gaussian random variable with mean zero and variance 
one. The result for the put option can be proved similarly.
\end{proof}

\subsection{Discussions on variational problem}
In this section we further discuss the variational problem for out-of-the-money case given in Theorem \ref{thm-otm}. 
We want to analyze the rate function
\begin{eqnarray}\label{1LV}
{\cal I}_{\rm fwd}(S_0,K,\tau) = \inf_{f\in\mathcal{A}(K/S_0,\tau)} \frac12
\int_0^1\left(\frac{f'(t)}{\sigma(S_0 e^{f(t)})}\right)^2 dt,
\end{eqnarray}
where
\begin{eqnarray}\label{2LV}
\mathcal{A}(K/S_0,\tau)=\bigg\{f\in \mathcal{AC}[0,1]\bigg|  f(0)=0,  \frac{1}{1-\tau} \int_\tau^1  e^{f(t)} dt= \frac{K}{S_0}\bigg\}.
\end{eqnarray}

We have the following result.

\begin{proposition}\label{prop1}
The solution of the variational problem (\ref{1LV}) is given by the
extremum problem
\begin{eqnarray}\label{3LV}
\mathcal{I}_{\rm fwd}(S_0,K,\tau) = \inf_{c\in \mathbb{R}}
\left\{ \frac12 c^2 \tau + \frac{1}{1-\tau}
{\cal I}\left(S_0 e^{F^{-1}(c\tau)}, K \right) \right\},
\end{eqnarray}
where ${\cal I}(x,K)$ is given by the variational problem
\begin{equation}\label{eq-cal-B-0}
{\cal I}(x,K)  = \inf_{\varphi\in \mathcal{A}(K/x,0)}\left\{ \frac12
\int_0^1  \left(\frac{\varphi'(u)}{\sigma(S_0 e^{\varphi(u)})}\right)^2du\right\}\,,
\end{equation}
and $F(\cdot)$ is defined by
\begin{equation}\label{eq-F}
F(\cdot) = \int_0^{\cdot} \frac{dz}{\sigma(S_0 e^z)}\, .
\end{equation}
\end{proposition}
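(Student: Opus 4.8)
The plan is to split an admissible path $f\in\mathcal{A}(K/S_0,\tau)$ at the point $t=\tau$. Once the value $v:=f(\tau)$ is frozen, the energy functional and the two constraints in \eqref{1LV}--\eqref{2LV} decouple across $[0,\tau]$ and $[\tau,1]$: the condition $f(0)=0$ constrains only $f|_{[0,\tau]}$, the averaging condition $\frac{1}{1-\tau}\int_\tau^1 e^{f}\,dt=K/S_0$ constrains only $f|_{[\tau,1]}$, the energy is the sum of the energies of the two pieces, and gluing an element of $\mathcal{AC}[0,\tau]$ to an element of $\mathcal{AC}[\tau,1]$ that agree at $\tau$ produces an element of $\mathcal{AC}[0,1]$. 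So I would first establish
\begin{equation*}
\mathcal{I}_{\rm fwd}(S_0,K,\tau)=\inf_{v\in\mathbb{R}}\left\{\mathcal{E}_0(v)+\mathcal{E}_1(v)\right\},
\end{equation*}
where $\mathcal{E}_0(v)$ is the infimum of $\frac12\int_0^\tau\big(f'(t)/\sigma(S_0e^{f(t)})\big)^2\,dt$ over $f\in\mathcal{AC}[0,\tau]$ with $f(0)=0$ and $f(\tau)=v$, and $\mathcal{E}_1(v)$ is the infimum of $\frac12\int_\tau^1\big(f'(t)/\sigma(S_0e^{f(t)})\big)^2\,dt$ over $f\in\mathcal{AC}[\tau,1]$ with $f(\tau)=v$ and $\frac{1}{1-\tau}\int_\tau^1 e^{f(t)}\,dt=K/S_0$. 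Both inequalities here are routine: ``$\ge$'' follows by applying the two one-sided infima to an arbitrary $f$, and ``$\le$'' by gluing near-optimal pieces of the two subproblems.

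Then I would evaluate the two inner problems. For $\mathcal{E}_0(v)$, set $g(t):=F(f(t))$ with $F$ as in \eqref{eq-F}; since $F$ is Lipschitz (its derivative $1/\sigma(S_0e^{\cdot})$ is bounded above by $\underline{\sigma}^{-1}$), $g\in\mathcal{AC}[0,\tau]$ with $g'(t)=f'(t)/\sigma(S_0e^{f(t)})$ a.e., $g(0)=0$ and $g(\tau)=F(v)$. Cauchy--Schwarz then gives $\int_0^\tau g'(t)^2\,dt\ge F(v)^2/\tau$, with equality exactly for the affine $g(t)=F(v)\,t/\tau$, i.e. $f(t)=F^{-1}(F(v)t/\tau)$; hence $\mathcal{E}_0(v)=\frac12 F(v)^2/\tau$. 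For $\mathcal{E}_1(v)$, I would rescale time by $u=(t-\tau)/(1-\tau)$ and recenter by $\varphi(u):=f(\tau+(1-\tau)u)-v$, so that $\varphi(0)=0$; the change of variables turns the energy into $\frac{1}{1-\tau}\cdot\frac12\int_0^1\big(\varphi'(u)/\sigma((S_0e^v)e^{\varphi(u)})\big)^2\,du$ and the averaging constraint into $\int_0^1 e^{\varphi(u)}\,du=K/(S_0e^v)$. Taking the infimum over admissible $\varphi$ and reading $S_0e^v$ as the variable $x$ of \eqref{eq-cal-B-0} yields $\mathcal{E}_1(v)=\frac{1}{1-\tau}\,\mathcal{I}(S_0e^v,K)$.

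Combining, $\mathcal{I}_{\rm fwd}(S_0,K,\tau)=\inf_{v\in\mathbb{R}}\left\{\frac12 F(v)^2/\tau+\frac{1}{1-\tau}\mathcal{I}(S_0e^v,K)\right\}$. The last step is the substitution $c:=F(v)/\tau$: by \eqref{eq-S-t-condition} one has $F'=1/\sigma(S_0e^{\cdot})\in[\overline{\sigma}^{-1},\underline{\sigma}^{-1}]$, so $F$ is a strictly increasing bi-Lipschitz bijection of $\mathbb{R}$ onto $\mathbb{R}$; hence $v=F^{-1}(c\tau)$ is a legitimate reparametrization with $c$ ranging over all of $\mathbb{R}$, and $\frac12 F(v)^2/\tau=\frac12 c^2\tau$. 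This is precisely \eqref{3LV}, and it simultaneously identifies the minimizer as the ``European'' arc $t\mapsto F^{-1}(ct)$ on $[0,\tau]$ glued at height $F^{-1}(c\tau)$ to the (time-rescaled) ``Asian'' optimizer for $\mathcal{I}(S_0e^{F^{-1}(c\tau)},K)$ on $[\tau,1]$.

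The step I expect to require the most care is purely bookkeeping: justifying the a.e. chain rule $g'=f'/\sigma(S_0e^{f})$ for absolutely continuous $f$ (it uses that $F$ is Lipschitz), checking that the glued path and the time-rescaled path really lie in the claimed absolutely continuous classes and satisfy the constraints with equality, and handling both inequalities in each infimum so that they match. There is no single hard idea; the decoupling across $t=\tau$ is immediate once these verifications are in place, and it is exactly this decoupling that produces the ``double-layer'' structure of \eqref{3LV}.
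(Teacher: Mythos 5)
Your proposal is correct, and its skeleton is the same as the paper's: split the path at $t=\tau$, recognize the segment on $[0,\tau]$ as a free-endpoint geodesic problem whose minimal energy is $\tfrac12 c^2\tau$, and rescale/recenter the segment on $[\tau,1]$ into the standard Asian rate function. Where you differ is in execution, and in both places your version is tighter. First, for the $[0,\tau]$ piece the paper invokes the Euler--Lagrange equation $\frac{d}{dt}\bigl(f'/\sigma(S_0e^{f})\bigr)=0$ as if it were sufficient, whereas you compute $\mathcal{E}_0(v)=\tfrac12 F(v)^2/\tau$ exactly via the substitution $g=F\circ f$ and Cauchy--Schwarz; this yields the lower bound for that segment rigorously rather than merely identifying a critical path. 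Second, your intermediate identity $\mathcal{I}_{\rm fwd}=\inf_{v}\{\mathcal{E}_0(v)+\mathcal{E}_1(v)\}$, proved by the two routine inequalities (restriction of an arbitrary admissible path, and gluing of near-optimal pieces), is what supplies the genuine lower bound $\mathcal{I}_{\rm fwd}\ge\inf_c G_c$; the paper's closing paragraph purporting to show $G_0\ge\mathcal{I}_{\rm fwd}$ in fact only re-derives the upper bound $\mathcal{I}_{\rm fwd}\le G_{c'}$, so your decomposition is the step that actually closes the argument. One cosmetic remark: the displayed definition \eqref{eq-cal-B-0} writes $\sigma(S_0e^{\varphi(u)})$ where it should read $\sigma(x e^{\varphi(u)})$ (as in the introduction); your change of variables correctly produces the latter, with $x=S_0e^{v}=S_0e^{F^{-1}(c\tau)}$.
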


\begin{proof}
The variational problem
(\ref{1LV}) with the constraint (\ref{2LV}) can be transformed into an 
unconstrained variational problem by introducing a Lagrange multiplier $\lambda$
for the functional
\begin{eqnarray}\label{Ldef}
\Lambda[f] := \frac12  \int_0^1  \left( \frac{f'(t)}{\sigma(S_0 e^{f(t)})} 
\right)^2 dt -
\lambda \left( \int_{\tau}^1 e^{f(t)} dt- \frac{K}{S_0}(1-\tau) \right)\,.
\end{eqnarray}
We split the variational problem
(\ref{1LV}) with the constraint (\ref{2LV}) into two parts and analyze them 
separately. \\
{\textbf{Part 1:}} When $0\leq t \leq \tau$, the Euler-Lagrange equation reads
\begin{eqnarray*}
\frac{d}{dt} \left( \frac{f'(t)}{\sigma(S_0 e^{f(t)})} \right) = 0\,,\qquad
0 \leq t \leq \tau\,.
\end{eqnarray*}
This implies that
$\frac{f'(t)}{\sigma(S_0 e^{f(t)})} = c$, from which we obtain that $f$ is monotone.
By integration we have that
\begin{eqnarray*}
\int_0^{f(t)} \frac{dz}{\sigma(S_0 e^z)} = F(f(t)) = ct.
\end{eqnarray*}
Clearly $F$ is a strictly increasing function. By inverting $F$ we then obtain the argmin of the rate function ${\cal I}_{\rm fwd}(S_0,K,\tau)$ in the period $[0,\tau]$:
\begin{eqnarray}\label{fdef}
f(t) = F^{-1}(ct)\, ,\quad 0\le t\le\tau.
\end{eqnarray}
The corresponding energy is 
\begin{equation*}
 \frac12 \int_0^\tau  \left( \frac{f'(t)}{\sigma(S_0 e^{f(t)})}\right)^2dt= \frac12 c^2\tau.
\end{equation*}
{\textbf{Part 2:}} Now consider the region $\tau\leq t \leq 1$. 
We want to find $f(t)\in \mathcal{AC}[\tau,1]$, such that 
$f(\tau )=F^{-1}(c\tau)$, $\frac{1}{(1-\tau)}\int_{\tau }^{1}S_0e^{f(t)}dt=K$, 
and has the minimum energy $\frac12 \int_\tau^1  \left( \frac{f'(t)}{\sigma(S_0 e^{f(t)})}\right)^2dt$.
Let us reparametrize $f(t)$, $t\in[\tau,1]$ as follows. 
Consider the variable $u\in[0,1]$ such that for all $t\in[\tau, 1]$ we have
\begin{eqnarray*}
t = \tau + u(1-\tau).
\end{eqnarray*}
We also center the function $f$ by letting $\varphi: [0,1]\to \mathbb{R}$ be 
such that
\begin{eqnarray}\label{eq-varphi}
 F^{-1}(c\tau) + \varphi(u) =f(t)=f(\tau + u(1-\tau)).
\end{eqnarray}
Then clearly we have
\begin{eqnarray*}
dt = (1-\tau) du\,,\qquad f'(t) = \frac{1}{1-\tau} \varphi'(u)\,.
\end{eqnarray*}
The function $\varphi(u)$ satisfies the boundary condition $\varphi(0)=0$ and \eqref{2LV} can be written as
\begin{eqnarray}\label{8LV}
\bigg\{\varphi\in \mathcal{AC}[0,1]\bigg|  \varphi(0)=0, \int_0^1 e^{\varphi(u)} du  = \frac{K}{S_0} e^{-F^{-1}(c\tau)}\bigg\}.
\end{eqnarray}
From \eqref{1LV} we know for any $c\in\mathbb{R}$,
\begin{eqnarray*}
{\cal I}_{\rm fwd}(S_0,K,\tau) \le
\frac12 c^2 \tau + \frac{1}{(1-\tau)}
\inf_{\varphi\in \mathcal{A}( K/S_0,\tau)}\bigg\{\frac12\int_0^1  \left(\frac{\varphi'(u)}
{\sigma(S_0 e^{F^{-1}(c\tau)} e^{\varphi(u)})}\right)^2du\bigg\}.
\end{eqnarray*}
Let the contribution from $F^{-1}(c\tau)$ be
absorbed into a redefinition of $S_0$, then the right hand side of the above inequality is indeed
\begin{equation*}
\frac12 c^2 \tau+\frac{1}{(1-\tau)} \mathcal{I}(S_0e^{F^{-1}(c\tau)},K):=G_{c}
\end{equation*}
Hence we obtain
\begin{equation*}
{\cal I}_{\rm fwd}(S_0,K,\tau) \le \inf_{c\in\mathbb{R}}G_c.
\end{equation*}
On the other hand, let  $G_{0}=\inf_{c\in\mathbb{R}}G_c$. Then for any $\ep>0$, there exists a $c'\in\mathbb{R}$ such that
\begin{equation*}
G_{0}>G_{c'}-\ep\ge{\cal I}_{\rm fwd}(S_0,K,\tau)-\ep.
\end{equation*}
Let $\ep\to 0$ we obtain $G_{0}\ge {\cal I}_{\rm fwd}(S_0,K,\tau)$.
This yields (\ref{3LV}) and  concludes the statement.
\end{proof}

\begin{remark} 
We know that when $S_0=K$ (at-the-money), the option price is of scale $\sqrt{T}$ as $T\to0$, hence the rate function ${\cal I}_{\rm fwd}(S_0,K,\tau)$ vanishes
for $K=S_0$. This can be easily checked by noting that $c=0$ and $\varphi=0$ solves the minimization problem.
\end{remark}

We call a path $f(t)\in \mathcal{AC}[0,1]$ the \textbf{optimal path} 
of ${\mathcal I}_{\rm fwd}(S_0,K,\tau)$ if it is the minimizer of the 
variational problem (\ref{1LV}) with the constraint (\ref{2LV}).

Next, we study  the continuity of the derivative of the optimal path $f(t)$.  
From the above proposition we know that $f(t)$ is a $C^1$ function on both 
$(0,\tau)$ and $(\tau,1)$. In the language of variational calculus, 
$\tau$ is the so-called corner point. In fact we know that $f'(t)$ is continuous 
at $t=\tau$ as well, hence $f\in C^1([0,1],\R)$. 
This is guaranteed by Erdmann-Weierstrass condition 
(see for example \S12.6 in Chapter~4 of \cite{CH} (p.~260) or 
\cite{Bliss} (p.~203)), which we present as below.

\begin{theorem}[Erdmann-Weierstrass corner conditions]
If a function $y(t)$ is an extremum of the variational problem
\begin{equation}
\frac{\delta}{\delta y(t)} \int_{a}^b L(y,y') dt =0\,,
\end{equation}
then $\partial_{y'} L$ and $y' \partial_{y'} L - L$ must be continuous at each
corner point $\tau$ of $y(t)$, that is points where $y'(t)$ may have different values
on each side of $\tau$.
\end{theorem}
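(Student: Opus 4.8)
The plan is to treat the corner location $\tau$ as an additional free parameter of the variational problem, so that the two corner conditions arise as the stationarity conditions with respect to that parameter. First I would assume without loss of generality that $y(t)$ has a single corner, located at $t=\tau\in(a,b)$ (several corners being handled one at a time), and split the functional as
\begin{equation*}
\int_a^b L(y,y')\,dt = \int_a^\tau L(y,y')\,dt + \int_\tau^b L(y,y')\,dt =: J_1[y] + J_2[y].
\end{equation*}
Considering first only variations that vanish at $a$, $b$, and $\tau$ and that keep the corner fixed, one recovers the Euler--Lagrange equation on each of the open intervals $(a,\tau)$ and $(\tau,b)$ separately; in particular $y$ is $C^1$ up to the corner from each side, so the one-sided limits $y'(\tau^-)$ and $y'(\tau^+)$ exist and $L$, $\partial_{y'}L$ have one-sided limits at $\tau$.

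Next I would enlarge the admissible competitors to one-parameter families $y_\epsilon$ with $y_0=y$, with fixed endpoint values, whose corner sits at $\tau_\epsilon=\tau+\epsilon\,\delta\tau+o(\epsilon)$ and whose corner value is $y_\epsilon(\tau_\epsilon)=y(\tau)+\epsilon\,\delta y_\tau+o(\epsilon)$, where $\delta\tau$ and $\delta y_\tau$ range independently over $\mathbb{R}$. Differentiating $J_1[y_\epsilon]$ at $\epsilon=0$, using the Leibniz rule for the moving upper limit, integrating by parts, and invoking the Euler--Lagrange equation on $(a,\tau)$ collapses the interior terms and leaves
\begin{equation*}
\frac{d}{d\epsilon}\Big|_{\epsilon=0}J_1[y_\epsilon] = \partial_{y'}L\big|_{\tau^-}\,\delta y_\tau + \big(L - y'\,\partial_{y'}L\big)\big|_{\tau^-}\,\delta\tau,
\end{equation*}
where the bookkeeping that matters is the identity $\delta y_\tau = \dot y(\tau) + y'(\tau^-)\,\delta\tau$ relating the variation of the endpoint \emph{value} to the vertical variation $\dot y(\tau)$ and the ``drag'' from the moving endpoint. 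The identical computation for $J_2[y_\epsilon]$, now with $\tau$ as a moving \emph{lower} limit, yields the negative of this expression with the one-sided limits taken from the right.

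Since $y$ is an extremum, $\frac{d}{d\epsilon}\big|_{0}\big(J_1[y_\epsilon]+J_2[y_\epsilon]\big)=0$ for all independent choices of $\delta\tau$ and $\delta y_\tau$. Setting $\delta\tau=0$ forces $\partial_{y'}L|_{\tau^-}=\partial_{y'}L|_{\tau^+}$, i.e.\ continuity of $\partial_{y'}L$ at the corner, and then setting $\delta y_\tau=0$ forces $(y'\partial_{y'}L-L)|_{\tau^-}=(y'\partial_{y'}L-L)|_{\tau^+}$, i.e.\ continuity of $y'\partial_{y'}L-L$. I expect the only genuinely delicate point to be the construction in the second paragraph: producing bona fide admissible families $y_\epsilon$ that realize arbitrary independent values of $\delta\tau$ and $\delta y_\tau$ --- for instance by gluing a reparametrized and vertically shifted copy of $y$ on a neighbourhood of the corner of width $O(\epsilon)$ and controlling that neighbourhood's contribution --- together with justifying the differentiation under the integral sign up to the moving limit; both are standard once the construction is written out. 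For the application in this paper one then invokes the conclusion with $L$ read off from $\Lambda[f]$ at the corner $t=\tau$: since $\partial_{f'}L = f'/\sigma(S_0e^{f})^2$ has the same functional form on both sides of $\tau$, its continuity there is precisely the $C^1$ matching of the optimal path asserted after \eqref{fdef}.
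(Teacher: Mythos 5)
The paper does not actually prove this statement---it is quoted as a classical result with pointers to Courant--Hilbert and Bliss---so there is no in-paper argument to compare yours against; judged on its own, your proof is the standard and correct one. It is the ``general variation'' computation for a functional with a movable interior break point: the boundary contributions $\partial_{y'}L\,\delta y_\tau + \left(L-y'\partial_{y'}L\right)\delta\tau$ from the two sides of the corner must cancel for independent $\delta y_\tau$ and $\delta\tau$, and your bookkeeping identity $\delta y_\tau=\dot y(\tau)+y'(\tau^{\mp})\,\delta\tau$, the sign flip for the moving lower limit of $J_2$, and your isolation of the one genuinely technical step (constructing admissible families realizing arbitrary $(\delta\tau,\delta y_\tau)$) are all right. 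Two remarks. First, the first corner condition admits a cleaner derivation that never moves the corner: the du Bois--Reymond (integrated) form of the Euler--Lagrange equation gives $\partial_{y'}L(t)=\int_a^t\partial_y L\,ds+\mathrm{const}$ almost everywhere, and the right-hand side is continuous, so $\partial_{y'}L$ cannot jump. Second, and relevant to how the theorem is used here: the effective Lagrangian \eqref{eq-L-f-f'} depends explicitly on $t$ through the indicator $\mathbbm{1}_{\{\tau\le t\le 1\}}$ and is discontinuous in $t$ exactly at the corner, so the corner location cannot be varied and your derivation of the \emph{second} condition does not apply in that setting (indeed $y'\partial_{y'}L-L$ genuinely jumps there). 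This causes no harm, since the paper invokes only the first condition, and your $\delta\tau=0$ variation (or the du Bois--Reymond route) delivers that one regardless of any $t$-discontinuity of $L$.
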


In our case the function $L(y,y')$ of the variational problem (\ref{Ldef}) is,
up to an unessential constant, 
\begin{equation}\label{eq-L-f-f'}
L(f,f') = \frac12 \left( \frac{f'(t)}{\sigma(S_0 e^{f(t)})}\right)^2 - 
\lambda \mathbbm{1}_{\{\tau \leq t \leq 1\}}  e^{f(t)}\,,
\end{equation}
such that 
\begin{equation}
\partial_{f'} L(f,f') = \frac{f'(t)}{\sigma^2(S_0 e^{f(t)})}\,.
\end{equation}
Since $\sigma(x)$ is continuous everywhere, it follows by the 
Erdmann-Weierstrass
condition for $\partial_{f'} L(f,f')$ that $f'(t)$ is continuous at the
corner point $\tau$. This proves that the optimal solution $f(t)$ of this
variational problem is $C^{1}$ on $(0,1)$.

\section{Black-Scholes model and approximation}\label{sec-BS-fwd}
In this section we consider forward start Asian options under Black-Scholes model, which is a special case of the local volatility model with $\sigma(\cdot)$ in \eqref{eq-S-t} being a constant function. 
The variational problem significantly simplifies in this case and we are 
able to obtain a closed form for the rate function 
${\cal I}_{\rm fwd}(S_0,K,\tau)$.

\subsection{Rate functions under the Black-Scholes model}

Let $\sigma(\cdot)=\sigma>0$. The result of Proposition~\ref{prop1} simplifies 
as follows.
\begin{proposition}\label{prop-BS-rate}
Assume the underlying asset price $S_t$ follows Black-Scholes model. 
Then the price of an out-of-the-money forward start 
Asian call option ($K>S_0$) satisfies
\begin{equation}\label{eq-out-call-BS}
\lim_{T\to0}T\log C(T)=-\frac{1}{\sigma^2} 
\mathcal{J}_{\rm fwd}^{\rm (BS)}(K/S_0,\tau),
\end{equation}
and the price of an out-of-the-money forward start Asian put option 
($K<S_0$) satisfies
\begin{equation}\label{eq-out-put-BS}
\lim_{T\to0}T\log P(T)=-\frac{1}{\sigma^2} 
\mathcal{J}_{\rm fwd}^{\rm (BS)}(K/S_0,\tau),
\end{equation}
where
\begin{equation}\label{eq-c}
\mathcal{J}_{\rm fwd}^{\rm (BS)}(K/S_0,\tau) = \inf_{c\in\R}
\left\{ \frac12 c^2 \tau + \frac{1}{1-\tau} 
\mathcal{J}_{\rm BS}\left(\frac{K}{S_0} e^{-c\tau} \right) \right\}.
\end{equation}
Here ${\cal J}_{\rm BS}(\cdot)$ is the rate function for an Asian option 
with averaging
starting at time zero, which is given by the variational problem
\begin{equation}\label{eq-J-x}
{\cal J}_{\rm BS}(x)  = \inf_{\substack{\int_0^1  e^{\varphi(u)} du= x \\
 \varphi\in {\cal AC}{[0,1]}, \varphi(0)=0
}} \bigg\{ \frac12
\int_0^1  (\varphi'(u))^2 du\bigg\}.
\end{equation}
\end{proposition}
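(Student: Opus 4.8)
The plan is to obtain Proposition~\ref{prop-BS-rate} as a direct specialization of Theorem~\ref{thm-otm} and Proposition~\ref{prop1} to constant volatility $\sigma(\cdot)\equiv\sigma>0$. By \eqref{eq-out-call} and \eqref{eq-out-put}, the left-hand sides of \eqref{eq-out-call-BS} and \eqref{eq-out-put-BS} both equal $-\mathcal{I}_{\rm fwd}(S_0,K,\tau)$, so it suffices to show that, when $\sigma$ is constant,
\[
\mathcal{I}_{\rm fwd}(S_0,K,\tau) = \frac{1}{\sigma^2}\,\mathcal{J}_{\rm fwd}^{\rm (BS)}(K/S_0,\tau),
\]
with $\mathcal{J}_{\rm fwd}^{\rm (BS)}$ as in \eqref{eq-c}. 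Since Proposition~\ref{prop1} already reduces $\mathcal{I}_{\rm fwd}$ to the extremum problem \eqref{3LV}, the whole task is to evaluate the two ingredients of \eqref{3LV} — the function $F$ of \eqref{eq-F} and the inner rate function $\mathcal{I}(x,K)$ of \eqref{eq-cal-B-0} — in the Black--Scholes case.

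First I would compute $F$ and its inverse. With $\sigma(\cdot)\equiv\sigma$, \eqref{eq-F} gives $F(y)=\int_0^y \frac{dz}{\sigma}=y/\sigma$, hence $F^{-1}(c\tau)=\sigma c\tau$. Next, in the inner variational problem \eqref{eq-cal-B-0} the constant $\sigma$ pulls out of the Dirichlet energy, so
\[
\mathcal{I}(x,K) = \frac{1}{\sigma^2}\,\inf_{\varphi\in\mathcal{A}(K/x,0)}\Big\{\tfrac12\int_0^1(\varphi'(u))^2\,du\Big\}.
\]
Because the admissible class $\mathcal{A}(K/x,0)$ depends on $x$ and $K$ only through the single constraint $\int_0^1 e^{\varphi(u)}\,du=K/x$ together with $\varphi(0)=0$ (see \eqref{2LV} with $\tau=0$), this inner infimum is precisely $\mathcal{J}_{\rm BS}(K/x)$ as defined in \eqref{eq-J-x}; thus $\mathcal{I}(x,K)=\sigma^{-2}\mathcal{J}_{\rm BS}(K/x)$. (The cosmetic discrepancy between $\sigma(S_0e^{\varphi(u)})$ in \eqref{eq-cal-B-0} and $\sigma(xe^{\varphi(u)})$ in the introduction is immaterial here, as $\sigma$ is constant.)

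Substituting these two facts into \eqref{3LV} yields
\[
\mathcal{I}_{\rm fwd}(S_0,K,\tau) = \inf_{c\in\mathbb{R}}\Big\{\tfrac12 c^2\tau + \tfrac{1}{\sigma^2(1-\tau)}\,\mathcal{J}_{\rm BS}\big(\tfrac{K}{S_0}e^{-\sigma c\tau}\big)\Big\}.
\]
The last step is the change of variable $c\mapsto c/\sigma$ inside the infimum, which replaces $\tfrac12 c^2\tau$ by $\tfrac{1}{2\sigma^2}c^2\tau$ and $e^{-\sigma c\tau}$ by $e^{-c\tau}$; factoring out the common $\sigma^{-2}$ then gives exactly $\sigma^{-2}\mathcal{J}_{\rm fwd}^{\rm (BS)}(K/S_0,\tau)$ with $\mathcal{J}_{\rm fwd}^{\rm (BS)}$ as in \eqref{eq-c}. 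Combined with Theorem~\ref{thm-otm}, this proves \eqref{eq-out-call-BS} for the out-of-the-money call and, since the put shares the same rate function by \eqref{eq-out-put}, \eqref{eq-out-put-BS} for the out-of-the-money put. I do not anticipate a real obstacle: the statement is a corollary of results already established, and the only care needed is bookkeeping — correctly carrying the factor $\sigma$ through $F^{-1}$ and through the rescaling of the slope parameter $c$, and observing that $\mathcal{J}_{\rm BS}$ depends on the ratio $K/S_0$ alone.
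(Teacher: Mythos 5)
Your proposal is correct and follows essentially the same route as the paper: specialize Proposition~\ref{prop1} to constant $\sigma$, note that $F(x)=x/\sigma$ and $\mathcal{I}(x,K)=\sigma^{-2}\mathcal{J}_{\rm BS}(K/x)$, and rescale the slope parameter $c\mapsto c/\sigma$ to pull out the overall factor $\sigma^{-2}$. The paper's proof states these facts without carrying out the bookkeeping; you have simply made explicit the change of variables it leaves implicit.
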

\begin{proof}
This easily follows from Proposition \ref{prop1}. 
We just need to show when $\sigma(\cdot)=\sigma$, we have
\begin{equation*}
{\mathcal I}_{\rm fwd}(S_0,K,\tau)=\frac{1}{\sigma^2} 
{\mathcal J}_{\rm fwd}^{\rm BS}(K/S_0,\tau).
\end{equation*}
This is straightforward by noting 
$\mathcal{I}_{\rm BS}(\cdot\  ,K)=
\frac{1}{\sigma^2}\mathcal{J}_{\rm BS}(K/\, \cdot)$ and $F(x)=x/\sigma$.
\end{proof}

\begin{remark}
An alternative way to look at the forward start Asian call option with 
parameters $(S_0,K, T, \tau)$ under the Black-Scholes model is as follows. 
The underlying asset price is given by
$S_t = S_0 e^{\sigma W_t + (r-q-\frac12 \sigma^2)t}$. Then the averaging 
over $[\tau T, T]$ is given by $S_{\tau T}A_{[\tau T,T]}$, where
\begin{equation*}
A_{[\tau T,T]}= \frac{1}{(1-\tau)T}  \int_{0}^{(1-\tau)T} e^{\sigma W_t + (r-q-\frac12 \sigma^2)t} dt.
\end{equation*}
The underlying of the forward start Asian is therefore a product of two uncorrelated random variables: a log-normally distributed variable $S_{\tau T}:=X$ and $A_{[\tau T,T]}$.
The call option price can be written as
\begin{align}\label{weightedAv}
C(T) &= e^{-r T} \mathbb{E}[(XA_{[\tau T,T]} - K)^+] =
e^{-r T} \mathbb{E}\left[ X\, \mathbb{E}\left[\left(A_{[\tau T,T]} - \frac{K}{X}\right)^+\bigg|X\right]\right] \\
&=
e^{-r T} \int_{-\infty}^\infty e^{-\frac{x^2}{2\tau T}}
e^{\sigma x + (r-q-\frac12\sigma^2) \tau T}\, C_A(K /S_{\tau T}, (1-\tau)T)\frac{dx}{\sqrt{2\pi \tau T}}
\nonumber \\
&=
 \frac{e^{-r T+(r-q)\tau T}}{\sqrt{2\pi \tau T}}  \int_{-\infty}^\infty
e^{-\frac{1}{2\tau T}(x-\sigma \tau T)^2}
 C_A(K /S_{\tau T}, {(1-\tau)T})\,dx\,,
\nonumber
\end{align}
where $C_A(K/S_{\tau T},{(1-\tau)T})$ is the undiscounted Asian call
option price, with averaging starting at time zero, underlying starting price at $1$, and strike price at $K/S_{\tau T}$.
In conclusion, the price of a forward start Asian option is a Gaussian-weighted average
over the prices of Asian options starting at time zero.

It is already known that (for instance see \cite{PZAsian}) when $T\to0$, 
we have
\begin{equation*}
C_A(K/S_{\tau T},{(1-\tau)T})=\exp\left(- \frac{1}{T(1-\tau)} {\cal J}_{BS}(K/S_{\tau T})+o(1/T)\right).
\end{equation*}
We plug this asymptotic result back into \eqref{weightedAv}. 
Using the Laplace method we know that $C(T)$ is dominated by the exponential 
term
\begin{equation*}
\exp {\left(-\frac{x^2}{2\tau T}-\frac{1}{T(1-\tau)} 
\mathcal{J}_{\rm BS}\left(Ke^{-\sigma x-(r-q-\frac12\sigma^2)\tau T}/S_{0}
\right)\right)}\,,
\end{equation*}
where $\mathcal{J}_{\rm BS}(\cdot)$ is the rate function for the Asian
option in the Black-Scholes model.
Hence we have
\begin{equation*}
\lim_{T\to 0}T\log C(T)=-\inf_{x\in \R}\bigg\{\frac{x^2}{2\tau}+\frac{1}{1-\tau}
\mathcal{J}_{\rm BS}\left(Ke^{-\sigma x}/S_0\right) \bigg\}.
\end{equation*}
This coincides with \eqref{eq-c} by letting $x=c\tau/\sigma$.
\end{remark}

\begin{corollary}
In the limiting case of very small averaging period,  $\tau \to 1$, 
\eqref{eq-c} reduces to the rate function 
of an European option
\begin{equation}\label{eq-limit-J}
\lim_{\tau \to 1} {\cal J}_{\rm fwd}^{(BS)}(K/S_0,\tau) = \frac12 \log^2 \left(\frac{K}{S_0}\right).
\end{equation}
When $\tau\to0$,  \eqref{eq-c} reduces to the rate 
function of a standard Asian option with averaging starting from 
time $0$, namely
\begin{equation*}
\lim_{\tau \to 0} {\cal J}_{\rm fwd}^{(BS)}(K/S_0,\tau)= 
\mathcal{J}_{\rm BS}(K/S_0).
\end{equation*}
\end{corollary}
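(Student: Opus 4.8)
The plan is to argue directly from the representation \eqref{eq-c}. Write
$\mathcal{J}_{\rm fwd}^{(BS)}(K/S_0,\tau) = \min_{c\in\mathbb{R}} g_\tau(c)$, where
$g_\tau(c) := \tfrac12 c^2\tau + \tfrac{1}{1-\tau}\,\mathcal{J}_{\rm BS}\!\left(\tfrac{K}{S_0}e^{-c\tau}\right)$;
the minimum is attained because $g_\tau$ is continuous and $g_\tau(c)\ge\tfrac12 c^2\tau\to\infty$ as $|c|\to\infty$. The only properties of the standard-Asian rate function $\mathcal{J}_{\rm BS}$ of \eqref{eq-J-x} that I will use are: (i) $\mathcal{J}_{\rm BS}$ is continuous and nonnegative on $(0,\infty)$, and (ii) $\mathcal{J}_{\rm BS}(x)=0$ if and only if $x=1$. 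Property (i) is standard (and also immediate from the closed form of $\mathcal{J}_{\rm BS}$, see \cite{PZAsian}), while for (ii) one reads off \eqref{eq-J-x} that if $\tfrac12\int_0^1(\varphi_n')^2\to0$ then $\varphi_n\to0$ uniformly, forcing $\int_0^1 e^{\varphi_n}\to1$. Set $m:=\log(K/S_0)$, so the two claimed limits are $\tfrac12 m^2$ and $\mathcal{J}_{\rm BS}(e^m)$ respectively.

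For $\tau\to1$, the upper bound uses the single test value $c=m/\tau$: then $\tfrac{K}{S_0}e^{-c\tau}=e^{m-m}=1$, so $g_\tau(m/\tau)=\tfrac{m^2}{2\tau}$ and hence $\mathcal{J}_{\rm fwd}^{(BS)}(K/S_0,\tau)\le \tfrac{m^2}{2\tau}\to\tfrac12 m^2$. For the lower bound, let $c_\tau$ be a minimizer of $g_\tau$. From $\tfrac12 c_\tau^2\tau\le g_\tau(c_\tau)\le\tfrac{m^2}{2\tau}$ we get $|c_\tau|\le |m|/\tau\le 2|m|$ for $\tau\in[\tfrac12,1)$, so $(c_\tau)$ is bounded; and from $\tfrac{1}{1-\tau}\mathcal{J}_{\rm BS}\!\big(\tfrac{K}{S_0}e^{-c_\tau\tau}\big)\le\tfrac{m^2}{2\tau}$ we get $\mathcal{J}_{\rm BS}\!\big(\tfrac{K}{S_0}e^{-c_\tau\tau}\big)\le\tfrac{(1-\tau)m^2}{2\tau}\to0$. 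Along any subsequence $\tau_n\to1$ with $c_{\tau_n}\to c^\ast$, continuity of $\mathcal{J}_{\rm BS}$ gives $\mathcal{J}_{\rm BS}(\tfrac{K}{S_0}e^{-c^\ast})=0$, hence $c^\ast=m$ by (ii); thus $c_\tau\to m$ and therefore $\mathcal{J}_{\rm fwd}^{(BS)}(K/S_0,\tau)=g_\tau(c_\tau)\ge\tfrac12 c_\tau^2\tau\to\tfrac12 m^2$. Combining the two bounds gives \eqref{eq-limit-J}.

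For $\tau\to0$, the upper bound is the test value $c=0$: $g_\tau(0)=\tfrac{1}{1-\tau}\mathcal{J}_{\rm BS}(e^m)\to\mathcal{J}_{\rm BS}(e^m)$. For the lower bound, let $c_\tau$ again be a minimizer; then for $\tau\le\tfrac12$ we have $g_\tau(c_\tau)\le g_\tau(0)\le 2\mathcal{J}_{\rm BS}(e^m)=:M$, so $\tfrac12 c_\tau^2\tau\le M$ and hence $(c_\tau\tau)^2=(c_\tau^2\tau)\,\tau\le 2M\tau\to0$. Therefore $\tfrac{K}{S_0}e^{-c_\tau\tau}\to e^m$, and by continuity of $\mathcal{J}_{\rm BS}$, $\mathcal{J}_{\rm fwd}^{(BS)}(K/S_0,\tau)=g_\tau(c_\tau)\ge \tfrac{1}{1-\tau}\mathcal{J}_{\rm BS}\!\big(\tfrac{K}{S_0}e^{-c_\tau\tau}\big)\to\mathcal{J}_{\rm BS}(e^m)$. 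With the upper bound this yields $\lim_{\tau\to0}\mathcal{J}_{\rm fwd}^{(BS)}(K/S_0,\tau)=\mathcal{J}_{\rm BS}(K/S_0)$.

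I expect the $\tau\to1$ lower bound to be the only subtle step: $g_\tau$ has two competing terms, $\tfrac12 c^2\tau$ (minimized alone at $c=0$) and $\tfrac1{1-\tau}\,\mathcal{J}_{\rm BS}(\cdot)$ (whose exploding prefactor penalizes any $c$ with $\tfrac{K}{S_0}e^{-c\tau}\neq1$), and one must show their balance pushes the minimizer $c_\tau$ all the way to $m$. The argument above handles this by using the $c=m/\tau$ upper bound to bound $(c_\tau)$ and to force $\mathcal{J}_{\rm BS}(\tfrac{K}{S_0}e^{-c_\tau\tau})\to0$, and then applying (ii); the rest is routine. (An alternative would be a $\Gamma$-convergence argument applied to \eqref{1LV}--\eqref{2LV} directly, letting the averaging window $[\tau,1]$ shrink to $\{1\}$, where the constraint degenerates to $f(1)=m$ with minimal energy $\tfrac12 m^2$ realized by the straight line, or expand to $[0,1]$; but the computation above via \eqref{eq-c} is shorter.)
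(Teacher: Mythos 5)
Your proof is correct and follows essentially the same route as the paper: both argue directly from the representation \eqref{eq-c} that the exploding prefactor $\tfrac{1}{1-\tau}$ forces the minimizer to make $\mathcal{J}_{\rm BS}$ vanish as $\tau\to1$, while the $\tau\to0$ case is immediate. The only difference is one of rigor — the paper's argument is a two-line heuristic that implicitly interchanges the limit with the infimum, whereas you supply the matching upper/lower bounds and the compactness argument for the minimizers $c_\tau$ that justify it.
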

\begin{proof}
The second conclusion is obvious. We just show \eqref{eq-limit-J}. 
When $\tau\to1$, from \eqref{eq-c} we have that
\begin{equation*}
\lim_{\tau\to1}\mathcal{J}_{\rm fwd}^{(BS)}(K/S_0,\tau) = \inf_{c\in\R}
\left\{ \frac12 c^2  + \frac{1}{1-\tau} {\cal J}_{\rm BS}
\left(\frac{K}{S_0 e^{c}} \right) \right\}.
\end{equation*}
Since $ \frac{1}{1-\tau} \to+\infty$ as $\tau\to1$, the optimal $c\in\R$ is the 
smallest real number for which 
$\mathcal{J}_{\rm BS}\left(\frac{K}{S_0 e^{c}}\right)$ vanishes. 
Moreover, from \eqref{eq-J-x} we know that ${\cal J}_{\rm BS}
\left(\frac{K}{S_0 e^{c}}\right)=0$ as soon as $S_0 e^c= K$. 
Hence we obtain the optimal $c=\log\left(\frac{K}{S_0} \right)$, which gives
\eqref{eq-limit-J}. This agrees with the intuition that when $\tau$ is close 
to $1$, almost no averaging takes place. It therefore falls back to a standard 
European option.
\end{proof}

\subsection{Discussions on variational problem and the optimal path}\label{sec-var-opt-path}
In this section we look for a closed form expression for the rate function 
$\frac{1}{\sigma^2} {\mathcal J}_{\rm fwd}^{\rm (BS)}(K/S_0,\tau)$ of an 
out-of-the-money 
forward start Asian option under the Black-Scholes model. From \eqref{eq-c} this
amounts to solving a double-layer variational problem. 
We know that when $\tau\in(0,1)$, the rate function is an interpolation 
between the extreme cases corresponding to $\tau=0$ and $\tau=1$. 
It can also be observed by looking at its optimal path. 
Recall that a path $f(t)\in \mathcal{AC}[0,1]$ is the \textbf{optimal path} 
of ${\mathcal J}_{\rm fwd}^{(BS)}(S_0,K,\tau)$ if it satisfies
\begin{equation*}
f(0)=0,\quad  \frac{1}{1-\tau}\int_\tau^1e^{f(t)}dt=\frac{K}{S_0}, \quad
{\mathcal J}_{\rm fwd}^{(BS)}(K/S_0,\tau)=\frac12\int_0^1 [f'(t)]^2dt.
\end{equation*} 
From the proof of Proposition \ref{prop1} we know that $f(t)$ is a combination 
of two continuous paths, namely the optimal paths for the European option 
($S_0$, $S_0e^{c\tau}$, $\tau$) and the optimal path for the Asian option 
($S_0e^{c\tau}$, $K$, $1-\tau$).  Explicitly we have
\begin{equation}\label{eq-f-fwd}
f(t)=\begin{cases}
ct & 0\le t\le\tau\\
c\tau+\varphi_c\left(\frac{t-\tau}{1-\tau}\right) & \tau<t\le 1
\end{cases}\,,
\end{equation}
where $\varphi_c(\cdot)\in \mathcal{AC}[0,1]$ is the argmin of 
$ {\cal J}_{\rm BS}\left(\frac{K}{S_0 e^{c}} \right)$.
Here $c$ is uniquely determined by the minimizer of \eqref{eq-c}, which shall 
be  discussed in detail later.  
The explicit forms of $ {\cal J}_{\rm BS}(x)$ and its argmin 
$\varphi(\cdot\, ,x)$ were computed in \cite{PZAsian} (see Proposition 12). 
We recall these results below.

\begin{lemma}\label{lemma-J}
Let $\mathcal{J}_{\rm BS}(x)$ be given as in \eqref{eq-J-x}. Then we have
\begin{equation*}
\mathcal{J}_{\rm BS}(x)=\begin{cases}
\frac{1}{2}\beta^2-\beta\tanh(\beta/2) & x\ge 1\\
2\xi(\tan \xi-\xi) & x<1
\end{cases}\,,
\end{equation*}
and for $u\in[0,1]$
\begin{equation*}
\varphi(u,x)
=\begin{cases}
\beta u-2\log\left(\frac{e^{\beta u}+e^\beta}{1+e^\beta} \right) & x\ge 1
\\
\log\left(\frac{\cos \xi}{\cos^2(\xi(u-1))}\right) & x<1
\end{cases}\,,
\end{equation*}
where $\beta\in[0,\infty)$ and $\xi\in [0,\pi/2)$ are the unique solutions
\footnote{For $x=1$, we define $\beta=0$ and $\xi=0$.}
of the following equations
\begin{equation*}
\frac{1}{\beta}\sinh\beta=x,\quad \frac{1}{2\xi}\sin(2\xi)=x.
\end{equation*}
\end{lemma}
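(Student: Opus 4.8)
The plan is to solve the one–dimensional constrained variational problem \eqref{eq-J-x} by the classical Lagrange–multiplier method, and then to verify that the resulting stationary path is the global minimizer. First I would record existence of a minimizer: if $\varphi\in\mathcal{AC}[0,1]$ with $\varphi(0)=0$, then $|\varphi(u)|=\big|\int_0^u\varphi'(s)\,ds\big|\le\big(\int_0^1(\varphi'(s))^2ds\big)^{1/2}$ by Cauchy--Schwarz, so sublevel sets of the energy $\tfrac12\int_0^1(\varphi')^2du$ are bounded in the Sobolev space $H^1(0,1)$; by the compact embedding $H^1(0,1)\hookrightarrow C[0,1]$ the constraint functional $\varphi\mapsto\int_0^1 e^{\varphi(u)}du$ is weakly continuous while the energy is weakly lower semicontinuous, so the infimum is attained at some $\varphi^\ast$.

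Next, introduce a multiplier $\lambda\in\R$ for the constraint $\int_0^1 e^{\varphi(u)}du=x$ and consider the free right–endpoint problem with integrand $L(\varphi,\varphi')=\tfrac12(\varphi')^2-\lambda e^{\varphi}$. Its Euler--Lagrange equation is $\varphi''+\lambda e^{\varphi}=0$ on $(0,1)$, together with the fixed left condition $\varphi(0)=0$ and the natural boundary condition $\partial_{\varphi'}L\big|_{u=1}=\varphi'(1)=0$. Since $L$ has no explicit $u$–dependence, there is a Beltrami first integral $\tfrac12(\varphi')^2+\lambda e^{\varphi}=E$, and evaluating at $u=1$ gives $E=\lambda e^{\varphi(1)}$. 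Solving this separable (one–dimensional Liouville) equation produces three regimes according to the sign of $\lambda$: $\lambda=0$ forces $\varphi\equiv0$ and $x=1$; $\lambda>0$ yields the hyperbolic family which, after imposing $\varphi(0)=0$ and $\varphi'(1)=0$, takes the form $\varphi(u)=\beta u-2\log\frac{e^{\beta u}+e^{\beta}}{1+e^{\beta}}$ with parameter $\beta\in(0,\infty)$; and $\lambda<0$ yields the trigonometric family $\varphi(u)=\log\frac{\cos^2\xi}{\cos^2(\xi(u-1))}$ with parameter $\xi\in(0,\pi/2)$.

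It then remains to fix the parameter and read off the value of the energy. Imposing $\int_0^1 e^{\varphi(u)}du=x$ and evaluating the elementary integrals (substitute $v=e^{\beta u}$ in the hyperbolic case; in the trigonometric case the integrand is $\cos^2\xi\,\sec^2(\xi(u-1))$, integrated directly) reduces the constraint to the transcendental equations $\frac{\sinh\beta}{\beta}=x$ and $\frac{\sin(2\xi)}{2\xi}=x$ respectively. Monotonicity of $\beta\mapsto\frac{\sinh\beta}{\beta}$ (increasing from $1$ to $\infty$ on $(0,\infty)$) and of $\xi\mapsto\frac{\sin(2\xi)}{2\xi}$ (decreasing from $1$ to $0$ on $(0,\pi/2)$) shows each equation has a unique root, matching the cases $x\ge1$ and $x<1$, with $\beta=\xi=0$ at $x=1$. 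Plugging the optimal path back into $\tfrac12\int_0^1(\varphi'(u))^2du$ and integrating elementarily (again $v=e^{\beta u}$, resp. $w=\xi(u-1)$ and $\int\tan^2=\tan-\mathrm{id}$) gives $\tfrac12\beta^2-\beta\tanh(\beta/2)$ and $2\xi(\tan\xi-\xi)$, which are the claimed formulas. Finally, the minimizer $\varphi^\ast$ from the first step satisfies the Euler--Lagrange equation with the stated boundary and constraint conditions, hence coincides with one of the paths just classified; since in each $x$–regime there is exactly one such path, it is the global minimizer.

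I expect the only genuinely delicate point to be passing from "stationary point" to "global minimizer": the constraint manifold $\{\varphi:\varphi(0)=0,\ \int_0^1 e^{\varphi}=x\}$ is not convex, so one cannot simply invoke convexity of the Dirichlet energy. This is precisely why the existence argument of the first paragraph is needed — it forces the minimizer to be Euler--Lagrange–stationary, and the uniqueness of the stationary solution in each regime then closes the gap. Everything else is bookkeeping of elementary integrals. (Alternatively, one may just cite \cite{PZAsian}, where this computation is carried out in full.)
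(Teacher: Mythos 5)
Your proposal is correct, but note that the paper itself does not prove this lemma: it is recalled verbatim from Proposition 12 of \cite{PZAsian} (``The explicit forms of $\mathcal{J}_{\rm BS}(x)$ and its argmin were computed in \cite{PZAsian}''), so there is no in-paper argument to compare against, and what you have written is essentially a self-contained reconstruction of the derivation in that reference: Lagrange multiplier, Euler--Lagrange equation $\varphi''+\lambda e^{\varphi}=0$, natural (transversality) condition $\varphi'(1)=0$, Beltrami first integral, the three regimes according to the sign of $\lambda$, the constraint reducing to $\sinh\beta/\beta=x$ resp.\ $\sin(2\xi)/(2\xi)=x$, and elementary evaluation of the energy. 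Your opening paragraph on attainment of the infimum (coercivity on $\{\varphi(0)=0\}$, compact embedding $H^{1}(0,1)\hookrightarrow C[0,1]$, weak lower semicontinuity) is the right way to close the stationary-point-to-global-minimizer gap and is a genuine addition; for full rigor you would also want to note that the constraint is non-degenerate at the minimizer (its derivative $h\mapsto\int_0^1 e^{\varphi^{*}}h\,du$ is nonzero because $e^{\varphi^{*}}>0$), so the multiplier rule applies, and that in the $\lambda<0$ regime only $\xi\in(0,\pi/2)$ produces a path that stays finite on all of $[0,1]$, which eliminates the higher oscillatory branches and justifies the claimed uniqueness. One discrepancy is worth flagging explicitly: your trigonometric path $\varphi(u)=\log\bigl(\cos^{2}\xi/\cos^{2}(\xi(u-1))\bigr)$ differs from the formula displayed in the lemma, which has $\cos\xi$ rather than $\cos^{2}\xi$ in the numerator. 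Yours is the correct one: the printed version gives $\varphi(0)=-\log\cos\xi\neq 0$ and $\int_0^1 e^{\varphi(u)}\,du=\sin\xi/\xi$ rather than $\sin(2\xi)/(2\xi)$, whereas your version satisfies both the boundary condition $\varphi(0)=0$ of \eqref{eq-J-x} and the constraint, and reproduces $\varphi'(0)=-2\xi\tan\xi$ as used in \eqref{eq-varphi-prime-0} and Proposition~\ref{prop-f'-c}. So the lemma as printed contains a typo that your derivation corrects.
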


\begin{figure}[ht]
\begin{center}
\includegraphics[height=45mm]{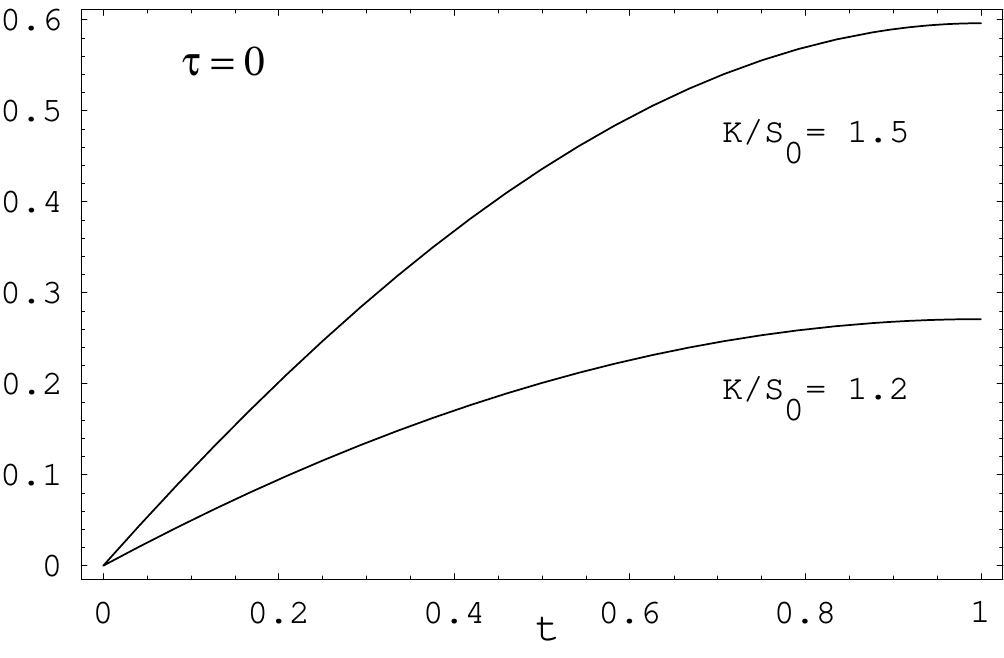}
\includegraphics[height=45mm]{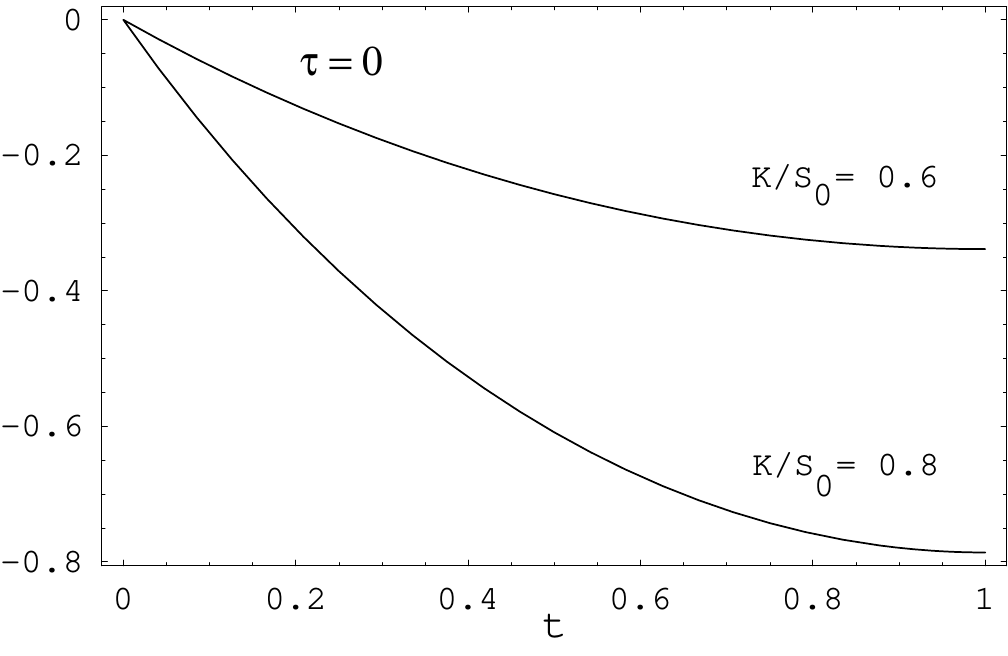}
\end{center}
\caption{Optimal paths $\varphi(\cdot\, ,x)$ for $x=K/S_0>1$ (left) and $K/S_0<1$ (right)
for an Asian option in the BS model. The averaging period is $[0,1]$,
corresponding to $\tau=0$. The
values of $K/S_0$ are as shown.}
\label{Fig:1a}
\end{figure}
The left plot in Figure~\ref{Fig:1a} shows the optimal path $\varphi(u,x)$ for 
two strikes with $x=K/S_0 > 1$ (out-of-the-money Asian  call options).
This is a concave function satisfying $\varphi(0,x)=0$ and the transversality 
condition $\varphi'(1,x)=0$. 
Recall that $\varphi(u,x)$ satisfies also $\int_0^1 e^{\varphi(u,x)} du = K/S_0=x$.\\

By Erdmann-Weierstrass corner conditions we know that $f$ is in $C^1([0,1],\R)$, which in fact, uniquely determines the constant $c$. 
In the proposition below, we given a simple and direct verification of $f\in C^1([0,1],\R)$ and obtain an explicit expression 
for $c$. 

\begin{proposition}\label{prop-f'-c}
Let ${\mathcal J}_{\rm fwd}^{(BS)}(K/S_0,\tau)$ be given as in \eqref{eq-c} and 
$f\in \mathcal{AC}[0,1]$ be the optimal path as described in \eqref{eq-f-fwd}. 
Then  $f\in C^1([0,1],\R)$.
Moreover, we have
\begin{equation*}
c=\begin{cases}
\frac{1}{1-\tau}\beta_c\tanh\left( \frac{\beta_c}{2}\right) & K\ge S_0e^{c\tau} 
\\
-\frac{2}{1-\tau}\xi_c\tan \xi_c & K\le S_0e^{c\tau} 
\end{cases}\,,
\end{equation*}
where $\beta_{c}\in[0,\infty)$ and $\xi_{c}\in[0,\pi/2)$ are the unique 
solutions
\footnote{For $K=S_{0}e^{c\tau}$, we define $\beta_{c}=\xi_{c}=0$.} of 
\begin{equation}\label{betaceq}
\frac{1}{\beta_c}\sinh\beta_c=\frac{K}{S_0e^{c\tau}},\quad 
\frac{1}{2\xi_c}\sin(2\xi_c)=\frac{K}{S_0e^{c\tau}}.
\end{equation}
\end{proposition}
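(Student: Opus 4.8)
The plan is to isolate $t=\tau$ as the only possible corner of $f$, reduce the $C^{1}$ statement to an equality of one-sided derivatives there, and then read both that equality and the value of $c$ off the first-order optimality condition for the outer minimization in \eqref{eq-c}.

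First I would record that on $(0,\tau)$ the path $f(t)=ct$ is affine, while on $(\tau,1)$ we have $f(t)=c\tau+\varphi_{c}\!\big(\tfrac{t-\tau}{1-\tau}\big)$, where $\varphi_{c}(\cdot)=\varphi\big(\cdot\,,\tfrac{K}{S_{0}e^{c\tau}}\big)$ is given by the smooth closed forms of Lemma~\ref{lemma-J} (the constraint in \eqref{eq-f-fwd} forces $\int_{0}^{1}e^{\varphi_{c}(u)}\,du=\tfrac{K}{S_{0}}e^{-c\tau}$), and in particular $f$ is continuous at $\tau$ since $\varphi_{c}(0)=0$. Hence $f\in C^{1}([0,1],\R)$ is equivalent to $f'(\tau^{-})=f'(\tau^{+})$, i.e.\ to $c=\tfrac{1}{1-\tau}\varphi_{c}'(0^{+})$. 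Differentiating the formulas of Lemma~\ref{lemma-J} at $u=0$ and using $\tfrac{e^{\beta}-1}{e^{\beta}+1}=\tanh\tfrac{\beta}{2}$ gives, with $x=\tfrac{K}{S_{0}e^{c\tau}}$,
\[ \varphi_{c}'(0)=\beta_{c}\tanh\tfrac{\beta_{c}}{2}\quad(x\ge 1),\qquad \varphi_{c}'(0)=-2\xi_{c}\tan\xi_{c}\quad(x<1), \]
where $\beta_{c},\xi_{c}$ are precisely the quantities of Lemma~\ref{lemma-J} at argument $x$ and therefore solve \eqref{betaceq}. So the whole statement follows once we show that the optimal $c$ obeys $c=\tfrac{1}{1-\tau}\varphi_{c}'(0)$.

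For this I would use that $c$ minimizes $G(c):=\tfrac12 c^{2}\tau+\tfrac{1}{1-\tau}\mathcal{J}_{\rm BS}\big(\tfrac{K}{S_{0}}e^{-c\tau}\big)$; since $G$ is coercive (the quadratic term dominates and $\mathcal{J}_{\rm BS}\ge 0$) and $C^{1}$ by Lemma~\ref{lemma-J}, the minimizer is an interior critical point and $0=G'(c)=\tau\big(c-\tfrac{x}{1-\tau}\mathcal{J}_{\rm BS}'(x)\big)$ with $x=\tfrac{K}{S_{0}}e^{-c\tau}$, so $c=\tfrac{x}{1-\tau}\mathcal{J}_{\rm BS}'(x)$. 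It thus suffices to establish the identity $x\,\mathcal{J}_{\rm BS}'(x)=\varphi'(0,x)$ on each branch, which I would verify directly from Lemma~\ref{lemma-J}: for $x\ge1$, differentiating $\mathcal{J}_{\rm BS}(x)=\tfrac12\beta^{2}-\beta\tanh\tfrac{\beta}{2}$ implicitly through $x=\tfrac{\sinh\beta}{\beta}$ and simplifying with $\sinh\beta=2\sinh\tfrac{\beta}{2}\cosh\tfrac{\beta}{2}$ and $\cosh\beta=2\cosh^{2}\tfrac{\beta}{2}-1$ collapses $x\,\mathcal{J}_{\rm BS}'(x)$ to $\beta\tanh\tfrac{\beta}{2}$, and the analogous computation for $x<1$ from $\mathcal{J}_{\rm BS}(x)=2\xi(\tan\xi-\xi)$, $x=\tfrac{\sin 2\xi}{2\xi}$ gives $-2\xi\tan\xi$. (Equivalently, $\mathcal{J}_{\rm BS}'(x)$ is the Lagrange multiplier $\lambda(x)$ of \eqref{eq-J-x}; reading $\lambda=\tfrac{\beta^{2}}{2}\,\mathrm{sech}^{2}(\beta/2)$ off the Euler--Lagrange equation $\varphi''=-\lambda e^{\varphi}$ on the explicit solution and forming $x\lambda$ yields the same.) Combining with the previous paragraph gives $c=\tfrac{1}{1-\tau}\beta_{c}\tanh\tfrac{\beta_{c}}{2}$ (resp.\ $-\tfrac{2}{1-\tau}\xi_{c}\tan\xi_{c}$), and simultaneously $f'(\tau^{-})=c=\tfrac{1}{1-\tau}\varphi_{c}'(0)=f'(\tau^{+})$, so $f\in C^{1}([0,1],\R)$; the limiting case $K=S_{0}e^{c\tau}$ forces $\beta_{c}=\xi_{c}=0$, hence $c=0$, consistent with the at-the-money vanishing of the rate function.

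The main obstacle is exactly this identity $x\,\mathcal{J}_{\rm BS}'(x)=\varphi'(0,x)$: carrying out the implicit differentiation of the parametrically defined $\mathcal{J}_{\rm BS}$ and the half-angle hyperbolic/trigonometric manipulations so that the two sides coincide on each branch, while keeping the branch labels consistent ($x\ge 1\Leftrightarrow K\ge S_{0}e^{c\tau}$, with $c>0$; $x<1\Leftrightarrow K<S_{0}e^{c\tau}$, with $c<0$). The remaining ingredients — local smoothness of $f$, continuity at $\tau$, coercivity and $C^{1}$-ness of $G$, and differentiating the closed forms of $\varphi$ at $u=0$ — are routine, and notably this argument recovers $f\in C^{1}$ directly, without invoking the Erdmann--Weierstrass corner condition.
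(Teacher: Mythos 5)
Your proposal is correct and follows essentially the same route as the paper: reduce $C^1$-ness to the matching condition $c=\tfrac{1}{1-\tau}\varphi_c'(0)$, read $\varphi_c'(0)$ off Lemma~\ref{lemma-J}, and obtain $c$ from the first-order condition for the outer infimum in \eqref{eq-c} — your identity $x\,\mathcal{J}_{\rm BS}'(x)=\varphi'(0,x)$ is exactly the paper's computation of $\partial_c\mathcal{J}_{\rm BS}\bigl(K/(S_0e^{c\tau})\bigr)$ rewritten via the chain rule. The only additions are your explicit coercivity argument for the existence of an interior critical point, which the paper leaves implicit, and the direct verification of $f\in C^1$ without appealing to the Erdmann--Weierstrass condition, which the paper also does in this proposition.
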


\begin{proof}
To show $f\in C^1([0,1],\R)$, we just need to prove that $f'(\tau-)=f'(\tau+)$. 
By \eqref{eq-f-fwd} it is equivalent to prove that
\begin{equation}\label{eq-c-varphi-0}
c=\frac{1}{1-\tau}\varphi_c'\left(0\right).
\end{equation}
From Lemma \ref{lemma-J} and the definition of $\varphi_c$ we know that 
\begin{equation}\label{eq-varphi-prime-0}
\varphi_c'(0)
=\begin{cases}
\beta_c \tanh\left(\frac{\beta_c}{2} \right) & K\ge S_0e^{c\tau},
\\
-2\xi_c\tan(\xi_c) & K< S_0e^{c\tau}.
\end{cases}
\end{equation}
One the other hand, since $c$ is the argmin of $\J_{\rm fwd}(S_0, K, \tau)$ in \eqref{eq-c}, it is  a critical point. Hence 
\begin{equation}\label{eqc}
c=-\frac{1}{\tau(1-\tau)}\,\frac{\partial}{\partial c}\J\left(\frac{K}{S_0e^{c\tau}} \right).
\end{equation}
By Lemma \ref{lemma-J} we can easily obtain that
\begin{equation*}
\frac{\partial}{\partial c}\J_{\rm BS}\left(\frac{K}{S_0e^{c\tau}} \right)
=\begin{cases}
-\tau\beta_c\tanh\left( \frac{\beta_c}{2}\right) & K\ge S_0e^{c\tau}, 
\\
2\tau\xi_c\tan \xi_c & K\le S_0e^{c\tau}.
\end{cases}
\end{equation*}
Combine the above two equations we have
\begin{equation*}
c=\begin{cases}
\frac{1}{1-\tau}\beta_c\tanh\left( \frac{\beta_c}{2}\right) & K\ge S_0e^{c\tau},
\\
-\frac{2}{1-\tau}\xi_c\tan \xi_c & K\le S_0e^{c\tau}. 
\end{cases}
\end{equation*}
Comparing with \eqref{eq-varphi-prime-0} we immediately obtain 
\eqref{eq-c-varphi-0}. This completes the proof.
\end{proof}

\begin{figure}[ht]
\begin{center}
\includegraphics[height=60mm]{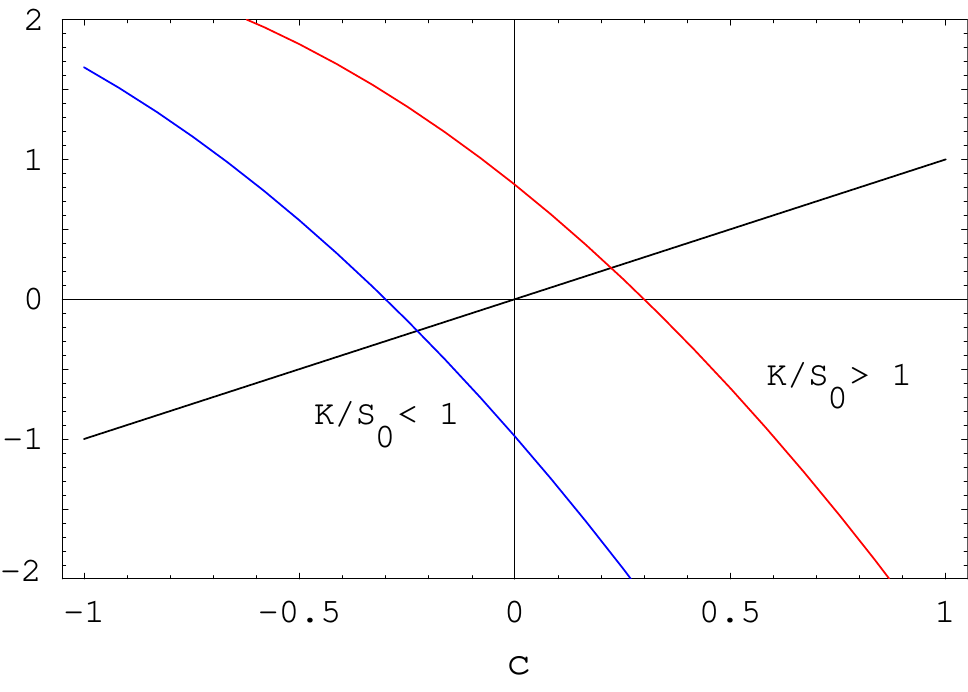}
\end{center}
\caption{
Graphical representation of the equation (\ref{eqc}) for $c$. This equation
can be represented as the intersection of the curve 
$\frac{1}{\tau}\frac{\partial}{\partial c}\J\left(\frac{K}{S_0e^{c\tau}} \right)$ and
$c (1-\tau)$ plotted vs $c$. The two cases shown correspond to $K>S_0$
(red) and $K<S_0$ (blue). The values of $c$ in the two cases are positive
and negative, respectively.}
\label{Fig:csol}
\end{figure}

From Proposition \ref{prop-f'-c} we can easily obtain the following fact.

\begin{proposition}
For a forward start Asian option with parameters $(K, S_0, \tau)$ and $\tau\in(0,1)$, let $c=S_0e^{f(\tau)}$ be as described above. Then 
\begin{itemize}
\item[(1)] If $K=S_0$, we have $c=0$;
\item[(2)] If $K>S_0$, then $c>0$ and $S_0<S_0e^{c\tau}<K$;
\item[(3)] If $K<S_0$, then $c<0$ and $S_0>S_0e^{c\tau}>K$.
\end{itemize}
\end{proposition}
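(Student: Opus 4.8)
The plan is to read off from Proposition~\ref{prop-f'-c} the exact correspondence between the sign of $c$ and the ordering of $K$ and $S_0e^{c\tau}$, and then to close the loop with the elementary equivalence $c>0\iff S_0e^{c\tau}>S_0$, which is valid since $\tau\in(0,1)$. Here $c$ is the unique minimizer of \eqref{eq-c}; by \eqref{eq-f-fwd} we have $f(\tau)=c\tau$, so the quantity written $S_0e^{f(\tau)}$ in the statement equals $S_0e^{c\tau}$.

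First I would establish the three-way dichotomy
\begin{equation*}
c>0\iff K>S_0e^{c\tau},\qquad c=0\iff K=S_0e^{c\tau},\qquad c<0\iff K<S_0e^{c\tau}.
\end{equation*}
From \eqref{betaceq}, together with the facts that $\beta\mapsto\sinh\beta/\beta$ equals $1$ at $0$ and is strictly increasing on $(0,\infty)$, while $\xi\mapsto\sin(2\xi)/(2\xi)$ equals $1$ at $0$ and is strictly decreasing on $(0,\pi/2)$, one obtains $\beta_c>0$ exactly when $K/(S_0e^{c\tau})>1$, $\xi_c>0$ exactly when $K/(S_0e^{c\tau})<1$, and $\beta_c=\xi_c=0$ when $K=S_0e^{c\tau}$. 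Since $\beta\tanh(\beta/2)>0$ for $\beta>0$ and $\xi\tan\xi>0$ for $\xi\in(0,\pi/2)$, the two-branch formula for $c$ in Proposition~\ref{prop-f'-c} yields the dichotomy: for instance, if $K>S_0e^{c\tau}$ then $c=\frac{1}{1-\tau}\beta_c\tanh(\beta_c/2)>0$, whereas if $c>0$ the lower branch (which would give $c=-\frac{2}{1-\tau}\xi_c\tan\xi_c\le 0$) cannot occur, forcing $K>S_0e^{c\tau}$; the remaining cases are identical. Equivalently, one may argue from the critical-point equation \eqref{eqc}, which shows directly that $c$ has the same sign as $\log\!\big(K/(S_0e^{c\tau})\big)$.

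It then suffices to combine this dichotomy with the chain $c>0\iff S_0e^{c\tau}>S_0$ (and its analogues for $=$ and $<$). If $K=S_0$, then $c>0$ would give $K>S_0e^{c\tau}>S_0$ and $c<0$ would give $K<S_0e^{c\tau}<S_0$, both absurd, so $c=0$, which is (1). If $K>S_0$, then $c\le 0$ is impossible, because $c=0$ forces $K=S_0e^{c\tau}=S_0$ while $c<0$ forces $K<S_0e^{c\tau}<S_0$; hence $c>0$, so $S_0<S_0e^{c\tau}$, and the dichotomy also gives $S_0e^{c\tau}<K$, establishing (2). Case (3) is the mirror image of (2) and needs no separate argument.

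I expect no genuine analytic obstacle here. The one point that must be handled with care is that Proposition~\ref{prop-f'-c} describes $c$ only implicitly --- the branch condition comparing $K$ with $S_0e^{c\tau}$ itself depends on $c$ --- so the sign relations above have to be used as true equivalences rather than one-directional implications; this is precisely what makes the short case analysis watertight.
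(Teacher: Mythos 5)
Your argument is correct, but it reaches the conclusion by a different mechanism than the paper. The paper's proof is geometric: it first observes that the two pieces of the optimal path \eqref{eq-f-fwd} are each monotone (citing the monotonicity of the optimal paths for European and standard Asian options), then uses the derivative-matching relation \eqref{eq-c-varphi-0} to conclude that the \emph{whole} path $f$ is monotone on $[0,1]$; the sign of $c$ and the sandwiching $S_0<S_0e^{c\tau}<K$ (resp.\ the reversed chain) then follow from the fact that a monotone path starting at $\log S_0$ whose exponential averages to $K/S_0$ over $[\tau,1]$ must pass through intermediate values. You instead work purely with the implicit characterization of $c$ in Proposition~\ref{prop-f'-c}: the two-branch formula forces the strict sign dichotomy $c>0\iff K>S_0e^{c\tau}$ (and its analogues), and combining this with the elementary equivalence $c>0\iff S_0e^{c\tau}>S_0$ closes all three cases by a short exhaustion. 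Your closing remark — that the branch condition depends on $c$ itself, so the implications must be upgraded to equivalences via mutual exclusivity of the three exhaustive cases — is exactly the point that makes this self-referential reading legitimate, and you handle it correctly. The trade-off: the paper's route gives the additional qualitative information that the optimal path is globally monotone (which it uses elsewhere for intuition about the path shape), while yours is more economical and avoids importing the monotonicity of the Asian optimal path from \cite{PZAsian}. One small point worth flagging: the statement's phrase ``let $c=S_0e^{f(\tau)}$'' is a typo in the paper (it conflates $c$ with $S_0e^{c\tau}=S_0e^{f(\tau)}$); your reading of it via $f(\tau)=c\tau$ is the intended one.
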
 

\begin{proof}
First note that the optimal path \eqref{eq-f-fwd} is piecewisely  monotone. This is because the optimal paths for a regular European option and a 
standard Asian option are both monotone (see \cite{PZAsian}, page 25). Moreover from \eqref{eq-c-varphi-0} we know that $c$ has the same sign as $\varphi'(0)$. 
This implies that the optimal path $f$ is monotone in the whole interval $[0,1]$. We can then easily conclude that $c>0$ if $K>S_0$, $c<0$ if $K<S_0$ and $c=0$ if $K=S_0$. Note $S_0e^{f(\tau)}=S_0e^{c\tau}$. By  monotonicity of $S_0e^{f(t)}$, $t\in[0,1]$ we can easily conclude that $S_0e^{c\tau }$ always falls between $S_0$ and $K$.
\end{proof}

Now we are able to derive the closed form of the rate function of an out-of-the-money forward start Asian option under Black-Scholes model.
\begin{theorem}
Let  $C(T)$ (resp. $P(T)$) be the price of an out-of-the-money forward start Asian call (resp. put) option $(S_0, K, T, \tau)$ under Black-Scholes model with volatility $\sigma$. Then we have the short maturity asymptotics of the prices as follows.
For $K>S_{0}$,
\begin{equation}\label{eq-BS-rate}
\lim_{T\to0}T\log C(T)=-\frac{1}{2\sigma^2(1-\tau)^2}\left({\tau}\beta^2\tanh^2\frac{\beta}{2}+(1-\tau)\beta^2-2(1-\tau)\beta\tanh\frac{\beta}{2} \right)\,,
\end{equation}
and for $K<S_{0}$,
\begin{equation}\label{eq-BS-rate-p}
\lim_{T\to0}T\log P(T)=-\frac{2}{\sigma^2(1-\tau)^2}\left({\tau}\xi^2\tan^2\xi-(1-\tau)\xi^2+(1-\tau)\xi\tan\xi \right),
\end{equation}
where $\beta\in(0,\infty)$  and $\xi\in(0,\frac{\pi}{2})$ are the unique solutions of 
\begin{align}\label{eq-z-beta}
\frac{\sinh\beta}{\beta}=\frac{K}{S_0}e^{-\frac{\tau}{1-\tau}\beta\tanh\frac{\beta}{2}}, \quad K> S_0\,, 
\end{align}
\begin{align}\label{eq-z-xi}
\frac{\sin (2\xi)}{2\xi}=\frac{K}{S_0}e^{\frac{2\tau}{1-\tau}\xi\tan{\xi}}, \quad K<S_0\,.
\end{align}
\end{theorem}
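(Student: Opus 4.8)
The plan is to substitute the explicit formulas from Lemma~\ref{lemma-J} and Proposition~\ref{prop-f'-c} into the double-layer variational expression \eqref{eq-c} and simplify. From Proposition~\ref{prop-BS-rate} we have $\lim_{T\to0}T\log C(T) = -\frac{1}{\sigma^2}\mathcal{J}_{\rm fwd}^{\rm (BS)}(K/S_0,\tau)$ with
\begin{equation*}
\mathcal{J}_{\rm fwd}^{\rm (BS)}(K/S_0,\tau) = \frac12 c^2\tau + \frac{1}{1-\tau}\mathcal{J}_{\rm BS}\!\left(\frac{K}{S_0}e^{-c\tau}\right),
\end{equation*}
where $c$ is the minimizer. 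First I would treat the call case $K>S_0$, where by the last proposition $c>0$ and $S_0 < S_0 e^{c\tau} < K$, so the relevant branch of Lemma~\ref{lemma-J} is the $x\ge 1$ branch with $x = \frac{K}{S_0}e^{-c\tau}$. Writing $\beta = \beta_c$ for the solution of $\frac{\sinh\beta}{\beta} = \frac{K}{S_0}e^{-c\tau}$, Lemma~\ref{lemma-J} gives $\mathcal{J}_{\rm BS}(x) = \frac12\beta^2 - \beta\tanh(\beta/2)$, and Proposition~\ref{prop-f'-c} gives $c = \frac{1}{1-\tau}\beta\tanh(\beta/2)$. Substituting $c$ into the expression for $\mathcal{J}_{\rm fwd}^{\rm (BS)}$:
\begin{equation*}
\mathcal{J}_{\rm fwd}^{\rm (BS)} = \frac{\tau}{2}\cdot\frac{\beta^2\tanh^2(\beta/2)}{(1-\tau)^2} + \frac{1}{1-\tau}\left(\frac12\beta^2 - \beta\tanh\frac{\beta}{2}\right) = \frac{1}{2(1-\tau)^2}\left(\tau\beta^2\tanh^2\frac{\beta}{2} + (1-\tau)\beta^2 - 2(1-\tau)\beta\tanh\frac{\beta}{2}\right),
\end{equation*}
which multiplied by $\frac{1}{\sigma^2}$ is exactly the claimed right-hand side of \eqref{eq-BS-rate}. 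The defining equation \eqref{eq-z-beta} for $\beta$ then comes from substituting $c = \frac{1}{1-\tau}\beta\tanh(\beta/2)$ back into $\frac{\sinh\beta}{\beta} = \frac{K}{S_0}e^{-c\tau}$, giving $\frac{\sinh\beta}{\beta} = \frac{K}{S_0}e^{-\frac{\tau}{1-\tau}\beta\tanh(\beta/2)}$.

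Next I would handle the put case $K<S_0$ in complete parallel: here $c<0$ and $S_0 > S_0 e^{c\tau} > K$, so $x = \frac{K}{S_0}e^{-c\tau} < 1$ and the $x<1$ branch of Lemma~\ref{lemma-J} applies with $\xi = \xi_c$ solving $\frac{\sin(2\xi)}{2\xi} = \frac{K}{S_0}e^{-c\tau}$, giving $\mathcal{J}_{\rm BS}(x) = 2\xi(\tan\xi - \xi)$ and, from Proposition~\ref{prop-f'-c}, $c = -\frac{2}{1-\tau}\xi\tan\xi$. Substituting,
\begin{equation*}
\mathcal{J}_{\rm fwd}^{\rm (BS)} = \frac{\tau}{2}\cdot\frac{4\xi^2\tan^2\xi}{(1-\tau)^2} + \frac{1}{1-\tau}\,2\xi(\tan\xi - \xi) = \frac{2}{(1-\tau)^2}\left(\tau\xi^2\tan^2\xi + (1-\tau)\xi\tan\xi - (1-\tau)\xi^2\right),
\end{equation*}
and dividing by $\sigma^2$ with the overall minus sign yields \eqref{eq-BS-rate-p}; the equation \eqref{eq-z-xi} follows from plugging $c = -\frac{2}{1-\tau}\xi\tan\xi$ into $\frac{\sin(2\xi)}{2\xi} = \frac{K}{S_0}e^{-c\tau}$.

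The one genuine point requiring care — and the step I expect to be the main obstacle — is the \emph{existence and uniqueness} of the relevant root $\beta\in(0,\infty)$ (respectively $\xi\in(0,\pi/2)$) of the transcendental equations \eqref{eq-z-beta}–\eqref{eq-z-xi}, together with confirming that the critical point $c$ identified in Proposition~\ref{prop-f'-c} is genuinely the global minimizer of \eqref{eq-c} rather than merely a stationary point. For uniqueness of $\beta$, I would study the function $h(\beta) := \frac{\sinh\beta}{\beta}e^{\frac{\tau}{1-\tau}\beta\tanh(\beta/2)}$ on $(0,\infty)$: one checks $h(0^+) = 1 < K/S_0$, $h$ is continuous and strictly increasing (both factors $\frac{\sinh\beta}{\beta}$ and $e^{\frac{\tau}{1-\tau}\beta\tanh(\beta/2)}$ are positive and strictly increasing in $\beta$), and $h(\beta)\to\infty$, so there is a unique solution; the argument for $\xi\in(0,\pi/2)$ is analogous using that $\frac{\sin(2\xi)}{2\xi}$ is strictly decreasing from $1$ to $0$ while $e^{\frac{2\tau}{1-\tau}\xi\tan\xi}$ is strictly increasing from $1$ to $\infty$, so $\frac{\sin(2\xi)}{2\xi}e^{-\frac{2\tau}{1-\tau}\xi\tan\xi}$ decreases strictly from $1$ to $-\infty$... actually it decreases from $1$ toward $0$ through positive values near $\pi/2$ only if one is careful, so the cleaner formulation is that $\frac{\sin(2\xi)}{2\xi}e^{\frac{2\tau}{1-\tau}\xi\tan\xi}$ need not be monotone; instead I would argue via the strict convexity of the objective $c\mapsto \frac12 c^2\tau + \frac{1}{1-\tau}\mathcal{J}_{\rm BS}(\frac{K}{S_0}e^{-c\tau})$, which follows from convexity of $\mathcal{J}_{\rm BS}$ (established in \cite{PZAsian}) composed appropriately, guaranteeing a unique minimizer $c$ and hence a unique associated $\beta$ or $\xi$. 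Once uniqueness of $c$ is in hand, the chain of substitutions above is routine algebra.
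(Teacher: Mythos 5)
Your proposal is correct and is essentially the paper's own proof, which simply combines Proposition \ref{prop-BS-rate} with Proposition \ref{prop-f'-c} via exactly the substitution you carry out. On the one point where you hesitated, the direct argument does work: rewriting \eqref{eq-z-xi} as $\frac{\sin(2\xi)}{2\xi}e^{-\frac{2\tau}{1-\tau}\xi\tan\xi}=\frac{K}{S_0}$, the left-hand side is a product of two positive strictly decreasing functions on $(0,\pi/2)$ and hence decreases strictly from $1$ to $0$, giving a unique root for any $K/S_0\in(0,1)$ without needing the convexity detour.
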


\begin{proof}
By combining Proposition \ref{prop-BS-rate} and Proposition \ref{prop-f'-c} 
we immediately obtain the conclusion.
\end{proof}

We show in Table~\ref{tab:1} numerical results for $c$ as given by the
solution of the equation \eqref{eqc} for a forward start Asian option
with several values of $\tau$ and different strikes $K>S_0$. 
The solution for $\beta_c$ of the
equation (\ref{betaceq}) is also shown. Substituting these numerical values
into the expressions for $f(t)$ shown above, gives the optimal paths shown
in the left plot of Figure~\ref{Fig:2a}.

These optimal paths have two pieces: a linear piece (dashed blue) corresponding
to the time interval before the averaging period, and the concave
piece (solid black) corresponding to the averaging period.

\begin{table}[!ht]
\caption{\label{tab:1} 
Numerical solution of the equation (\ref{eqc}) for $c$, for the 
forward start Asian option with $\tau$ values shown and $K>S_0$. 
For each $c$ we compute $\beta_c$
given by the solution of $\frac{1}{\beta_c}\sinh\beta_c = \frac{K}{S_0} 
e^{-c\tau}$. The last column shows the rate function of the
forward start Asian option in the Black-Scholes model.}
\begin{center}
\begin{tabular}{ccccc}
\hline
$K/S_0$ & $\tau$ & $c$ & $\beta_c$ & $\mathcal{J}_{\rm fwd}^{(BS)}(K/S_0,\tau)$\\
\hline\hline
1.0 & 0.0 & -  & 0.0     & 0.0 \\
1.1 & 0.0 & -  & 0.76340 & 0.01337 \\
1.2 & 0.0 & -  & 1.06487 & 0.04813 \\
1.3 & 0.0 & -  & 1.2873  & 0.09818 \\
1.4 & 0.0 & -  & 1.46814 & 0.15932 \\
1.5 & 0.0 & -  & 1.62213 & 0.22854 \\
\hline
1.0 & 0.25 & 0 & 0                & 0.0 \\
1.1 & 0.25 & 0.189261 & 0.5392    & 0.00904 \\
1.2 & 0.25 & 0.35968  & 0.751447  & 0.03294 \\
1.3 & 0.25 & 0.514475 & 0.907739  & 0.06794 \\
1.4 & 0.25 & 0.65612  & 1.03462   & 0.11132 \\
1.5 & 0.25 & 0.786554 & 1.14257   & 0.16109 \\
\hline
1.0 & 0.5 & 0        & 0        & 0.0 \\
1.1 & 0.5 & 0.142709 & 0.38003  & 0.00681  \\
1.2 & 0.5 & 0.272543 & 0.52806  & 0.02487 \\
1.3 & 0.5 & 0.391598 & 0.636173 & 0.05146 \\
1.4 & 0.5 & 0.501497 & 0.723307 & 0.08455 \\
1.5 & 0.5 & 0.603527 & 0.796955 & 0.12267 \\
\hline
1.0 & 0.75 & 0        & 0        & 0.0 \\
1.1 & 0.75 & 0.114339 & 0.239673 & 0.09531 \\
1.2 & 0.75 & 0.218666 & 0.332169 & 0.18232 \\
1.3 & 0.75 & 0.314588 & 0.399221 & 0.26236 \\
1.4 & 0.75 & 0.403356 & 0.452895 & 0.33647 \\
1.5 & 0.75 & 0.485962 & 0.497977 & 0.40546 \\
\hline
\end{tabular}
\end{center}
\end{table}

\begin{figure}[!ht]
\begin{center}
\includegraphics[height=40mm]{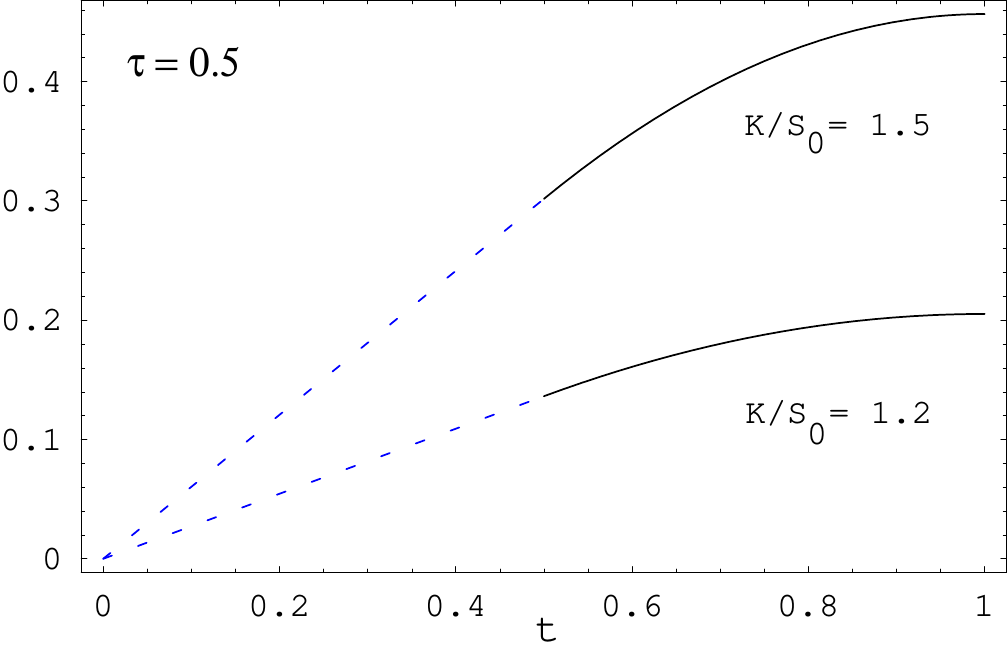}
\includegraphics[height=40mm]{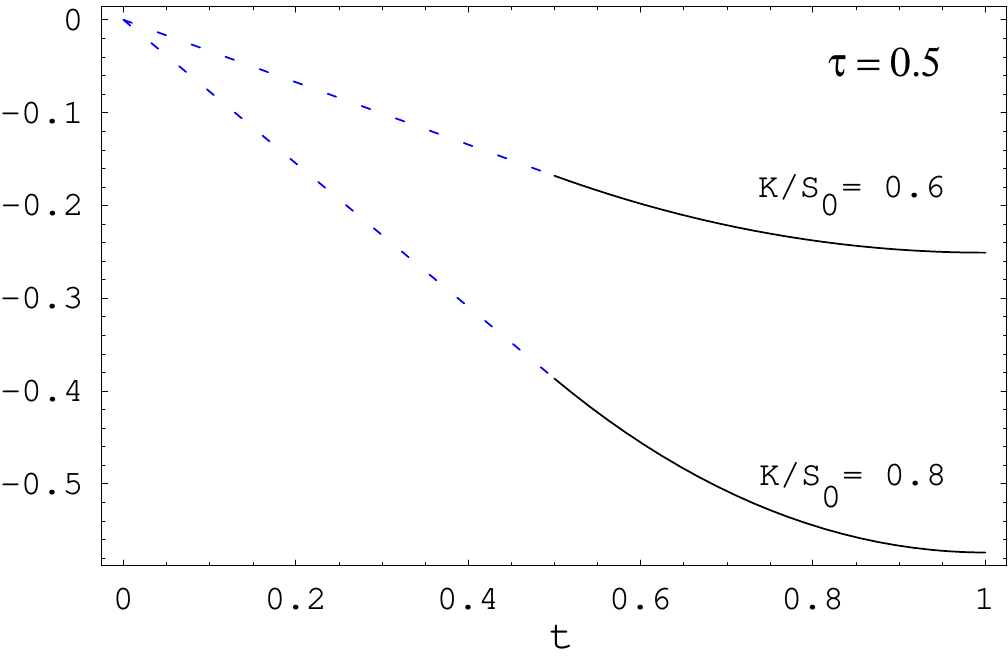}
\end{center}
\caption{Optimal path $f(t)$ given by \eqref{eq-f-fwd} for 
$K/S_0=1.2, 1.5$ (left) 
and $K/S_0=0.6, 0.8$ (right)
for a forward start Asian option in the BS model with $\tau=0.5$. The two
pieces of the path with different analytical form are shown as the 
dashed blue and solid black curves, respectively.}
\label{Fig:2a}
\end{figure}

%\clearpage
We also show numerical values for $c$ and $\beta_c$ for 
a few
values of $K/S_0 \leq 1$ in Table~\ref{tab:2}.
The right plot in Fig.~\ref{Fig:2a} shows the optimal path $f(t)$ for 
forward start Asian options with averaging over $[\tau ,1]$ with $\tau=0.5$
and strike $K/S_0=0.8$ and 0.6. 
The optimal path is a straight line in the $[0,\tau]$ region, and 
a convex function piece in the $[\tau ,1]$ region.

%%%%%%%%%%%%%%%%%%%%%%%%%%%%%%%%%%%%%%%%%%%%%%%%%%%%%%%%%%%%%%%%%%

\begin{table}[!ht]
\caption{\label{tab:2} 
Numerical solution of the equation (\ref{eqc}) for $c$, 
for the forward start
Asian option with $\tau$ values shown and $K<S_0$. 
For each $c$ we compute  $\xi_c$
given by the solution of $\frac{1}{2\xi_c}\sin(2\xi_c) = \frac{K}{S_0} 
e^{- c\tau}$. The last column shows the rate function of the
forward start Asian option in the Black-Scholes model.}
\begin{center}
\begin{tabular}{ccccc}
\hline
$K/S_0$ & $\tau$ & $c$ & $\xi_c$ & $\mathcal{J}_{\rm fwd}^{(BS)}(K/S_0,\tau)$ \\
\hline\hline
1.0 & 0.0 & -  & 0.0 & 0 \\
0.9 & 0.0 & -  & 0.39334 & 0.01701 \\
0.8 & 0.0 & -  & 0.56555 & 0.07823 \\
0.7 & 0.0 & -  & 0.70509 & 0.20580 \\
0.6 & 0.0 & -  & 0.83002 & 0.43735 \\
0.5 & 0.0 & -  & 0.94775 & 0.84161 \\
\hline
1.0 & 0.25 & 0         & 0        & 0 \\
0.9 & 0.25 & -0.21239  & 0.278525 & 0.01116 \\
0.8 & 0.25 & -0.453789 & 0.401176 & 0.05035 \\
0.7 & 0.25 & -0.732556 & 0.5013   & 0.12950 \\
0.6 & 0.25 & -1.06111  & 0.591885 & 0.26765 \\
0.5 & 0.25 & -1.45904  & 0.678576 & 0.49724 \\
\hline
1.0 & 0.5 & 0 & 0 & 0 \\
0.9 & 0.5 & -0.158352 & 0.197664  & 0.00834 \\
0.8 & 0.5 & -0.336107 & 0.285876  & 0.03745 \\
0.7 & 0.5 & -0.538556  & 0.358898 & 0.09584 \\
0.6 & 0.5 & -0.773475  & 0.426057 & 0.19694 \\
0.5 & 0.5 & -1.05297   & 0.491616 & 0.36342 \\
\hline
1.0 & 0.75 & 0         & 0 & 0.0 \\
0.9 & 0.75 & -0.126473 & 0.125404 & 0.00667 \\
0.8 & 0.75 & -0.267951 & 0.181998 & 0.02989 \\
0.7 & 0.75 & -0.428465 & 0.229381 & 0.07639 \\
0.6 & 0.75 & -0.613921 & 0.273526 & 0.15672 \\
0.5 & 0.75 & -0.833483 & 0.317279 & 0.28867 \\
\hline
\end{tabular}
\end{center}
\end{table}

\subsection{Further asymptotic estimates of the rate function}
Under Black-Scholes model, we have a  closed form for the logarithmic estimate for the price of an out-of-the-money 
forward start Asian option ${\cal I}_{\rm fwd}(S_0,K,\tau)$. 
It has different behavior when the option strike takes extreme values. 
In this section, we take a further step and estimate these asymptotic behaviors. 
For convenience we use log-strike $x=\log(K/S_0)$ in the rest of this section. 

Recall for a standard Asian option, we call it around ATM (AATM) if $|x|\ll1$ 
and deep OTM (DOTM) if  $|x|\gg1$. For a forward start Asian option, this 
is the case only if $\tau \sim O(1)$. To include also the limit values of 
$\tau$ we define the following regions in $(\tau, x)$ plane:

\begin{itemize}

\item $\tau$-almost-ATM ($\tau-$AATM) region. This is the region
$(1-\tau) |x| \ll 1$. It includes the AATM region.

\item $\tau$-deep-OTM ($\tau-$DOTM) region. 
This is the region $(1-\tau)|x| \gg 1$, and includes
the region of very large $x\to \infty$ and very small $-x\to \infty$ strikes.
The deep-OTM region with $x<0$ is further divided into two regions,
as will become apparent in Prop.~\ref{prop:LeftWing}.
\end{itemize}

See Figure~\ref{Fig:regions} for a graphical representation of these regions.

The relevant scale for the expansions in this section is $(1-\tau)x$ and
the form of the expansion depends on the relative size of this parameter to 1.
In order to see this we recall the relation \eqref{eq-z-beta} determining
$\beta$ for $K>S_0$ 
\[
(1-\tau)\log\left(\frac{\sinh\beta}{\beta}\right)+\tau\beta\tanh\left(\frac{\beta}{2}\right)
=(1-\tau)x\,.
\]
The relative size of the product $(1-\tau) x$ on the right hand side and $1$ 
decides if the expansion of the left hand side is done around $\beta=0$ 
or $\beta \to \infty$. We recall that a similar result holds for the equation 
\eqref{eq-z-xi}
determining $\xi$ for $K<S_0$
\[
(1-\tau)\log\left(\frac{\sin 2\xi}{2\xi}\right)+2\tau\xi\tan\left(\xi\right)
=(1-\tau)x\,.
\]

We consider the asymptotic expansion of the rate function for
forward starting Asian options in the Black-Scholes model in each of these
regions.
We first consider the asymptotics of the rate function in the
$\tau-$AATM region $(1-\tau) |x| \ll 1$.

\begin{proposition}\label{prop:3.10}
In the $\tau-$AATM region $(1 - \tau)|x| \ll 1$ we have the following 
asymptotic expansion in
powers of $\frac{1-\tau}{\tau} x$ in the Black-Scholes model.
\begin{align*}
& \I_{\rm fwd}^{(BS)}(x, \tau, \sigma)= 
\frac{1}{(1-\tau)^2\sigma^2}\bigg\{
\frac{3\tau^2}{2(1+2\tau)}\left(\frac{x(1-\tau)}{\tau} \right)^2
-\frac{3\tau^3(1-\tau)}{10(1+2\tau)^3}\left(\frac{x(1-\tau)}{\tau} \right)^3
\\
&\qquad\qquad
+\frac{\tau^4(109-349\tau)}{1400(1+2\tau)^5}\left(\frac{x(1-\tau)}{\tau} \right)^4
+O\left( \left({x(1-\tau)}\right)^5\right)
\bigg\}.
\end{align*}
\end{proposition}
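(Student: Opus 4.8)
The plan is to work directly from the closed-form rate function established above. For $x=\log(K/S_0)>0$ we have from \eqref{eq-BS-rate}
\[
\I_{\rm fwd}^{(BS)}(x,\tau,\sigma)=\frac{1}{2\sigma^2(1-\tau)^2}\left(\tau\beta^2\tanh^2\tfrac{\beta}{2}+(1-\tau)\beta^2-2(1-\tau)\beta\tanh\tfrac{\beta}{2}\right),
\]
with $\beta$ the unique positive root of $(1-\tau)\log\frac{\sinh\beta}{\beta}+\tau\beta\tanh\frac{\beta}{2}=(1-\tau)x$. The crucial structural observation is that both this defining relation and the parenthesis in the rate function are even functions of $\beta$; hence, setting $s:=\beta^2$, both are analytic in $s$ near $s=0$. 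Write $H(s)$ for the left side of the defining relation viewed as a function of $s$, and $\Phi(s):=\frac{1}{2\sigma^2(1-\tau)^2}\bigl(\tau s\tanh^2\tfrac{\sqrt s}{2}+(1-\tau)s-2(1-\tau)\sqrt s\tanh\tfrac{\sqrt s}{2}\bigr)$ for the rate function. Elementary Taylor expansions (using $\log\frac{\sinh\sqrt s}{\sqrt s}=\frac{s}{6}-\frac{s^2}{180}+\cdots$ and $\sqrt s\tanh\frac{\sqrt s}{2}=\frac{s}{2}-\frac{s^2}{24}+\cdots$) give $H(0)=0$, $H'(0)=\frac{1+2\tau}{6}\neq 0$, $\Phi(0)=\Phi'(0)=0$ and $\Phi(s)=\frac{1+2\tau}{24\sigma^2(1-\tau)^2}s^2+O(s^3)$. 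Since $\beta\mapsto H(\beta^2)$ is a strictly increasing bijection of $[0,\infty)$ onto $[0,\infty)$, the $\tau$-AATM condition $(1-\tau)|x|\ll 1$ forces $s$ to be small, so by the analytic inverse function theorem $s=H^{-1}(y)$ is a convergent power series in $y:=(1-\tau)x$ near $y=0$, and $\I_{\rm fwd}^{(BS)}(x,\tau,\sigma)=\Phi\bigl(H^{-1}(y)\bigr)$.

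First I would reduce the computation to a composition of two power series. Expand $H(s)=\frac{1+2\tau}{6}s+h_2(\tau)s^2+h_3(\tau)s^3+\cdots$; invert by Lagrange inversion to get $s=b_1(\tau)y+b_2(\tau)y^2+b_3(\tau)y^3+\cdots$ with $b_1=\frac{6}{1+2\tau}$; expand $\Phi(s)=a_2(\tau)s^2+a_3(\tau)s^3+a_4(\tau)s^4+\cdots$; then substitute and collect the coefficients of $y^2,y^3,y^4$, which involve only $h_1,h_2,h_3$ and $a_2,a_3,a_4$. Substituting $y=(1-\tau)x=\tau\cdot\frac{(1-\tau)x}{\tau}$ turns $y^k$ into $\tau^k\bigl(\frac{(1-\tau)x}{\tau}\bigr)^k$ and yields exactly the stated expansion, with remainder $O(y^5)=O(((1-\tau)x)^5)$; the bound is uniform on compact subsets of $(0,1)$ since there the radii of convergence of $H^{-1}$ and $\Phi$ in $y$ are bounded below by a positive constant. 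A useful internal check is that the $y^2$-coefficient reproduces $\frac{3}{2(1+2\tau)}$, matching $\frac{3\tau^2}{2(1+2\tau)}$ after the rescaling, and that the $\tau\to0$ limit recovers the AATM expansion of the standard Asian rate function ${\cal J}_{\rm BS}$.

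To cover the put case $x<0$, I would use the substitution $\beta=2i\xi$. From $\frac{\sinh 2i\xi}{2i\xi}=\frac{\sin 2\xi}{2\xi}$ and $2i\xi\tanh(i\xi)=-2\xi\tan\xi$, the defining equation \eqref{eq-z-xi} for $\xi$ is precisely $H(-4\xi^2)=(1-\tau)x$, and the same substitution shows the put rate function \eqref{eq-BS-rate-p} equals $\Phi(-4\xi^2)$. Thus for $x<0$ one still has $\I_{\rm fwd}^{(BS)}(x,\tau,\sigma)=\Phi\bigl(H^{-1}((1-\tau)x)\bigr)$ with the same analytic $H,\Phi$, but now $s=\beta^2=-4\xi^2<0$ (small, since $\xi$ is small in the $\tau$-AATM region); because the power series obtained above converges in a full neighborhood of $y=0$, it represents $\I_{\rm fwd}^{(BS)}$ for $(1-\tau)x$ of either sign, giving the claimed two-sided expansion.

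I expect the main obstacle to be entirely computational: carrying the double series composition correctly through fourth order in $y$ while tracking the $\tau$-dependence of $h_2,h_3,a_3,a_4$, so that the coefficients — in particular the numerator $109-349\tau$ at order four — come out right, since a slip anywhere propagates into all higher coefficients. The only genuinely conceptual point, namely that the $\beta$-branch expansion is simultaneously the expansion valid for $x<0$, is disposed of cleanly by the analyticity and the $\beta=2i\xi$ continuation argument above.
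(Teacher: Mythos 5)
Your proposal is correct and its computational core coincides with the paper's proof: expand the transcendental equation for $\beta$ (resp.\ $\xi$) around the origin, invert the series to express $\beta^2$ as a power series in $(1-\tau)x$, and substitute into the small-$\beta$ expansion of the rate function. The one genuine difference is organizational but worthwhile: the paper treats the call ($K>S_0$, variable $\beta$) and put ($K<S_0$, variable $\xi$) cases by two separate parallel expansions and only observes a posteriori that the resulting coefficients agree up to the sign pattern $(-x)^k$, whereas you note that both the constraint and the rate function are even in $\beta$, hence analytic in $s=\beta^2$, and that the substitution $\beta=2i\xi$ maps \eqref{eq-z-beta} and \eqref{eq-BS-rate} exactly onto \eqref{eq-z-xi} and \eqref{eq-BS-rate-p} with $s=-4\xi^2<0$. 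This makes the two-sidedness of the single expansion a structural consequence of analyticity rather than a coincidence to be checked, and it also explains cleanly why $H'(0)=\frac{1+2\tau}{6}\neq 0$ guarantees invertibility. Two minor caveats: you have not actually carried the composition through fourth order, so the coefficients $-\frac{3\tau^3(1-\tau)}{10(1+2\tau)^3}$ and $\frac{\tau^4(109-349\tau)}{1400(1+2\tau)^5}$ remain to be verified by the (routine) algebra you describe; and your uniformity claim "on compact subsets of $(0,1)$" is slightly weaker than what the $\tau$-AATM region requires, since that region includes $\tau\to 0$ and $\tau\to 1$ — but the coefficients of $H$ and of $2\sigma^2(1-\tau)^2\Phi$ are in fact bounded uniformly for $\tau\in[0,1]$ with $H'(0)\in[\frac16,\frac12]$ bounded away from zero, so the radii of convergence are uniformly bounded below and the argument extends to the full region.
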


\begin{proof}
(1) Consider first the case $K>S_0$. 
%(DP: I deleted this) If $(1-\tau)x\ll 1$, then $1-\tau\to0$ thus  $\tau\to1$.
We have
\begin{equation*}
(1-\tau)\log\left(\frac{\sinh \beta}{\beta}\right)+\tau\beta\tanh\left(\frac{\beta}{2}\right)\ll1.
\end{equation*}
This implies that $\beta\ll1$. 

From \eqref{eq-z-beta} we have
\begin{align*}
\frac{1-\tau}{\tau}x&=\beta\tanh \frac{\beta}{2}+\frac{1-\tau}{\tau}\log\left(\frac{\sinh \beta}{\beta} \right)\\
&=\frac{\beta^2}{2}-\frac{\beta^4}{24}+O(\beta^6)+\frac{1-\tau}{\tau}\left( \frac{\beta^2}{6}-\frac{\beta^4}{180}+O(\beta^6)\right)
\end{align*}
Hence
\begin{eqnarray*}
\frac12\beta^2 = \frac{3\tau}{1+2\tau} \left( \frac{1-\tau}{\tau} x\right) +
\frac{3\tau^2(2+13\tau)}{10(1+2\tau)^3} \left( \frac{1-\tau}{\tau} x\right)^2 +
O\left(\left(\frac{1-\tau}{\tau}x \right)^3\right)
\end{eqnarray*}

and 
\begin{align*}
\I_{\rm fwd}^{(BS)}(x, \tau, \sigma)
&=\frac{1}{2\sigma^2(1-\tau)^2}\left( \frac{(1+2\tau)\beta^4}{12}-\frac{1+4\tau}{120}\beta^6+\frac{17(1+6\tau)}{20160}\beta^8+O(\beta^{10})\right)\\
&=\frac{3\tau^2}{2\sigma^2(1-\tau)^2(1+2\tau)} \left( \frac{1-\tau}{\tau} x\right)^2-\frac{3\tau^3}{10\sigma^2(1-\tau)(1+2\tau)^3} \left( \frac{1-\tau}{\tau} x\right)^3
\\
&\qquad\qquad 
+\frac{\tau^4(109-349\tau)}{1400\sigma^2(1-\tau)^2(1+2\tau)^5}\left(\frac{1-\tau}{\tau}x \right)^4+O\left(\left(\frac{1-\tau}{\tau}x \right)^5\right).
\end{align*}

(2) Next consider $K<S_0$. When $-(1-\tau)x\ll1$, from \eqref{eq-z-xi} we know 
that $\xi\ll 1$, and by expanding it we obtain
\begin{equation*}
-\frac{x(1-\tau)}{\tau}=\frac{2(1+2\tau)}{3\tau}\xi^2+\frac{2(2+13\tau)}{45\tau}\xi^4+O(\xi^6),
\end{equation*}
hence we have
\begin{eqnarray*}
\xi^2 = \frac{3\tau}{2(1+2\tau)} \left( - \frac{1-\tau}{\tau} x\right) -
\frac{3\tau^2(2+13\tau)}{20(1+2\tau)^3} 
\left( - \frac{1-\tau}{\tau} x\right)^2 + O((-(1-\tau)x)^3)
\end{eqnarray*}

Therefore we obtain the asymptotic expansion of the rate function
\begin{align*}
\I_{\rm fwd}^{(BS)}(x, \tau, \sigma)
&=\frac{2}{\sigma^2(1-\tau)^2}\left(
\frac{1+2\tau}3\xi^4+\frac{2(4\tau+1)}{15}\xi^6+\frac{17(1+6\tau)}{315}\xi^8+O(\xi^{10})
\right)
\\
&
=\frac{1}{\sigma^2(1-\tau)^2}\bigg\{
\frac{3\tau^2}{2(1+2\tau)}\left(\frac{-x(1-\tau)}{\tau} \right)^2+
\frac{3\tau^3(1-\tau)}{10(1+2\tau)^3}\left(\frac{-x(1-\tau)}{\tau} \right)^3
\\
&\qquad\qquad
+\frac{\tau^4(109-349\tau)}{1400(1+2\tau)^5}\left(\frac{-x(1-\tau)}{\tau} \right)^4+O\left( \left({-x(1-\tau)}\right)^5\right)
\bigg\}.
\end{align*}
Hence the conclusion follows.
\end{proof}

%%%%%%%%%%%%%%%%%%%%%%%%%%%%%%%%%%%%%%%%%%%%%%%%%%%%%%%
\begin{remark}
Recall that when $\log(K/S_0)=o(1)$, the rate function of a standard Asian 
option with averaging starting time zero ${\cal J}_{\rm BS}(K/S_0)$ is  approximately
$\frac32 \log^2\left(\frac{K}{S_0}\right)$ (see \cite{PZAsian}, page 7). Plug it into \eqref{eq-c} and  find the minimum of the quadratic equation of $c$
\begin{equation*}
\frac{1}{2}\tau c^2+\frac{3}{2(1-\tau)}\left( \log\left(\frac{K}{S_0} \right)-c\tau \right)^2.
\end{equation*}
We can easily compute the minimizer $c_*=\frac{3}{1+2\tau}\log\left(\frac{K}{S_0} \right)$ and the minimum $ \frac{3}{2(1+2\tau)} \log^2 \left(\frac{K}{S_0}\right)$. This provides us a simple way to estimate the rate function of an Asian forward option, which indeed coincides with the first term of \eqref{eq-BS-ATM-I}. However, this approach is not accurate enough to provide further terms in the asymptotic expansion.
\end{remark}

Since AATM region is included in the $\tau$--AATM region, it follows
immediately from Proposition~\ref{prop:3.10} that we have:

\begin{corollary}
In the AATM region  $x\to0$ we have the following expansion for the rate function
of a forward start Asian option under the Black-Scholes model with parameters 
$(x=\log(K/S_0), \tau, \sigma)$
\begin{align}\label{eq-BS-ATM-I}
&\I_{\rm fwd}^{(BS)}(x, \tau, \sigma)=\frac{3}{2\sigma^2}\bigg\{
\frac{x^2}{1+2\tau}-\frac{(1-\tau)^2}{5(1+2\tau)^3}x^3 \\
& \qquad \qquad \qquad\qquad+ \frac{(1-\tau)^3(109 - 349\tau)}{2100(1+2\tau)^5} x^4+O(x^5)
\bigg\}.\nonumber
\end{align}
\end{corollary}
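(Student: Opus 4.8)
The plan is to deduce \eqref{eq-BS-ATM-I} directly from Proposition~\ref{prop:3.10}, so the whole argument is a specialization followed by an algebraic rearrangement. The first step is to check that Proposition~\ref{prop:3.10} applies: for $\tau\in(0,1)$ held fixed, the AATM hypothesis $x\to0$ gives $(1-\tau)|x|\le|x|=o(1)$, which is precisely the $\tau$-AATM condition $(1-\tau)|x|\ll1$ required there. Note also that the statement of Proposition~\ref{prop:3.10} is a single formula valid for $x$ of either sign (its proof treats $K>S_0$ and $K<S_0$ separately but lands on the same expansion), so no separate discussion of the call and put cases is needed here.

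The second step is the reorganization. Since $\tau$ is fixed, $\frac{1-\tau}{\tau}$ is a constant, so each monomial $\left(\frac{x(1-\tau)}{\tau}\right)^n$ appearing in Proposition~\ref{prop:3.10} equals $\frac{(1-\tau)^n}{\tau^n}\,x^n$, and the remainder $O\!\left((x(1-\tau))^5\right)$ becomes $O(x^5)$. I would substitute these identities into the bracketed expression of Proposition~\ref{prop:3.10}, multiply by the prefactor $\frac{1}{(1-\tau)^2\sigma^2}$, and collect powers of $x$: in the $x^2$ term the $\tau^2$ cancels the $\tau^{-2}$ and the $(1-\tau)^2$ cancels against the prefactor, leaving $\frac{3}{2\sigma^2(1+2\tau)}$; the $x^3$ and $x^4$ coefficients simplify in the same mechanical way. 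Pulling the common factor $\frac{3}{2\sigma^2}$ out of all three coefficients then yields exactly \eqref{eq-BS-ATM-I}.

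I do not expect any genuine obstacle: the corollary follows ``immediately'' from Proposition~\ref{prop:3.10}, and the only thing needing care is the coefficient bookkeeping, which is routine. As a consistency check on the leading term one can proceed as in the preceding Remark --- replace $\mathcal{J}_{\rm BS}(K/S_0)$ by its AATM value $\frac{3}{2}\log^2(K/S_0)$ in \eqref{eq-c}, minimize the resulting quadratic in $c$, and use $\mathcal{I}_{\rm fwd}^{(BS)}=\frac{1}{\sigma^2}\mathcal{J}_{\rm fwd}^{(BS)}$ --- which recovers the leading term $\frac{3}{2\sigma^2}\cdot\frac{x^2}{1+2\tau}$; the cubic and quartic terms, however, genuinely require the finer expansion of Proposition~\ref{prop:3.10}.
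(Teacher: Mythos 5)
Your route is exactly the paper's: the corollary is presented as an immediate specialization of Proposition~\ref{prop:3.10}, obtained by writing $\left(\frac{x(1-\tau)}{\tau}\right)^n=\frac{(1-\tau)^n}{\tau^n}x^n$, multiplying by the prefactor $\frac{1}{(1-\tau)^2\sigma^2}$, and collecting powers of $x$. For the $x^2$ and $x^3$ terms this mechanical bookkeeping does close and reproduces \eqref{eq-BS-ATM-I}.

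The gap is in your blanket claim that the $x^4$ coefficient "simplifies in the same mechanical way" and "yields exactly" \eqref{eq-BS-ATM-I}. If you actually carry out the substitution on the quartic term of Proposition~\ref{prop:3.10} as printed, you get $\frac{1}{(1-\tau)^2\sigma^2}\cdot\frac{\tau^4(109-349\tau)}{1400(1+2\tau)^5}\cdot\frac{(1-\tau)^4}{\tau^4}\,x^4=\frac{(1-\tau)^2(109-349\tau)}{1400\sigma^2(1+2\tau)^5}\,x^4$, whereas the corollary asserts $\frac{3}{2\sigma^2}\cdot\frac{(1-\tau)^3(109-349\tau)}{2100(1+2\tau)^5}\,x^4=\frac{(1-\tau)^3(109-349\tau)}{1400\sigma^2(1+2\tau)^5}\,x^4$: the two differ by one power of $(1-\tau)$, so the deduction does not go through literally. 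The resolution is that the quartic term of Proposition~\ref{prop:3.10} carries a typo (its numerator should have an extra factor $(1-\tau)$, just as the cubic term does); the corollary's coefficient is the correct one. This can be checked independently of $\beta$: writing $j(w)=\frac32 w^2-\frac{3}{10}w^3+\frac{109}{1400}w^4+O(w^5)$ for the standard Asian rate function in log-strike, the stationarity condition for \eqref{eq-c} is $w=x-\frac{\tau}{1-\tau}j'(w)$ and the envelope identity gives $\frac{d}{dx}\mathcal{J}_{\rm fwd}^{(BS)}=\frac{1}{1-\tau}j'(w(x))$; inverting the series for $w(x)$ to third order and integrating reproduces \eqref{eq-BS-ATM-I}, including the $(1-\tau)^3$ in the quartic coefficient. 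A complete proof must therefore either verify the quartic term by such an independent computation or correct the source proposition, rather than asserting the coefficient bookkeeping is routine.
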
 

\begin{remark}
In the limit $\tau=0$ this reduces to 
\begin{equation}\label{JTaylor}
\mathcal{I}_{\rm fwd}^{(BS)}(x,0, \sigma) = \frac{1}{\sigma^2}\left\{
\frac32 x^2 - \frac{3}{10} x^3 +
\frac{109}{1400} x^4 + O(x^5) \right\}\,,
\end{equation}
which reproduces Eq.(35) in \cite{PZAsian}.
 In the opposite limit $\tau \to 1$ at $x\sim O(1)$ we get 
from proposition ~\ref{prop:3.10} the simple limit
\begin{equation}
\mathcal{I}_{\rm fwd}^{(BS)}(x,1,\sigma) = \frac{1}{2\sigma^2} x^2 \,,
\end{equation}
which is the rate function for the large deviations of a 
sum of iid Gaussian random variables, which gives also the rate function for
an out-of-the-money European option in the Black-Scholes model.
\end{remark}

Next let us consider the asymptotic behaviors of the rate functions in the 
$\tau-$deep out-of-the-money regime. 

\begin{proposition}
Consider an out-of-the-money forward start Asian call option under the
Black-Scholes model with parameter $(x, \tau, \sigma)$. 
When the option is in the
deep out-of-the-money region, namely $(1-\tau) x\gg1$, its rate function has 
the following 
asymptotic expansion
\begin{align}\label{eq-asymp-x}
\I_{\rm fwd}^{(BS)}(x, \tau, \sigma)
&=\frac{1}{2\sigma^2}\bigg\{ 
x^2+2x\log(2(1-\tau)x)-2x+\left(\log(2(1-\tau)x)\right)^2
\\
&\qquad\qquad+O\bigg(\frac{\log^2(2(1-\tau)x)}{x}\bigg) 
\bigg\}.
\nonumber
\end{align}
\end{proposition}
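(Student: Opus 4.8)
The starting point is the closed form for the out-of-the-money forward start Asian call in the Black--Scholes model: combining \eqref{eq-out-call-BS} with \eqref{eq-BS-rate} we have
\[
\I_{\rm fwd}^{(BS)}(x,\tau,\sigma)=\frac{1}{2\sigma^2(1-\tau)^2}\left(\tau\beta^2\tanh^2\frac{\beta}{2}+(1-\tau)\beta^2-2(1-\tau)\beta\tanh\frac{\beta}{2}\right),
\]
where $\beta\in(0,\infty)$ solves the transcendental equation
\[
(1-\tau)\log\!\left(\frac{\sinh\beta}{\beta}\right)+\tau\beta\tanh\!\left(\frac{\beta}{2}\right)=(1-\tau)x .
\]
The left-hand side is continuous, strictly increasing in $\beta$ on $(0,\infty)$ and diverges to $+\infty$, so the hypothesis $(1-\tau)x\gg1$ forces $\beta\to\infty$. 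I would then record the large-$\beta$ expansions $\tanh(\beta/2)=1+O(e^{-\beta})$, $\tanh^2(\beta/2)=1+O(e^{-\beta})$ and $\log(\sinh\beta/\beta)=\beta-\log(2\beta)+O(e^{-2\beta})$, to be substituted into both the rate function and the constraint.

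Substituting into the rate function, the exponentially small corrections are absorbed and one gets $\I_{\rm fwd}^{(BS)}(x,\tau,\sigma)=\frac{1}{2\sigma^2(1-\tau)^2}\bigl(\beta^2-2(1-\tau)\beta\bigr)+O(\beta^2 e^{-\beta})$, and the last term is negligible against any power of $1/\beta$. Substituting into the constraint and writing $y:=(1-\tau)x\gg1$ gives the implicit relation $\beta=y+(1-\tau)\log(2\beta)+O(\beta e^{-\beta})$. I would solve this by fixed-point iteration: $\beta\approx y$, then $\beta\approx y+(1-\tau)\log(2y)$, and the remainder $r:=\beta-y-(1-\tau)\log(2y)$ is controlled by $r=(1-\tau)\log(\beta/y)=(1-\tau)\log\bigl(1+\tfrac{(1-\tau)\log(2y)+r}{y}\bigr)$, which yields $r=\frac{(1-\tau)^2\log(2y)}{y}+O\!\bigl(\frac{\log^2(2y)}{y^2}\bigr)$, hence
\[
\beta=y+(1-\tau)\log(2y)+O\!\left(\frac{\log(2y)}{y}\right).
\]

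Finally I would plug this expansion into $\beta^2-2(1-\tau)\beta$ and expand. The key point is a cancellation: the term $2yr=2(1-\tau)^2\log(2y)+O(\log^2(2y)/y)$ produced in $\beta^2$ is cancelled by the $-2(1-\tau)\cdot(1-\tau)\log(2y)$ term in $-2(1-\tau)\beta$, leaving
\[
\beta^2-2(1-\tau)\beta=y^2+2(1-\tau)y\log(2y)-2(1-\tau)y+(1-\tau)^2\log^2(2y)+O\!\left(\frac{\log^2(2y)}{y}\right).
\]
Dividing by $2\sigma^2(1-\tau)^2$ and resubstituting $y=(1-\tau)x$ (so $y/(1-\tau)=x$, $\log(2y)=\log(2(1-\tau)x)$, and the error becomes $O(\log^2(2(1-\tau)x)/x)$ since $\tau$ is fixed) reproduces \eqref{eq-asymp-x} exactly.

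The main obstacle is purely bookkeeping: one must carry the fixed-point solution for $\beta$ to exactly the accuracy at which the induced error in $\beta^2-2(1-\tau)\beta$ is $O(\log^2(2y)/y)$ and no worse — which requires the leading behavior $r\sim(1-\tau)^2\log(2y)/y$ of the remainder — and one must verify carefully that the $2yr$ contribution both produces and then precisely cancels the spurious $\log(2y)$ term. Checking that the $\tanh$-induced exponentially small errors are genuinely dominated by the stated error is immediate; everything else is a routine asymptotic expansion.
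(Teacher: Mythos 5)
Your proposal is correct and follows essentially the same route as the paper: deduce $\beta\to\infty$ from the constraint, use $\tanh(\beta/2)=1+O(e^{-\beta})$ and $\log(\sinh\beta/\beta)=\beta-\log(2\beta)+O(e^{-2\beta})$, invert the relation $\beta=(1-\tau)x+(1-\tau)\log(2\beta)+\cdots$ to the accuracy $r=(1-\tau)^2\log(2y)/y+O(\log^2(2y)/y^2)$, and substitute into $\frac{1}{2\sigma^2(1-\tau)^2}(\beta^2-2(1-\tau)\beta)$. Your explicit tracking of the cancellation between the $2yr$ term and the $-2(1-\tau)^2\log(2y)$ term is a detail the paper leaves implicit, and all your error estimates check out.
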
 

\begin{proof}
Clearly when $x=\log(K/S_0)\gg1$, from \eqref{eq-z-beta} we have
\begin{equation*}
\frac{\sinh\beta}{\beta}=e^{x-\frac{\tau}{1-\tau}\beta\tanh\frac{\beta}{2}}.
\end{equation*}

We have $x=\log(K/S_0)\gg1$ and $(1-\tau)x\gg 1$. 
From the above equation we know that 
\begin{equation*}
(1-\tau)\log\left(\frac{\sinh \beta}{\beta}\right)+\tau\beta\tanh\left(\frac{\beta}{2}\right)\gg1.
\end{equation*}
This implies that $\beta\gg1$. Now from \eqref{eq-z-beta} we obtain
\begin{equation*}
x=\beta-\log(2\beta)+\frac{\tau}{1-\tau}\beta+O\left(e^{-2\beta}+\frac{\tau}{1-\tau}e^{-\beta}\right).
\end{equation*}
By inverting the series we have
\begin{equation*}
\beta=(1-\tau)x+(1-\tau)\log(2(1-\tau)x)+(1-\tau)\frac{\log(2(1-\tau)x)}{x}+O\bigg((1-\tau)\frac{\log^2(2(1-\tau)x)}{x^2}\bigg).
\end{equation*}
By plugging in \eqref{eq-BS-rate} we have
\begin{align*}
&\I_{\rm fwd}^{(BS)}(x, \tau, \sigma)
\\
&=\frac{1}{2\sigma^2(1-\tau)^2}\left(\beta^2-2(1-\tau)\beta+O(e^{-\beta})\right)
\\
&=\frac{1}{2\sigma^2}\bigg\{ 
x^2+2x\log(2(1-\tau)x)-2x+\log^2(2(1-\tau)x)+O\bigg(\frac{\log^2(2(1-\tau)x)}{x}\bigg) 
\bigg\}.
\end{align*}
\end{proof}

\begin{remark}
Consider the fixed $x$ and $\tau\to0$ limit. 
By taking the limit $\tau\to0$ in \eqref{eq-asymp-x} 
we obtain
\begin{align*}
&\I_{\rm fwd}^{(BS)}(x, \tau, \sigma)
 = \frac{1}{\sigma^2}\bigg\{ 
\frac12 x^2+x\log(2x)-x+
\frac12 \log^2(2x) +O\bigg(\frac{\log^2(2x)}{x}\bigg) 
\bigg\}\,.
\end{align*}
This can be compared with the large strike asymptotic result for the rate 
function of a standard Asian option with averaging starting from time $0$ 
in Eq.~(36) of \cite{PZAsian}\footnote{
Note that in Eq.~(36) of \cite{PZAsian} the coefficient
of $\log^2(2x)$ should be $\frac12$ instead of $\frac32$, and there is
no $\log(2x)$ term to the order considered. With these corrections, the
$\tau =0$ result is reproduced indeed.}.
\end{remark}

Next we consider the case of a deep out-of-the-money put Asian option,  
which has small strike, namely $x=\log(K/S_0)\ll -1$. This case is more
complex, and we distinguish two regions in $(\tau,x)$ with distinct
asymptotics.

\begin{proposition}\label{prop:LeftWing}
Let $\I_{\rm fwd}^{(BS)}(x, \tau, \sigma)$ be the rate function of an 
out-of-the-money 
forward start put Asian option under the Black-Scholes model. 
Assume the put option is in the deep out-of-the-money region, 
namely $(1-\tau) x\ll-1$. 
\begin{itemize}
\item[(1)] If $-(1-\tau)x\gg 1$ and $\frac{2\tau}{1-\tau}\ll e^x(-x)$, namely $\frac{e^x(-x)}{\tau}\to+\infty$, we have $\tau\to0$ and
\begin{equation}\label{eq-tau-0-asymp}
\I_{\rm fwd}^{(BS)}(x, \tau, \sigma)
=\frac{2}{\sigma^2(1-\tau)^2}\bigg\{
e^{-2x}\tau+(1-3\tau)e^{-x}-\frac{\pi^2}{4}-1 +o(e^{-2x}\tau+e^{-x})\bigg\}.
\end{equation}
\item[(2)] If $-(1-\tau)x\gg 1$ and $\frac{2\tau}{1-\tau}\gg e^x(-x)$, 
namely $\frac{e^x(-x)}{\tau}\to0$, we have
\begin{align*}
\I_{\rm fwd}^{(BS)}(x, \tau, \sigma)
&
= \frac{2}{\sigma^2}\bigg\{
\frac{x^2}{4\tau}-\frac{-x}{2\tau}\log\left( \frac{-x(1-\tau)}{2\tau}\right)
+\frac{-x}{2\tau} \\
&\qquad\qquad+\frac{1}{4\tau}\log^2\left( \frac{-x(1-\tau)}{2\tau}\right)
-\frac{\pi^2}{12}\frac{3-\tau}{(1-\tau)^2}+o(1)\bigg\}.\nonumber
\end{align*}
\end{itemize}
\end{proposition}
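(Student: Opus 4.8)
The plan is to run the expansion off the two explicit descriptions of the put rate function already in hand: the closed form \eqref{eq-BS-rate-p},
\[
\I_{\rm fwd}^{(BS)}(x,\tau,\sigma)=\frac{2}{\sigma^2(1-\tau)^2}\bigl(\tau\,\xi^2\tan^2\xi-(1-\tau)\xi^2+(1-\tau)\xi\tan\xi\bigr),
\]
with $\xi\in(0,\pi/2)$ the solution of \eqref{eq-z-xi}, and the double-layer representation of Proposition~\ref{prop-BS-rate}, namely $\I_{\rm fwd}^{(BS)}=\tfrac{1}{\sigma^2}\inf_{c}\bigl\{\tfrac12 c^2\tau+\tfrac{1}{1-\tau}\mathcal{J}_{\rm BS}(e^{x-c\tau})\bigr\}$. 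The first is convenient for reading off the asymptotics, the second for understanding why the answer splits into two regimes.

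Working from the closed form, I would put \eqref{eq-z-xi} in logarithmic form and substitute $\xi=\tfrac\pi2-\delta$. Since $x<0$, $-(1-\tau)x\gg1$, $\xi\tan\xi>0$ and $\log(\sin 2\xi/2\xi)\le0$, the defining relation can only be satisfied with $\delta\downarrow0$, and to leading order it reads $\log\tfrac{2\delta}{\pi}-\tfrac{\pi\tau}{(1-\tau)\delta}=x+O\bigl(\tfrac{\tau}{1-\tau}+\delta\bigr)$. The two cases of the proposition correspond to the two ways this relation can balance. In case (1) the logarithmic term (coming from the Asian averaging) matches $x$, so $\delta\sim\tfrac\pi2 e^{x}$, and the pole term $\tfrac{\pi\tau}{(1-\tau)\delta}\sim\tfrac{2\tau}{1-\tau}e^{-x}$ (coming from the forward leg) is only a correction precisely when $\tfrac{2\tau}{1-\tau}\ll(-x)e^{x}$; since $(-x)e^{x}\to0$ this forces $\tau\to0$. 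In case (2) the hypothesis $\tfrac{2\tau}{1-\tau}\gg(-x)e^{x}$ makes the pole term control the balance, $\tfrac{\pi\tau}{(1-\tau)\delta}\sim-x$, i.e.\ $\delta\sim\tfrac{\pi\tau}{(1-\tau)(-x)}$, and the logarithm then contributes the $\log\tfrac{(1-\tau)(-x)}{2\tau}$ correction. In each case one (i) refines $\delta$ to the next order by inverting the transcendental relation (in case (2) this is a Lambert--$W$-type inversion, which is what generates the $\log$ and $\log^2$ terms), (ii) substitutes $1/\delta$ and $1/\delta^2$ into the closed form and collects terms, and (iii) reads off the expansion. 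Equivalently, the same computation can be carried out on the double-layer functional: using the small-argument expansion of $\mathcal{J}_{\rm BS}$ near $0$ (obtained from Lemma~\ref{lemma-J} by the same substitution $\xi=\tfrac\pi2-\delta$), the first-order condition in $c$ pins down the optimal $c$ — infinitesimal and negative, of order $\tau e^{-x}$, in case (1), and $c\tau=x+\log\tfrac{(1-\tau)(-x)}{2\tau}+o(1)$ in case (2) — and substituting back reproduces the expansion; this route also makes transparent that the optimal path is a European leg glued to an Asian leg.

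The routine part is the Taylor and iterated-logarithm algebra. The main obstacle is carrying the expansions far enough to pin down the subleading terms to the stated precision — the $O(1)$ constants $-\tfrac{\pi^2}{4}-1$ in \eqref{eq-tau-0-asymp} and $-\tfrac{\pi^2}{12}\tfrac{3-\tau}{(1-\tau)^2}$ in case (2), and in case (2) also the exact coefficients of the $\log$ and $\log^2$ terms. This requires (a) the next-order terms in the $\delta$-expansions of $\xi\tan\xi$ and $\log(\sin 2\xi/2\xi)$ — equivalently, the constant and linear terms in the small-argument expansion of $\mathcal{J}_{\rm BS}$ — and, most delicately, (b) control of the feedback of the first correction to $\delta$ (equivalently, to the critical $c$) through the \emph{quadratic} piece $\tfrac{\pi^2\tau}{4\delta^2}$ (equivalently $\tfrac12 c^2\tau$): there a correction to $1/\delta$ that looks like $o(1)$ is amplified by the large factor $-x$ or $e^{-x}$, so the iterated-logarithm expansion of $\delta$ must be kept one order beyond the naive leading behaviour and one must verify carefully which contributions genuinely collapse into the $o(\cdot)$ remainders. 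Apart from this bookkeeping, everything reduces to a bounded perturbation of the explicit minimiser already produced in Propositions~\ref{prop-BS-rate} and~\ref{prop-f'-c}, while the passage from the put price to its rate function is supplied by Lemma~\ref{lemma-TCT} and Theorem~\ref{thm-otm}.
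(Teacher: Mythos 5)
Your proposal is correct and follows essentially the same route as the paper: substitute $\xi=\tfrac{\pi}{2}-\zeta$ into \eqref{eq-z-xi}, observe that the two hypotheses correspond exactly to whether the logarithmic term or the pole term $\tfrac{2\tau}{1-\tau}\tfrac{\pi}{2\zeta}$ controls the balance, invert the resulting relation for $\tfrac{\pi}{2\zeta}$ (keeping the constant and the iterated logarithm in case (2)), and substitute into the closed form \eqref{eq-BS-rate-p} rewritten in powers of $\tfrac{\pi}{2\zeta}$. You also correctly flag the one genuinely delicate point, namely that corrections to $1/\zeta$ are amplified by the quadratic term $\tau\left(\tfrac{\pi}{2\zeta}\right)^2$, which is precisely why the paper carries the inversion one order further in case (2).
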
 

\begin{remark}
Case (1) corresponds to the left wing asymptotics $(-x) \to \infty$ in 
a region of small $\tau$, below the curves shown in Fig.~\ref{Fig:regions}.
Case (2) corresponds to $\tau \sim O(1)$ and large log-strike $-x \gg 1$.
\end{remark}

\begin{figure}[ht]
\begin{center}
\includegraphics[height=60mm]{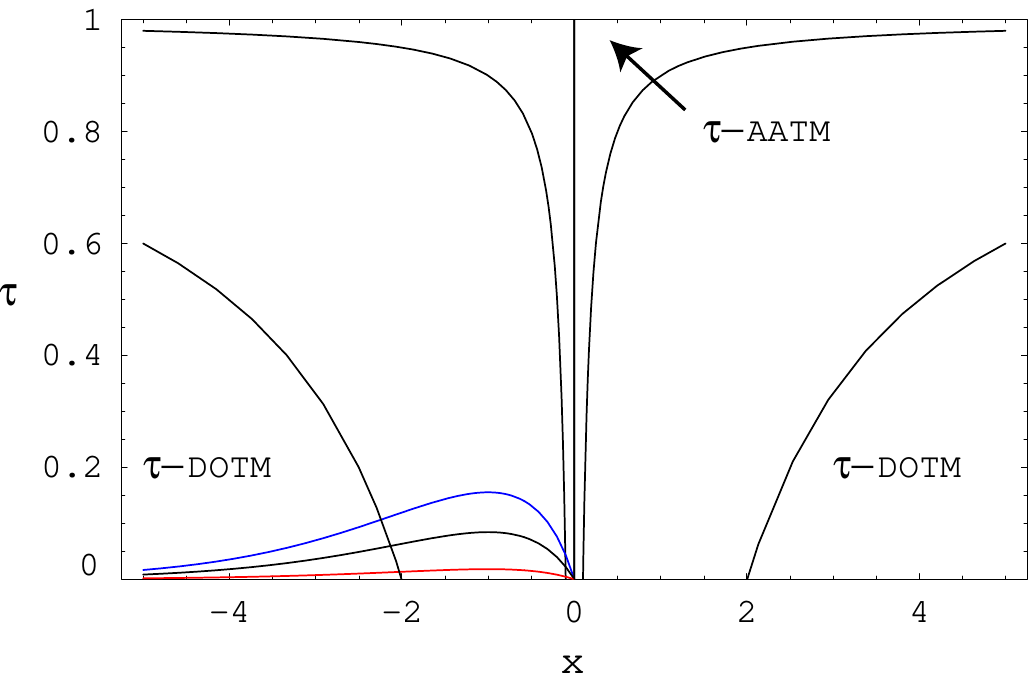}
\end{center}
\caption{
Regions in the $(x,\tau)$ plane with distinct asymptotic expansion.
The interior of the central funnel-shaped region corresponds to the 
$\tau$-AATM region (represented as $(1-\tau)|x| \leq 0.1$).
The region $\tau-$DOTM corresponds to  $(1-\tau)|x|\gg 1$
(represented as $(1-\tau)|x| \geq 2$).
The colored curves at $x<0$ are curves of constant $\kappa$ defined as 
$\frac{2\tau}{1-\tau} = \kappa e^x (-x)$:
$\kappa=1$ (blue), 0.5 (black), 0.1 (red). 
Region (1) of Proposition~\ref{prop:LeftWing} corresponds to values of 
$(x,\tau)$ below 
these curves which are also in the $\tau-$DOTM region, and 
region (2) corresponds to values of 
$(x,\tau)$ above these curves which are also in the $\tau-$DOTM region.}
\label{Fig:regions}
\end{figure}

\begin{proof}
Clearly when $x<0$, from  \eqref{eq-z-xi}  we have $\xi>0$ and 
\begin{equation*}
-x=\frac{2\tau}{1-\tau}\xi\tan \xi+\log\left(\frac{2\xi}{\sin 2\xi}\right).
\end{equation*}
(1) Since $-(1-\tau)x\gg 1$ we know that $\xi\to \frac{\pi}{2}$. Let $\zeta=\frac{\pi}{2}-\xi$ where $\zeta\to 0$. Expand the above equation at $\zeta=0$ we obtain
\begin{equation}\label{eq-x-zeta}
-x=\frac{2\tau}{1-\tau}\frac{\pi}{2\zeta}+\log\left(\frac{\pi}{2\zeta}\right)-\frac{2\tau}{1-\tau}+O(\zeta).
\end{equation}
Now since $\frac{2\tau}{1-\tau}\ll e^x(-x)$, we claim that 
\begin{equation}\label{eq-claim-1}
\frac{2\tau}{1-\tau}\frac{\pi}{2\zeta}\ll \log\left(\frac{\pi}{2\zeta}\right).
\end{equation} 
Otherwise if  $\frac{2\tau}{1-\tau}\frac{\pi}{2\zeta}\gtrsim \log\left(\frac{\pi}{2\zeta}\right)$\footnote{For $a, b>0$, we say $a\gtrsim b$ if there exists a constant $C>0$ such that $a\ge Cb$.},  \eqref{eq-x-zeta} implies that $-x\asymp\frac{2\tau}{1-\tau}\frac{\pi}{2\zeta}$\footnote{We denote by $a\asymp b$ if there exists constant $C_1, C_2>0$ such that $C_1a\le b\le C_2a$.}. Plug this back into the previous inequality we obtain $
 -x\gtrsim\log \left(\frac{1-\tau}{2\tau}(-x) \right)$,
 which conflicts with the assumption $\frac{2\tau}{1-\tau}\ll e^x(-x)$. 
Hence claim \eqref{eq-claim-1} holds. Plug into \eqref{eq-x-zeta}, we have
\begin{equation*}
-x=\log\left(\frac{\pi}{2\zeta}\right)+o(\log({1}/{\zeta})).
\end{equation*}
Hence the asymptotic expansion of the rate function is 
\begin{align*}
\I_{\rm fwd}^{(BS)}(x, \tau, \sigma)
&=\frac{2}{\sigma^2(1-\tau)^2}\left(
\tau\left(\frac{\pi}{2\zeta} \right)^2+(1-3\tau)\frac{\pi}{2\zeta}-\frac{\pi^2}{4}-1 +O(\tau+\zeta)
\right)
\\
&= \frac{2}{\sigma^2(1-\tau)^2}\bigg\{
e^{-2x}\tau+(1-3\tau)e^{-x}-\frac{\pi^2}{4}-1 +o(e^{-2x}\tau+e^{-x})\bigg\}.
\end{align*}
(2) Again we have $\xi\to \frac{\pi}{2}$ and $\zeta=\frac{\pi}{2}-\xi$, $\zeta\to 0$. When $\frac{2\tau}{1-\tau}\gg e^x(-x)$, \eqref{eq-x-zeta} implies that
\begin{equation}\label{eq-claim-2}
\frac{2\tau}{1-\tau}\frac{\pi}{2\zeta}\gg \log\left(\frac{\pi}{2\zeta}\right).
\end{equation} 
The proof of the above claim is similar as in (1) by reversing the signs. 
Hence
\begin{equation*}
-x=\frac{2\tau}{1-\tau}\frac{\pi}{2\zeta}+\log\left(\frac{\pi}{2\zeta}\right)+O(\tau+\zeta).
\end{equation*}

By inverting the series we have
\begin{equation*}
\frac{\pi}{2\zeta}=\frac{-x(1-\tau)}{2\tau}-
\frac{1-\tau}{2\tau}\log\left(\frac{-x(1-\tau)}{2\tau}\right)
+1+\frac{1-\tau}{2\tau}\cdot\frac{\log\left(\frac{-x(1-\tau)}{2\tau}\right)}{-x}+o\left(
\frac{\log\left(\frac{-x(1-\tau)}{2\tau}\right)}{-x}\right).
\end{equation*}
Hence the asymptotic expansion of the rate function is 
\begin{align*}
&\I_{\rm fwd}^{(BS)}(x, \tau, \sigma)
\\
&=\frac{2}{\sigma^2(1-\tau)^2}\left(
\tau\left(\frac{\pi}{2\zeta} \right)^2+(1-3\tau)\frac{\pi}{2\zeta}+\left(\frac{1}{12}(24+\pi^2)\tau-\frac{\pi^2}{4}-1 \right) +O(\zeta)
\right)
\\
& = 
\frac{2}{\sigma^2}\bigg\{
\frac{x^2}{4\tau}-\frac{-x}{2\tau}\log\left( \frac{-x(1-\tau)}{2\tau}\right)+\frac{-x}{2\tau}+\frac{1}{4\tau}\log^2\left( \frac{-x(1-\tau)}{2\tau}\right)
-\frac{\pi^2}{12}\frac{3-\tau}{(1-\tau)^2}+o(1)\bigg\}.
\end{align*}

\end{proof}

\begin{remark}
For a fixed $x$ if $\tau\to0$, it falls into the first case that  $-(1-\tau)x\gg 1$ and $\frac{2\tau}{1-\tau}\ll e^x(-x)$. By taking \eqref{eq-tau-0-asymp} to the limit $\tau\to0$ we obtain
\begin{align*}
&\I_{\rm{BS}}(x, \tau, \sigma)
 = \frac{2}{\sigma^2}\left(
e^{-x}-\frac{\pi^2}{4}-1 +o(1)
\right),
\end{align*}
which agrees with the asymptotic result for a standard Asian option 
with averaging starting from time $0$ (see \cite{PZAsian}, page 8).
\end{remark}

\begin{figure}[ht]
\begin{center}
\includegraphics[height=60mm]{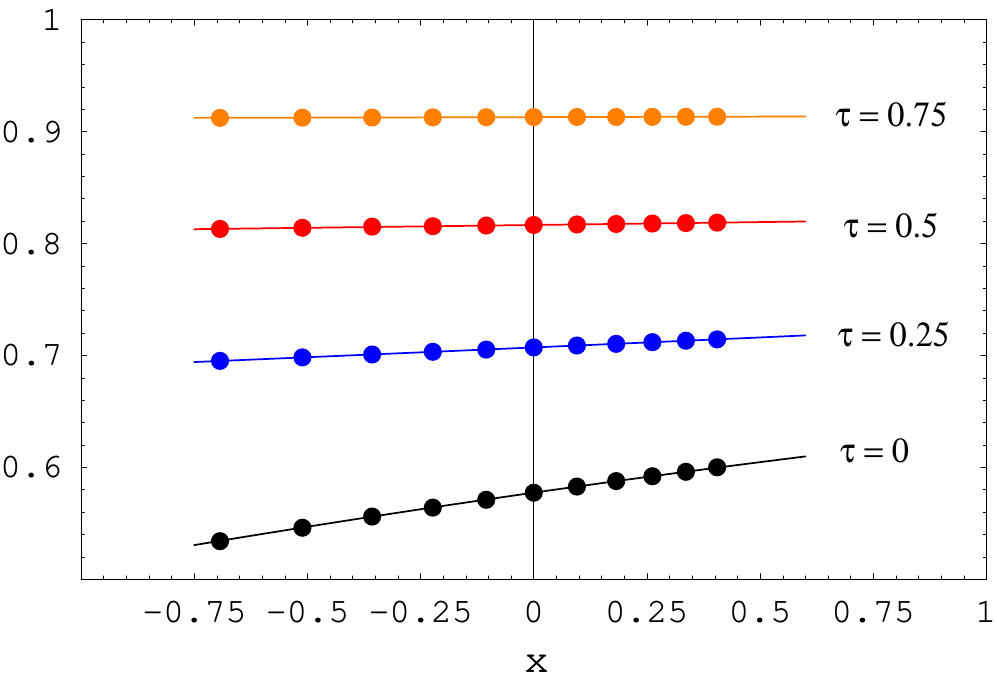}
\end{center}
\caption{The equivalent log-normal volatility $\Sigma_{\rm LN}(x,\tau)/\sigma$ 
of the forward start Asian option in the Black-Scholes model, 
vs $x=\log(K/S_0)$, for $\tau=0, 0.25,
0.5,0.75$ (from bottom to top). The dots show numerical values obtained
from Eq.~(\ref{SigLN}) and the solid curves show the numerical approximation 
obtained keeping three terms in the series expansion approximation 
(\ref{SigLNTaylor}).}
\label{Fig:4}
\end{figure}

\subsection{Equivalent Black--Scholes volatility}
\label{sec:3.4}
It is convenient to introduce the so-called equivalent log-normal 
(Black--Scholes) volatility $\Sigma_{\rm LN}$ of a forward start Asian option. 
This is defined as the constant volatility for which the Black--Scholes price 
of an European (vanilla) option with maturity $T$ and underlying value 
\begin{equation}\label{Afwddef}
A(\tau T,T)=\frac{1}{(1-\tau)T}\int_{\tau T}^{T}\mathbb{E}(S_{t})dt =
\frac{S_0}{(r-q)(1-\tau)T} (e^{(r-q)T} - e^{(r-q)\tau T})
\end{equation}
reproduces the price of the forward start Asian option with parameters
$(S_0, K,T,\tau)$, which we denote by $C(T)$ ($P(T)$ resp.). Analogously we
define the normal (Bachelier) equivalent volatility as that volatility
$\Sigma_N$ for which the Bachelier price of an European option with maturity
$T$  and strike $K$ reproduces the Asian option price.

Take an Asian call option as example. We have
\begin{equation*}
C_\BS(S_0, K, \Sigma_{\rm LN}, T)=C(T),
\end{equation*}
where
\begin{equation*}
C_\BS(S_0, K, \Sigma_{\rm LN}, T)=e^{-rT}\left(A(T) \Phi(d_1)-K\Phi(d_2)\right)\,,
\end{equation*}
with $d_{1,2}=\frac{1}{\Sigma_{\rm LN}\sqrt{T}}
\left(\log (A(\tau T,T)/K)\pm\frac12\Sigma^2_{\rm LN} T \right)$,
where $A(\tau T,T)$ is given in (\ref{Afwddef}) and
\begin{equation*}
C(T)=e^{-rT}\mathbb{E}\left[\left(\frac{1}{(1-\tau)T}\int_{\tau T}^{T}S_{t}dt-K\right)^{+}\right]\,,
\end{equation*}
with $S_t$ following the Black--Scholes model with volatility $\sigma$.

We have the analog of Proposition 18 in \cite{PZAsian}.

\begin{proposition}\label{PropEquivVol}
Assume $r=q=0$ and the stated assumptions on the local volatility function. 

(1) The short-maturity $T\to 0$ limit of the log-normal equivalent 
volatility
of a forward start out-of-the-money Asian option is
\begin{equation}\label{SigLN}
\lim_{T\to 0} \Sigma_{\rm LN}^2(K,S_0,T) = 
\frac{\log^2(K/S_0)}{2\mathcal{I}_{\rm fwd}(K,S_0,\tau)}\,.
\end{equation}
The corresponding result for the normal (Bachelier) equivalent volatility 
is\footnote{We correct here a typo in Eq.~(52) of \cite{PZAsian}, which 
should have $(K-S_0)^2$ in the numerator.}
\begin{equation}\label{SigN}
\lim_{T\to 0} \Sigma_{\rm N}^2(K,S_0,T) = 
\frac{(K-S_0)^2}{2\mathcal{I}_{\rm fwd}(K,S_0,\tau)}\,.
\end{equation}

(2) The short-maturity $T\to 0$ limit of the log-normal equivalent 
volatility of a forward start ATM Asian option is
\begin{equation}\label{SigLNATM}
\lim_{T\to 0} \Sigma_{\rm LN}(S_0,S_0,T) = 
\sigma(S_0) \sqrt{\frac{1+2\tau}{3}}\,.
\end{equation}
\end{proposition}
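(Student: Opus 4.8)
The plan is to read off both equivalent volatilities by matching the short-maturity asymptotics of the forward start Asian price $C(T)$ (respectively $P(T)$), which are already supplied by Theorem~\ref{thm-otm} and Theorem~\ref{thm-atm}, against the corresponding short-maturity expansion of the Black--Scholes (respectively Bachelier) vanilla pricing formula evaluated at the equivalent volatility. Two structural simplifications make this clean: since $r=q=0$, the forward underlying $A(\tau T,T)$ in \eqref{Afwddef} equals $S_0$ for every $T$, so the vanilla benchmark is a plain option with forward $S_0$, strike $K$ and maturity $T$; and since the Black--Scholes and Bachelier prices are strictly increasing in the volatility parameter (positive vega), $\Sigma_{\rm LN}(K,S_0,T)$ and $\Sigma_{\rm N}(K,S_0,T)$ are uniquely determined by their defining matching identities for each fixed $T>0$. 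This argument is the analogue of Proposition~18 of \cite{PZAsian}, with $\mathcal{I}_{\rm fwd}$ replacing the Asian rate function.

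For part~(1), take $K\ne S_0$, and consider first the call (the put is identical since by \eqref{eq-out-put} $P(T)$ has the same rate $\mathcal{I}_{\rm fwd}$, and both $\log^2(K/S_0)$ and $(K-S_0)^2$ are insensitive to the swap of the roles of $S_0$ and $K$). Theorem~\ref{thm-otm} gives $\lim_{T\to0}T\log C(T)=-\mathcal{I}_{\rm fwd}(S_0,K,\tau)$ with $0<\mathcal{I}_{\rm fwd}<\infty$ (it vanishes only at $K=S_0$, by the remark after Proposition~\ref{prop1}). The classical short-maturity expansion of the Black--Scholes call with forward $S_0$, strike $K$, maturity $T$ and volatility $\Sigma$ is $T\log C_{\rm BS}(S_0,K,\Sigma,T)=-\frac{1}{2\Sigma^2}\log^2(K/S_0)+O(T\log(1/T))$, the error being uniform for $\Sigma$ in compact subsets of $(0,\infty)$, and the Bachelier call with the same data satisfies $T\log C_{\rm N}(S_0,K,\Sigma,T)=-\frac{1}{2\Sigma^2}(K-S_0)^2+O(T\log(1/T))$, again uniformly. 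I would then combine these via the matched identity $T_n\log C(T_n)=T_n\log C_{\rm BS}(S_0,K,\Sigma_{\rm LN}(T_n),T_n)$ along any sequence $T_n\to0$: if $\Sigma_{\rm LN}(T_n)\to0$ the right side tends to $-\infty$, and if $\Sigma_{\rm LN}(T_n)\to\infty$ it tends to $0$, both contradicting that the left side converges to $-\mathcal{I}_{\rm fwd}\in(-\infty,0)$. Hence $\{\Sigma_{\rm LN}(T):T\le T_0\}$ is a compact subset of $(0,\infty)$, and passing to the limit in the matched identity shows that every subsequential limit $\Sigma_*$ satisfies $\frac{1}{2\Sigma_*^2}\log^2(K/S_0)=\mathcal{I}_{\rm fwd}(S_0,K,\tau)$, which has a unique positive solution; therefore $\Sigma_{\rm LN}^2(T)\to\log^2(K/S_0)/(2\mathcal{I}_{\rm fwd})$, proving \eqref{SigLN}, and repeating with $C_{\rm N}$ in place of $C_{\rm BS}$ yields \eqref{SigN}.

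For part~(2), take $K=S_0$ and invoke Theorem~\ref{thm-atm}: under its hypotheses $C(T)=\sigma(S_0)S_0\sqrt{(1+2\tau)/(6\pi)}\,\sqrt T+o(\sqrt T)$. The ATM Black--Scholes call with forward $S_0$ and volatility $\Sigma$ is $C_{\rm BS}(S_0,S_0,\Sigma,T)=S_0\big(2\Phi(\tfrac12\Sigma\sqrt T)-1\big)=\frac{S_0\Sigma}{\sqrt{2\pi}}\sqrt T+O(T^{3/2})$. As in part~(1), the matched identity for the $\sqrt T$ coefficients forces $\Sigma_{\rm LN}(T)$ to stay in a compact subset of $(0,\infty)$ and pins every subsequential limit $\Sigma_*$ down through $\frac{S_0\Sigma_*}{\sqrt{2\pi}}=\sigma(S_0)S_0\sqrt{(1+2\tau)/(6\pi)}$, i.e.\ $\Sigma_*=\sigma(S_0)\sqrt{(1+2\tau)/3}$, which is \eqref{SigLNATM}.

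The step I expect to be the main obstacle is making the ``match the leading order, then deduce convergence'' argument fully rigorous rather than heuristic: one must establish the a priori compactness of $\{\Sigma_{\rm LN}(T):T\le T_0\}$ and $\{\Sigma_{\rm N}(T):T\le T_0\}$ inside $(0,\infty)$, and use the uniformity of the Black--Scholes and Bachelier short-maturity expansions over compact ranges of the volatility parameter, so that the subleading corrections stay negligible even when $\Sigma$ varies with $T$. Both ingredients come from the explicit vanilla pricing formulas together with positivity of vega; granting them, the statement reduces to direct substitution into Theorem~\ref{thm-otm} and Theorem~\ref{thm-atm}.
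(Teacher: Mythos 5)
Your proposal is correct and is precisely the argument the paper has in mind: the proof is omitted in the text with a reference to Proposition 18 of \cite{PZAsian}, which proceeds by exactly this matching of the short-maturity vanilla (Black--Scholes/Bachelier) asymptotics against Theorems \ref{thm-otm} and \ref{thm-atm}, using positivity of vega for uniqueness and the compactness/subsequence argument to justify passing to the limit. No gaps; the only ingredients you take as given (uniform-on-compacts short-maturity expansions of the vanilla formulas and strict positivity and finiteness of $\mathcal{I}_{\rm fwd}$ for $K\neq S_0$) are standard and consistent with the paper's setup.
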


\begin{remark}
Substituting here the expansion of the forward start rate function in the
Black-Scholes model
and expanding in powers of $x=\log(K/S_0)$ we get
\begin{equation}\label{SigLNTaylor}
\lim_{T\to 0} \Sigma_{\rm LN}(K,S_0,T) =\sigma \sqrt{\frac{1+2\tau}{3}} 
\left( 1 + \frac{(1-\tau)^2}{10(1+2\tau)^2} x 
- \frac{(1-\tau)^3 (23-143\tau)}{2100(1+2\tau)^4} x^2 + O(x^3) \right)\,.
\end{equation}
In the limit $\tau \to 0$ this agrees with the result in
Eq.~(55) in \cite{PZAsian} for the log-normal implied volatility of an Asian
option with averaging starting at time 0 in the Black-Scholes model. 
The leading term gives the ATM
volatility, the $O(x)$ term gives the ATM skew, and the
$O(x^2)$ term gives the ATM curvature of the Asian implied volatility smile. 
\end{remark}

\begin{proof}[Proof of Proposition~\ref{PropEquivVol}]
The proof is analogous to the proof of Proposition 18 in \cite{PZAsian}
and is omitted here.
\end{proof}

\begin{remark}
The result in (\ref{SigLNATM}) corresponds to the log-normal implied
volatility of an at-the-money forward start Asian option paying 
$(A_{[T_1,T_2]} - K)^+$ at time $T_2$.  
From this result we observe that the price of this option is the same as 
that of an European option\footnote{This result is informally known to quants
as a general rule of thumb for pricing forward start Asian options \cite{thumb}.} 
with variance
$\sigma^2 (T_1 + \frac13 (T_2- T_1))$. The variance can be also written as
\begin{equation}
\sigma^2 \left(\tau T + \frac13 (1-\tau) T \right) = \frac13 (1 + 2\tau) \sigma^2 T
= \Sigma_{\rm LN}^2(S_0,S_0) T\,,
\end{equation}
we get that this is also equal to the price of an European option with
volatility $\Sigma_{\rm LN}(S_0,S_0) := \lim_{T\to 0} \Sigma_{\rm LN}(S_0,S_0,T)$ 
and maturity $T$. 
\end{remark}

%%%%%%%%% Section 4 Floating strike Asian option %%%%

\section{Floating strike Asian options}\label{sec-float}

An important class of Asian options 
that are encountered in practice are the floating Asian 
options. The strike of these options is defined with respect to the average
of the asset price over a forward period $[T_1,T_2]$ with $T_1 < T_2$.

Such an option pays at time $T_2$ the payoff of the form
\begin{eqnarray}
\mbox{Call} &:& \left(
\kappa S_{T_2} - \frac{1}{T_2-T_1} \int_{T_1}^{T_2} S_t dt \right)^+\\
\mbox{Put} &:& \left(
\frac{1}{T_2-T_1} \int_{T_1}^{T_2} S_t dt - \kappa S_{T_2}\right)^+\,.
\end{eqnarray}
Here $\kappa>0$ is the so-called floating strike.

%Denoting $t$ the valuation time, we have initially $t < T_1 < T_2$. 
%As the valuation time $t$ enters into the averaging period, the
%integral over the asset price includes a deterministic contribution and a
%stochastic contribution 
%\begin{equation}
%\int_{T_1}^{T_2} S_u du = \int_{T_1}^t S_u du + \int_t^{T_2} S_u du\,.
%\end{equation}
%It's a bit confusing to me here if evaluation time is the same as initial 
%purchasing time, since we use the same notation for them. 
%I suggest we merge this paragraph into the next one

Denoting $t$ the valuation time, we distinguish two cases for $t$:

(i) $t < T_1$. The entire averaging period is in the future. The option
is forward start and this case
is studied in Sec.~\ref{sec:4.1}.

(ii) $T_1 < t < T_2$. Namely the valuation time $t$ enters into the averaging 
period, and the
integral over the asset price includes a deterministic contribution and a
stochastic contribution 
\begin{equation}
\int_{T_1}^{T_2} S_u du = \int_{T_1}^t S_u du + \int_t^{T_2} S_u du\,.
\end{equation}
Part of the averaging period is deterministic,
and can be absorbed into a constant strike in the payoff. This corresponds
to so-called ``seasoned Asian option'', or generalized Asian options.
This case will be treated below in Sec.~\ref{sec:4.2}.

\subsection{Forward start floating strike Asian options}
\label{sec:4.1}

We consider in this section the forward start floating
strike Asian options with $t < T_1 < T_2$, and take $t=0$ for simplicity.
Let $T_2 = T >0$, $T_1 = \tau T$ with $\tau\in(0,1)$. 
The prices of these options are given by expectations in the risk-neutral
measure
\begin{align*}
&C_f(\kappa, T,\tau )=e^{-rT}\E\bigg[\left(\kappa S_T-\frac{1}{(1-\tau)T}\int_{\tau T}^TS_tdt \right)^+ \bigg],\\
&P_f(\kappa, T,\tau )=e^{-rT}\E\bigg[\left(\frac{1}{(1-\tau)T}\int_{\tau T}^TS_tdt-\kappa S_T \right)^+ \bigg].
\end{align*}
We will assume that the underlying asset price $S_t$ follows the local volatility 
model \eqref{eq-S-t} and \eqref{eq-S-t-condition}.

The floating strike call Asian option is in-the-money (floating strike
put Asian option is
out-of-the-money) when $\kappa S_0 e^{(r-q)T} > A(\tau T,T)$
with $A(\tau T,T) = \frac{1}{(r-q)(1-\tau)T}(e^{(r-q)T} - e^{(r-q)\tau T})$, and 
they are both at-the-money when $\kappa S_0 e^{(r-q)T} = A(\tau T,T)$.
In the $T\to 0$ limit we have $\lim_{T\to 0} A(\tau T,T) = S_0$ and we get that
for $\kappa<1$, the call option is out-of-the-money and the put option is 
in-the-money; when  $\kappa=1$, both the call and put options are at-the-money; 
when $\kappa>1$, the call option is in-the-money and the put option is 
out-of-the-money.

Similar to the case of the fixed strike Asian options, as the maturity $T\to0$, 
the prices of out-of-the-money floating-strike forward start Asian options 
decrease to $0$ exponentially fast, and the rate can be captured using large 
deviation theory. The prices of at-the-money options, on the other hand, 
decrease at the speed of $\sqrt{T}$, and the rates can be obtained by linear 
approximation with a Gaussian random variable. Also as before, the estimates 
on the prices of in-the-money floating-strike forward start Asian options 
follow easily from put-call parity. In the following theorem, we study these 
asymptotic estimates in detail.

\begin{theorem}\label{thm-f-otm}
Assume the asset price $S_t$ follows the local volatility model as in 
\eqref{eq-S-t} and \eqref{eq-S-t-condition}. 
Then we have the following short maturity asymptotic estimates for the 
prices of floating-strike forward start Asian options as $T\to0$. 

(1) When $\kappa<1$, we have
\begin{equation}\label{eq-f-out-call}
\lim_{T\to0}T\log C_f(\kappa, T,\tau)=-{\cal I}_{f}(S_0,\kappa, \tau ),
\end{equation}
and
\begin{equation}\label{eq-f-in-put}
P_f(\kappa, T,\tau)=S_0(1-\kappa)-S_0T\left(\frac12(r+q)-\kappa q-\frac12(r-q)\tau \right)+O(T^2)\,,
\end{equation}
where
\begin{equation*}
{\cal I}_{f}(S_0,\kappa, \tau)=
\inf_{
\substack{\frac{1}{1-\tau}\int_{\tau}^{1}e^{g(t)}dt=\kappa e^{g(1)}
\\
g(0)=\log S_{0}, g\in\mathcal{AC}[0,1]}}\frac{1}{2}\int_{0}^{1}\left(\frac{g'(t)}{\sigma(e^{g(t)})}\right)^{2}dt.
\end{equation*}

(2) When $\kappa>1$, we have
\begin{equation}\label{eq-f-in-call}
C_f(\kappa, T,\tau)=S_0(\kappa-1)+S_0T\left(\frac12(r+q)-\kappa q-\frac12(r-q)\tau \right)+O(T^2)\,,
\end{equation}
and
\begin{equation}\label{eq-f-out-put}
\lim_{T\to0}T\log P_f(\kappa, T,\tau)=-{\cal I}_{f}(S_0, \kappa, \tau ).
\end{equation}

(3)  When $\kappa=1$, we have
\begin{equation}\label{eq-atm}
\lim_{T\rightarrow 0}\frac{1}{\sqrt{T}}C_f(\kappa, T,\tau)
=\lim_{T\rightarrow 0}\frac{1}{\sqrt{T}}P_f(\kappa, T,\tau)
=\frac{\sigma(S_{0})S_{0}}{\sqrt{6\pi}} \sqrt{1-\tau}\,.
\end{equation}
\end{theorem}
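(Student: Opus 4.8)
The plan is to follow the same three‑pronged structure used for the fixed strike case (Theorem~\ref{thm-otm}, the Corollary after it, and Theorem~\ref{thm-atm}), since the payoffs of the floating strike options differ only by replacing the constant strike $K$ with the stochastic quantity $\kappa S_T$. First I would establish the analogue of Lemma~\ref{lemma-TCT}: that
\[
\lim_{T\to0}T\log C_f(\kappa,T,\tau)=\lim_{T\to0}T\log\p\!\left(\frac{1}{(1-\tau)T}\int_{\tau T}^T S_t\,dt\le \kappa S_T\right),
\]
and similarly for $P_f$ with the inequality reversed. The H\"older/convexity argument from Lemma~\ref{lemma-TCT} carries over essentially verbatim, the only new ingredient being a moment bound for $\big|\frac{1}{(1-\tau)T}\int_{\tau T}^T S_t\,dt - \kappa S_T\big|^p$; this again follows from $\E(S_t^p)\le S_0^p e^{(p(r-q)+\frac12 p(p-1)\overline\sigma^2)t}$ together with Minkowski's inequality.

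For parts (1) and (2) in the out‑of‑the‑money cases ($\kappa<1$ for the call, $\kappa>1$ for the put), I would then invoke the sample‑path large deviation principle of Varadhan \cite{Varadhan67} for $X_{tT}=\log S_{tT}$ on $L_\infty([0,1],\R)$, exactly as in the proof of Theorem~\ref{thm-otm}. The key observation is that the event is $\{\frac{1}{1-\tau}\int_\tau^1 e^{g(t)}dt \le \kappa e^{g(1)}\}$ (resp.\ $\ge$), and the map $g\mapsto \frac{1}{1-\tau}\int_\tau^1 e^{g(t)}dt - \kappa e^{g(1)}$ from $L_\infty([0,1],\R)$ to $\R$ is continuous, so the contraction principle \cite{DZ} applies and yields the rate function ${\cal I}_f(S_0,\kappa,\tau)$ as stated. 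One subtlety to address is that the constraint set $\{g\in\mathcal{AC}[0,1]:g(0)=\log S_0,\ \frac{1}{1-\tau}\int_\tau^1 e^{g}=\kappa e^{g(1)}\}$ is nonempty and the infimum is attained; this can be checked by a direct construction (e.g.\ a suitable linear path) and the usual lower‑semicontinuity/coercivity argument for the energy functional, using \eqref{eq-S-t-condition}.

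The in‑the‑money estimates \eqref{eq-f-in-put} and \eqref{eq-f-in-call} then follow from put–call parity: one computes
\[
C_f(\kappa,T,\tau)-P_f(\kappa,T,\tau)=e^{-rT}\E\!\left[\kappa S_T-\frac{1}{(1-\tau)T}\int_{\tau T}^T S_t\,dt\right]
=e^{-rT}\!\left(\kappa S_0 e^{(r-q)T}-\frac{S_0(e^{(r-q)T}-e^{(r-q)\tau T})}{(r-q)(1-\tau)T}\right),
\]
Taylor‑expanding in $T$ gives $S_0(\kappa-1)+S_0 T(\frac12(r+q)-\kappa q-\frac12(r-q)\tau)+O(T^2)$, and since the out‑of‑the‑money side decays faster than any power of $T$ by parts (1)/(2), we read off the in‑the‑money asymptotics.

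For part (3), the at‑the‑money case $\kappa=1$, I would mirror the proof of Theorem~\ref{thm-atm}: replace $S_t$ by the martingale $X_t=e^{-(r-q)t}S_t$ (incurring an $O(T)$ error), then by the Gaussian surrogate $\hat X_t=S_0+\sigma(S_0)S_0 W_t$ (again $O(T)$ via Doob's inequality and the $L^2$ bound $\E[(X_T-\hat X_T)^2]\le MT^2$ from \cite{PZAsian}). It remains to compute, for the Gaussian process,
\[
\frac{1}{(1-\tau)T}\int_{\tau T}^T \hat X_t\,dt-\hat X_T=\sigma(S_0)S_0\left(\frac{1}{(1-\tau)T}\int_{\tau T}^T W_t\,dt-W_T\right),
\]
which is centered Gaussian; its variance is $\sigma(S_0)^2 S_0^2 \cdot (1-\tau)T$ (the factor $1-\tau$ arising from $\E[(\frac{1}{(1-\tau)T}\int_{\tau T}^T W_t dt - W_T)^2]=(1-\tau)T$ after a short computation, e.g.\ by writing $\int_{\tau T}^T W_t dt = (T-\tau T)W_{\tau T}+\int_{\tau T}^T (W_t-W_{\tau T})dt$ and using independence of increments). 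Then $\E[(\,\cdot\,)^+]=\sigma(S_0)S_0\sqrt{(1-\tau)T}\,\E[Z\mathbbm 1_{Z>0}]=\sigma(S_0)S_0\sqrt{(1-\tau)T/(2\pi)}$, giving the claimed constant $\frac{\sigma(S_0)S_0}{\sqrt{6\pi}}\sqrt{1-\tau}$; wait — one must double‑check the numerical constant, since $\E[Z\mathbbm 1_{Z>0}]=\frac{1}{\sqrt{2\pi}}$ would give $\frac{1}{\sqrt{2\pi}}$ not $\frac{1}{\sqrt{6\pi}}$, so the variance must actually be $\frac{1-\tau}{3}T$ rather than $(1-\tau)T$; recomputing the variance of $\frac{1}{(1-\tau)T}\int_{\tau T}^T W_t dt - W_T$ carefully is the one genuinely delicate arithmetic point, and I expect that to be the main place to be careful. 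The put option is handled identically by symmetry of the Gaussian.
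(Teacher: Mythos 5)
Your proposal follows essentially the same route as the paper's proof: the H\"older/moment-bound reduction to a tail probability as in Lemma~\ref{lemma-TCT}, the sample-path LDP plus contraction principle applied to the continuous map $g\mapsto \frac{1}{1-\tau}\int_\tau^1 e^{g}-\kappa e^{g(1)}$, put--call parity (with the correct Taylor expansion) for the in-the-money sides, and the Gaussian surrogate $\hat X_t=S_0+\sigma(S_0)S_0W_t$ for the ATM case. The one arithmetic point you left open does resolve as you suspected: with $L=(1-\tau)T$ one has, by Fubini, $W_T-\frac{1}{L}\int_{\tau T}^T W_t\,dt=\frac{1}{L}\int_{\tau T}^T(s-\tau T)\,dW_s$, whose variance is $\frac{1}{L^2}\int_{\tau T}^T(s-\tau T)^2\,ds=\frac{L}{3}=\frac{1}{3}(1-\tau)T$, which yields exactly the constant $\frac{\sigma(S_0)S_0}{\sqrt{6\pi}}\sqrt{1-\tau}$ in \eqref{eq-atm}.
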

\begin{proof}
(1) Following the same argument as in Lemma \ref{lemma-TCT}, we immediately obtain that
\begin{equation*}
\lim_{T\rightarrow 0}T\log C_f(\kappa, T,\tau)
=\lim_{T\rightarrow 0}T\log\mathbb{P}\left(\kappa S_T\ge\frac{1}{1-\tau}\int_{\tau}^{1}S_{tT}dt\right).
\end{equation*}
Then by applying the sample path large deviation of $\p(S_{t\cdot}\in\cdot, t\in [0,1])$ on $L_\infty[0,1]$ and the contraction principle, we obtain \eqref{eq-f-out-call}.
To see \eqref{eq-f-in-put}, we just need to use put-call parity
\begin{align*}
C_f(\kappa, T,\tau)&-P_f(\kappa, T,\tau)=e^{-rT}\left(\kappa \E(S_T)-\frac{1}{(1-\tau)T}\int_{\tau T}^T\E(S_t)dt \right)\\
&=e^{-rT}S_0\left(\kappa e^{(r-q)T}-e^{(r-q)\tau T}\left(1+\frac12(r-q)(1-\tau)T+O(T^2)\right) \right)\\
&=S_0(\kappa-1)+S_0T\left(\frac12(r+q)-\kappa q-\frac12(r-q)\tau \right)+O(T^2).
\end{align*}
From \eqref{eq-f-out-call} we know that 
$C_f(\kappa, T,\tau)=O(e^{-{\cal{I}}_f/T})=o(T^2)$ when $\kappa<1$, then 
\eqref{eq-f-out-put} follows immediately.\\
(2) We can easily obtain \eqref{eq-f-in-call} and \eqref{eq-f-out-put} using the same arguments as in (1).\\
(3) Following similar arguments as in Theorem \ref{thm-atm}, we have
\begin{equation*}
\left|C_f(\kappa, T,\tau)-\mathbb{E}\left[\left(\kappa \hat{S}_T-\frac{1}{1-\tau}\int_{\tau }^{1}\hat{S}_{tT}dt\right)^{+}\right]\right|
=O(T),
\end{equation*}
where $\hat{S}_{t}$ is a Gaussian process given by $\hat{S}_t=S_{0}+\sigma(S_{0})S_{0}W_{t}$.  When $\kappa=1$, since $\hat{S}_T-\frac{1}{1-\tau}\int_{\tau }^{1}\hat{S}_{tT}dt$ is a normal random variable with mean zero and  variance
\begin{align*}
&\sigma(S_{0})^{2}S_{0}^{2}\,
\mathbb{E}\left[\left(W_T-\frac{1}{1-\tau}\int_{\tau}^{1}W_{tT}dt\right)^{2}\right]
=\frac{1}{3}(1-\tau) \sigma(S_{0})^{2}S_{0}^{2}T.
\nonumber
\end{align*}

Hence we have
\begin{equation*}
\mathbb{E}\left[\left(\hat{S}_T-\frac{1}{1-\tau}\int_{\tau }^{1}\hat{S}_{tT}dt\right)^{+}\right]
=\frac{1}{\sqrt{3}} \sqrt{1-\tau}
\sigma(S_{0})S_{0}\sqrt{T}\,\mathbb{E}[Z\mathbbm{1}_{Z>0}],
\end{equation*}
where $Z$ is a standard normal random variable. We then obtain \eqref{eq-atm}.
\end{proof}

In the rest of this section we further investigate the variational problems of the rate functions.
\begin{proposition}\label{prop-f-rate}
Assume the asset price $S_t$ follows the local volatility model as in \eqref{eq-S-t} and \eqref{eq-S-t-condition}.  Consider an out-of-the-money floating-strike forward start Asian option and let ${\cal I}_{f}(S_0,\kappa, \tau)$ be the rate function as in Theorem \ref{thm-f-otm}. We have
\begin{equation}\label{eq-rate-float-s}
{\cal I}_{f}(S_0,\kappa, \tau)= \inf_{c\in \mathbb{R}}
\left\{ \frac12 c^2 \tau + \frac{1}{1-\tau}
{\cal I}\left(S_0 e^{F^{-1}(c\tau)}, \kappa \right) \right\},
\end{equation}
where $F(\cdot) = \int_0^{\cdot} \frac{dz}{\sigma(S_0 e^z)}$ and
\begin{equation}\label{eq-cal-I}
{\cal I}\left(x, \kappa \right)  = \inf_{
\substack{\int_{0}^{1}e^{\varphi(u)}du=\kappa e^{\varphi(1)}
\\
\varphi(0)=0, \varphi\in\mathcal{AC}[0,1]}
}\left\{ \frac12
\int_0^1  \left(\frac{\varphi'(u)}{\sigma(xe^{\varphi(u)})}\right)^2du\right\}.
\end{equation}
\end{proposition}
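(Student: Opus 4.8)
The plan is to follow essentially verbatim the argument used in the proof of Proposition~\ref{prop1}, replacing the fixed-strike constraint $\frac{1}{1-\tau}\int_\tau^1 e^{g(t)}dt = K/S_0$ by the floating-strike constraint $\frac{1}{1-\tau}\int_\tau^1 e^{g(t)}dt = \kappa e^{g(1)}$. The starting point is the variational characterization of ${\cal I}_f(S_0,\kappa,\tau)$ from Theorem~\ref{thm-f-otm}, namely the infimum of $\frac12\int_0^1 (g'(t)/\sigma(e^{g(t)}))^2 dt$ over $g\in\mathcal{AC}[0,1]$ with $g(0)=\log S_0$ subject to that floating constraint. As before I would rescale by setting $f(t) = g(t) - \log S_0$, so $f(0)=0$ and the constraint reads $\frac{1}{1-\tau}\int_\tau^1 e^{f(t)}dt = \kappa e^{f(1)}$.

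The key observation is that on the interval $[0,\tau]$ the constraint does not involve $f$ at all (the integral runs only over $[\tau,1]$, and the floating strike term $\kappa e^{f(1)}$ depends only on the endpoint value), so on $[0,\tau]$ the Euler--Lagrange equation is unconstrained and gives, exactly as in Part~1 of the proof of Proposition~\ref{prop1}, that $f'(t)/\sigma(S_0 e^{f(t)})$ is a constant $c$, hence $f(t) = F^{-1}(ct)$ on $[0,\tau]$ with energy $\frac12 c^2\tau$, where $F$ is as in \eqref{eq-F}. Then I would reparametrize the interval $[\tau,1]$ by $t = \tau + u(1-\tau)$, $u\in[0,1]$, and center by writing $f(\tau + u(1-\tau)) = F^{-1}(c\tau) + \varphi(u)$, so that $\varphi(0)=0$ and $dt=(1-\tau)du$, $f'(t) = \frac{1}{1-\tau}\varphi'(u)$. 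Under this change the constraint $\frac{1}{1-\tau}\int_\tau^1 e^{f(t)}dt = \kappa e^{f(1)}$ becomes $\int_0^1 e^{F^{-1}(c\tau)}e^{\varphi(u)}du = \kappa e^{F^{-1}(c\tau)}e^{\varphi(1)}$, and the common factor $e^{F^{-1}(c\tau)}$ cancels, leaving exactly $\int_0^1 e^{\varphi(u)}du = \kappa e^{\varphi(1)}$ — the constraint defining ${\cal I}(x,\kappa)$ in \eqref{eq-cal-I}. The energy on $[\tau,1]$ is $\frac{1}{1-\tau}\cdot\frac12\int_0^1 (\varphi'(u)/\sigma(S_0 e^{F^{-1}(c\tau)}e^{\varphi(u)}))^2 du$; absorbing $F^{-1}(c\tau)$ into a redefinition of the base point $S_0 \mapsto S_0 e^{F^{-1}(c\tau)}$ identifies this infimum over admissible $\varphi$ with $\frac{1}{1-\tau}{\cal I}(S_0 e^{F^{-1}(c\tau)},\kappa)$.

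Putting the two pieces together gives, for every $c\in\mathbb{R}$, the upper bound ${\cal I}_f(S_0,\kappa,\tau) \le \frac12 c^2\tau + \frac{1}{1-\tau}{\cal I}(S_0 e^{F^{-1}(c\tau)},\kappa)$, hence ${\cal I}_f(S_0,\kappa,\tau)\le \inf_c\{\cdots\}$. For the reverse inequality I would argue exactly as in Proposition~\ref{prop1}: given any admissible $f$ for the original problem, its restriction to $[0,\tau]$ has energy at least $\frac12 c^2\tau$ with $c := F(f(\tau))/\tau$ (by Jensen / the fact that the linear-in-$F$ path is the minimizer of the unconstrained piece with fixed endpoints $f(0)=0$, $f(\tau)$), and its restriction to $[\tau,1]$, after the same reparametrization, is admissible for ${\cal I}(S_0 e^{F^{-1}(c\tau)},\kappa)$, so its energy is at least $\frac{1}{1-\tau}{\cal I}(S_0 e^{F^{-1}(c\tau)},\kappa)$; taking the infimum over $c$ yields ${\cal I}_f(S_0,\kappa,\tau)\ge \inf_c\{\cdots\}$. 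Alternatively one can run the $\varepsilon$-argument from the proof of Proposition~\ref{prop1} verbatim. The only point requiring a little care — and the main (minor) obstacle — is checking that the cancellation of $e^{F^{-1}(c\tau)}$ genuinely happens on \emph{both} sides of the floating constraint, i.e. that the endpoint term $\kappa e^{f(1)}$ also picks up the same prefactor; this is immediate from \eqref{eq-varphi} since $f(1) = F^{-1}(c\tau) + \varphi(1)$, so $e^{f(1)} = e^{F^{-1}(c\tau)}e^{\varphi(1)}$, confirming that the rescaled problem on $[\tau,1]$ is exactly \eqref{eq-cal-I}. With this verified, the proof concludes as above.
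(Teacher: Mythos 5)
Your proposal is correct and follows essentially the same route as the paper, which simply invokes the Lagrange-multiplier/path-splitting argument of Proposition~\ref{prop1} with the fixed-strike constraint replaced by the floating one; your explicit check that the factor $e^{F^{-1}(c\tau)}$ cancels on both sides of the constraint (including the endpoint term $\kappa e^{f(1)}$) is exactly the point that makes the reduction go through. The Jensen-type lower bound on the $[0,\tau]$ piece is a slightly more explicit justification of the reverse inequality than the paper gives, but it is the same argument in substance.
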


\begin{proof}
Recall 
\begin{equation*}
{\cal I}_{f}(S_0,\kappa, \tau)=
\inf_{
\substack{\frac{1}{1-\tau}\int_{\tau}^{1}e^{g(t)}dt=\kappa e^{g(1)}
\\
g(0)=\log S_{0}, g\in\mathcal{AC}[0,1]}}\frac{1}{2}\int_{0}^{1}\left(\frac{g'(t)}{\sigma(e^{g(t)})}\right)^{2}dt.
\end{equation*}
If we consider function $f\in\mathcal{AC}[0,1]$ such that $f(t)=g(t)-\log S_0$. Then we can rewrite ${\cal I}_{f}(S_0,\kappa, \tau)$ as
\begin{equation*}
{\cal I}_{f}(S_0,\kappa, \tau)=
\inf_{
\substack{\frac{1}{1-\tau}\int_{\tau}^{1}e^{f(t)}dt=\kappa e^{f(1)}
\\
f(0)=0, f\in\mathcal{AC}[0,1]}}\frac{1}{2}\int_{0}^{1}\left(\frac{f'(t)}{\sigma(S_0e^{f(t)})}\right)^{2}dt.
\end{equation*}
Similarly as in the proof of Proposition \ref{prop1}, by using Lagrange 
multiplier we obtain the optimal path $f(t)$ comes from the family of 
absolutely continuous paths $\{f_c(t)\}_{c\in\R}$ given by
\begin{equation*}
f_c(t)=
\begin{cases}
F^{-1}(ct) & 0<t<\tau\\
F^{-1}(c\tau)+\varphi_c\left(\frac{t-\tau}{1-\tau} \right) & \tau\le t<1
\end{cases}\,,
\end{equation*}
where $\varphi_c(u)$, $0<u<1$ is the argmin of ${\cal I}\left(S_0 e^{F^{-1}(c\tau)}, \kappa \right)$. Here ${\cal I}\left(x, \kappa \right)$ is given as in \eqref{eq-cal-I}
The energy associated to each $f_c(t)$ is given by
\begin{equation*}
 \frac12
\int_0^1  \left(\frac{f_c'(t)}{\sigma(S_0e^{f_c(t)})}\right)^2dt = \frac12 c^2 \tau + \frac{1}{1-\tau}
{\cal I}\left(S_0 e^{F^{-1}(c\tau)}, \kappa \right).
\end{equation*}
Hence we obtain  the rate function
\begin{equation*}
{\cal I}_{f}(S_0,\kappa, \tau)=\inf_{c\in\R}{\cal I}_c=\inf_{c\in\R}\bigg\{ \frac12 c^2 \tau + \frac{1}{1-\tau}
{\cal I}\left(S_0 e^{F^{-1}(c\tau)}, \kappa \right)\bigg\}.
\end{equation*}
and $c$ is chosen to minimize ${\cal I}_c$.
\end{proof}

Under the Black-Scholes model the above result can be simplified significantly.

\begin{proposition}\label{prop-float-s}
Assume the asset price $S_t$ follows the Black-Scholes model, namely $\sigma(\cdot)$ in \eqref{eq-S-t} in a constant $\sigma>0$. Then the rate function of a out-of-the-money floating-strike forward start Asian option is given by
\begin{equation}\label{eq-J-float}
{\cal I}_{f}^{(BS)}(\kappa, \tau,\sigma)=
\frac{\mathcal{J}_{\rm BS}(\kappa)}{(1-\tau)\sigma^2}\,,
\end{equation}
where $\mathcal{J}_{\rm BS}(\cdot)$ is given by \eqref{eq-J-x} and is  
given explicitly in Lemma \ref{lemma-J}.
\end{proposition}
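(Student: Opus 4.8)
The plan is to specialize the reduction of Proposition~\ref{prop-f-rate} to constant volatility and then discharge the remaining optimization. Setting $\sigma(\cdot)\equiv\sigma$ in \eqref{eq-cal-I}, the factor $\sigma(x e^{\varphi(u)})$ becomes the constant $\sigma$, so the inner variational problem loses all dependence on $x$: we get ${\cal I}(x,\kappa)=\tfrac{1}{\sigma^2}{\cal I}_0(\kappa)$ with
\[
{\cal I}_0(\kappa):=\inf_{\substack{\int_0^1 e^{\varphi(u)}\,du=\kappa e^{\varphi(1)}\\ \varphi(0)=0,\ \varphi\in\mathcal{AC}[0,1]}}\frac12\int_0^1 \bigl(\varphi'(u)\bigr)^2\,du .
\]
Since in the Black--Scholes model $F(z)=z/\sigma$, substituting into \eqref{eq-rate-float-s} turns the outer optimization into $\inf_{c\in\mathbb{R}}\bigl\{\tfrac12 c^2\tau+\tfrac{1}{(1-\tau)\sigma^2}{\cal I}_0(\kappa)\bigr\}$; as $\tau\in(0,1)$ and the second term does not depend on $c$, the infimum is attained at $c=0$, giving ${\cal I}_f^{(BS)}(\kappa,\tau,\sigma)=\tfrac{1}{(1-\tau)\sigma^2}{\cal I}_0(\kappa)$.

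It remains to identify ${\cal I}_0(\kappa)$ with $\mathcal{J}_{\rm BS}(\kappa)$ from \eqref{eq-J-x}. The key step is a path-reflection argument: to a feasible $\varphi$ for ${\cal I}_0(\kappa)$ associate $\psi(u):=\varphi(1-u)-\varphi(1)$. Then $\psi\in\mathcal{AC}[0,1]$, $\psi(0)=0$, and the change of variables $v=1-u$ gives
\[
\int_0^1 e^{\psi(u)}\,du=e^{-\varphi(1)}\int_0^1 e^{\varphi(v)}\,dv=e^{-\varphi(1)}\cdot\kappa e^{\varphi(1)}=\kappa ,
\]
so $\psi$ is feasible for the problem defining $\mathcal{J}_{\rm BS}(\kappa)$; moreover $\psi'(u)=-\varphi'(1-u)$, hence $\int_0^1(\psi'(u))^2\,du=\int_0^1(\varphi'(v))^2\,dv$, so the Dirichlet energy is unchanged. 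Because $\varphi(0)=0$, applying the reflection twice returns $\varphi$, so $\varphi\mapsto\psi$ is an involution, i.e. an energy-preserving bijection between the two feasible sets; therefore ${\cal I}_0(\kappa)=\mathcal{J}_{\rm BS}(\kappa)$. Combining with the previous paragraph yields \eqref{eq-J-float}, and the closed-form expression follows from Lemma~\ref{lemma-J}.

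I do not expect a genuine obstacle here: the structural work is already carried out in Proposition~\ref{prop-f-rate}, and what is left is essentially the reflection identity. The only point requiring a little care is checking that $\varphi\mapsto\psi$ maps the floating-strike constraint $\int_0^1 e^{\varphi}=\kappa e^{\varphi(1)}$ exactly onto the fixed-strike constraint $\int_0^1 e^{\psi}=\kappa$ and back, together with preservation of absolute continuity and of the quadratic energy — all immediate from the substitution $v=1-u$. This reflection is the variational counterpart of the probabilistic fact that, in the short-maturity limit, the time reversal $S_t\mapsto S_t/S_T$ converts the floating-strike Asian payoff into that of a fixed-strike Asian option with strike $\kappa$ over an averaging window of length $(1-\tau)T$, which also accounts for the prefactor $(1-\tau)^{-1}$.
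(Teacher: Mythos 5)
Your proposal is correct and follows essentially the same route as the paper: specialize Proposition~\ref{prop-f-rate} to constant volatility so the inner problem loses its $x$-dependence and the outer infimum is attained at $c=0$, then identify the floating-strike inner problem with $\mathcal{J}_{\rm BS}(\kappa)$ via the reflection $h(t)=\varphi(1-t)-\varphi(1)$, which is exactly the substitution the paper uses. Your phrasing of the reflection as an explicit energy-preserving involution between the two feasible sets is a slightly cleaner packaging of the same idea, but there is no substantive difference.
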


\begin{proof}
From Proposition \ref{prop-f-rate} we know that when $\sigma(\cdot)=\sigma$, the rate function is independent of $S_0$ and given by
\begin{equation*}
{\cal I}_{f}(S_0,\kappa, \tau)= \inf_{c\in \mathbb{R}}
\left\{ \frac12 c^2 \tau + \frac{1}{(1-\tau)\sigma^2}
\inf_{\varphi\in \mathcal{D}(\kappa) }\left\{ \frac12
\int_0^1  \left(\varphi'(u)\right)^2du\right\} \right\},
\end{equation*}
where $\mathcal{D}(\kappa)
=\bigg\{\varphi\in\mathcal{AC}[0,1]\,\bigg|\,\varphi(0)=0, \int_{0}^{1}e^{\varphi(u)}du=\kappa e^{\varphi(1)} \bigg\}$. 
First we claim that
\begin{equation}\label{eq-claim-float-s}
\inf_{\varphi\in \mathcal{D}(\kappa) }\left\{ \frac12
\int_0^1  \left(\varphi'(u)\right)^2du\right\} =\mathcal{J}_{\rm BS}(\kappa)\,.
\end{equation}
The proof can be found in \cite{PZAsian}. 
For completeness we briefly sketch it here. The main idea is to use the 
method of the Lagrange multiplier. Consider the auxiliary variational problem
\begin{equation*}
\Lambda(\varphi):=\frac{1}{2\sigma^2}\int_{0}^{1}\left(\varphi'(t)\right)^{2}dt+\lambda\left(\int_{0}^{1}e^{\varphi(t)}dt-\kappa e^{\varphi(1)} \right)\,,
\end{equation*}
over all functions in $\mathcal{AC}[0,1]$ satisfying $\varphi(0)=0$.
The problem has been reduced to solving the Euler-Lagrange equation
\begin{equation*}
\varphi''(t)=\lambda\sigma^2e^{\varphi(t)}\,,
\end{equation*}
with boundary conditions
\begin{equation*}
\varphi(0)=0,\quad \varphi'(0)=0,\quad  \varphi'(1)=\lambda\sigma^2\kappa e^{\varphi(1)}.
\end{equation*}
Let $h\in \mathcal{AC}[0,1]$ be such that
\begin{equation*}
h(t)=\varphi(1-t)-\varphi(1),
\end{equation*}
hence $\varphi(t)=h(1-t)+\varphi(1)$. The left hand side of \eqref{eq-claim-float-s} is then given by
\begin{equation*}
\inf_{
\substack{\int_{0}^{1}e^{h(t)}dt=\kappa 
\\
h(0)=0,\ h\in\mathcal{AC}[0,1]}
}
\left\{ \frac12
\int_0^1  \left(h'(u)\right)^2du\right\}\,,
\end{equation*}
which is exactly $\mathcal{J}_{\rm BS}(\kappa)$. 
Hence we have \eqref{eq-claim-float-s}, and
\begin{equation*}
{\cal I}^{(BS)}_{f}(\kappa, \tau,\sigma)= \inf_{c\in \mathbb{R}}
\left\{ \frac12 c^2 \tau +\frac{1}{(1-\tau)\sigma^2}\mathcal{J}_{\rm BS}(\kappa)
\right\}
=\frac{1}{(1-\tau)\sigma^2}\mathcal{J}_{\rm BS}(\kappa).
\end{equation*} 
\end{proof}

\begin{remark}
Notice that under Black-Scholes model, the optimization of the rate function 
${\cal I}_{f}(S_0,\kappa, \tau)$ in \eqref{eq-rate-float-s} with respect to 
$c$ is trivial. This comes from the fact that $\I(x,\kappa)$ in 
\eqref{eq-cal-I} is independent of $x$  when $\sigma(\cdot)$ is a constant function. Hence the optimal path remains flat when $0<t<\tau$ in order to achieve smallest energy.
\end{remark}

At last, it is worth investigating an equivalence of floating-strike and 
fixed-strike forward starting Asian options under the Black-Scholes model, 
which was first introduced by Henderson and Wojakowski in \cite{HW} for regular 
Asian options. In the proposition below, we study the equivalence for forward 
start Asian options. Similar results for Asian options with 
discrete time averaging have been discussed in \cite{VDL}, see Theorem 9
in \cite{VDL}.

Denote by $C_f(S_0, \kappa, r, q, T_1, T_2)$ the price of a floating-strike 
call option with floating-strike $\kappa$, forward starting time $T_1$, 
maturity $T_2$, and underlying asset dynamics
\begin{equation}\label{eq-S-BS}
S_t=S_0e^{(r-q-\frac{1}{2}\sigma^2)t +\sigma W_t},\qquad t\geq 0,\quad S_{0}>0,
\end{equation}
where $W_t$ is a standard Brownian motion. 
Denote by $C_x(S_0, K, r, q, T_1, T_2)$ the price of a fixed-strike forward 
start call option with strike price $K$, forward starting time $T_1$, 
maturity $T_2$, and underlying asset following \eqref{eq-S-BS}. Analogously 
denote the prices for put-options: $P_f(S_0, \kappa, r, q, T_1, T_2)$ and 
$P_x(S_0, \kappa, r, q, T_1, T_2)$.
\begin{proposition}
Assume the asset price $S_t$ follows the Black-Scholes model with constant 
volatility $\sigma$. Then we have that
\begin{equation}\label{eq-f-x-1}
C_f(S_0, \kappa, r, q, \tau T, T)=P_x(S_0, \kappa S_0, q, r, 0, (1-\tau)T),
\end{equation}
and
\begin{equation}\label{eq-f-x-2}
P_f(S_0, \kappa, r, q, \tau T, T)=C_x(S_0, \kappa S_0, q, r, 0, (1-\tau)T).
\end{equation}

We also have
\begin{equation}\label{eq-f-x-3}
C_f(S_0, \kappa, r, q, 0, (1-\tau) T)=P_x(S_0, \kappa S_0, q, r, \tau T, T),
\end{equation}
and
\begin{equation}\label{eq-f-x-4}
P_f(S_0, \kappa, r, q, 0, (1-\tau) T)=C_x(S_0, \kappa S_0, q, r, \tau T, T).
\end{equation}

\end{proposition}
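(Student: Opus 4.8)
The plan is to derive all four identities from one device: the change of num\'{e}raire to the share measure, which interchanges calls with puts and the roles of the interest rate $r$ and the dividend yield $q$ --- the same mechanism underlying the Henderson--Wojakowski symmetry \cite{HW} for ordinary Asian options --- combined with the time-reversal invariance of Brownian motion, which carries a forward averaging window $[\tau T,T]$ onto a window $[0,(1-\tau)T]$ beginning at time $0$, and conversely.

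First I would establish \eqref{eq-f-x-1}. Writing $S_t=S_0e^{(r-q-\frac12\sigma^2)t+\sigma W_t}$ under the risk-neutral measure $\mathbb{Q}$, pull $S_T$ out of the call payoff,
\[
C_f(S_0,\kappa,r,q,\tau T,T)=e^{-rT}\,\mathbb{E}\!\left[S_T\Big(\kappa-\tfrac{1}{(1-\tau)T}\int_{\tau T}^{T}\tfrac{S_t}{S_T}\,dt\Big)^{\!+}\right].
\]
The density $e^{\sigma W_T-\frac12\sigma^2 T}=e^{-(r-q)T}S_T/S_0$ defines the share measure $\mathbb{Q}^S$, under which $W^S_t:=W_t-\sigma t$ is a Brownian motion by Girsanov, so that the right-hand side equals $S_0e^{-qT}\,\mathbb{E}^{\mathbb{Q}^S}\!\big[(\kappa-\overline A)^+\big]$ with $\overline A=\tfrac{1}{(1-\tau)T}\int_{\tau T}^{T}\tfrac{S_t}{S_T}\,dt$. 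Under $\mathbb{Q}^S$ one has $S_t/S_T=e^{(r-q+\frac12\sigma^2)(t-T)+\sigma(W^S_t-W^S_T)}$ for $t\in[\tau T,T]$; the substitution $s=T-t$ together with the reversal $\widehat W_s:=W^S_T-W^S_{T-s}$ (again a Brownian motion on $[0,T]$, with $-\widehat W\stackrel{d}{=}\widehat W$) shows that $(S_{T-s}/S_T)_{0\le s\le(1-\tau)T}$ has the law of $(\widetilde{S}_s)_{0\le s\le(1-\tau)T}$, where $\widetilde{S}_s:=e^{(q-r-\frac12\sigma^2)s+\sigma\widehat W_s}$ is a Black-Scholes asset with interest rate $q$, dividend rate $r$, started from $1$. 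Hence $\overline A$ has the law of $\tfrac{1}{(1-\tau)T}\int_0^{(1-\tau)T}\widetilde{S}_s\,ds$, and recognising the resulting quantity $S_0\,\mathbb{E}\!\big[(\kappa-\tfrac{1}{(1-\tau)T}\int_0^{(1-\tau)T}\widetilde{S}_s\,ds)^+\big]$, carried with the appropriate discount factor, as $P_x(S_0,\kappa S_0,q,r,0,(1-\tau)T)$ yields \eqref{eq-f-x-1}. Identity \eqref{eq-f-x-2} follows from the same computation after reversing the sign inside $(\cdot)^+$.

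For \eqref{eq-f-x-3} and \eqref{eq-f-x-4} I would run the transformation in the opposite direction, starting from the regular floating-strike option on $[0,(1-\tau)T]$: the share-measure change up to time $(1-\tau)T$ and the reversal $s=(1-\tau)T-t$ turn it into a fixed-strike object, and one then re-attaches the deterministic-in-law initial segment $[0,\tau T]$ of the $(q,r)$ Black-Scholes path, which is why the averaging window on the right-hand side is the forward window $[\tau T,T]$ with maturity $T$. Equivalently --- and this is a convenient cross-check --- one may condition on $\mathcal{F}_{\tau T}$, use the Markov property, and exploit that the Asian price is homogeneous of degree one in the initial price (for the floating payoff directly, and for the fixed payoff after the compensating rescaling $K\mapsto K/S_0$), thereby reducing everything to the ordinary Henderson--Wojakowski identity --- the $\tau=0$ case of \eqref{eq-f-x-1} --- for a regular Asian option, the leftover factor $e^{(r-q)\tau T}=\mathbb{E}[S_{\tau T}]/S_0$ being absorbed into the discount.

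The Girsanov step, the substitution $s=T-t$, and the homogeneity are routine. The part that needs care --- and the most likely place to drop a factor or a sign --- is the simultaneous bookkeeping of the discount factor $e^{-rT}$, the num\'{e}raire growth factor $e^{(r-q)T}$, and the restart factor $e^{(r-q)\tau T}$, together with the verification that the time-reversed log-price is \emph{exactly} the Black-Scholes dynamics with $r$ and $q$ interchanged, so that the right-hand sides are genuine $(q,r)$-options rather than merely options of the same distributional family. I would therefore carry out this bookkeeping once, in the proof of \eqref{eq-f-x-1}, and have the remaining identities quote it.
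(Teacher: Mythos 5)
Your proposal is correct and follows essentially the same route as the paper: factor $S_T$ out of the payoff, pass to the share measure via Girsanov, time-reverse the Brownian motion so that $S_t/S_T$ over $[\tau T,T]$ becomes a Black--Scholes asset on $[0,(1-\tau)T]$ with $r$ and $q$ interchanged, and read off the fixed-strike price; the remaining identities are obtained exactly as you indicate. The only difference is cosmetic — the paper simply states that \eqref{eq-f-x-2}--\eqref{eq-f-x-4} "are proved in a similar way," where you spell out the reverse direction.
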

\begin{proof}
We prove \eqref{eq-f-x-1}. Since
\begin{align*}
&C_f(S_0, \kappa, r, q, \tau T, T)=e^{-rT}\E\left[ \left(\kappa S_T-\frac{1}{(1-\tau)T}\int_{\tau T}^T S_tdt \right)^+\right]\\
&\qquad=e^{-rT}\E\left[ S_0e^{(r-q)T-\frac12\sigma^2T+\sigma W_T}\left(\kappa-\frac{1}{(1-\tau)T}\int_{\tau T}^T \frac{S_t}{S_T}dt  \right)^+\right]\\
&\qquad =e^{-qT}S_0\,\E\left[e^{-\frac12\sigma^2T+\sigma W_T}\left(\kappa-\frac{1}{(1-\tau)T}\int_{\tau T}^T \frac{S_t}{S_T}dt  \right)^+\right].
\end{align*}
We change the probability measure such that
\[
\frac{dP^*}{dP}=e^{-\frac{1}{2}\sigma^2T+\sigma W_T}.
\]
By Girsanov's theorem we know that $W_t^*=W_t-\sigma t$ is a Brownian motion 
under the probability measure $P$. Hence we have 
\begin{equation}\label{Cfeq1}
C_f(S_0, \kappa, r, q, \tau T, T)=e^{-qT}S_0\,
\E^*\left[\left(\kappa-\frac{1}{(1-\tau)T}\int_{\tau T}^T \frac{S_t}{S_T}dt  
\right)^+\right].
\end{equation}
Moreover, note that
\[
\frac{S_t}{S_T}=e^{(r-q+\frac{1}{2}\sigma^2)(t-T)+\sigma(W_t^*-W_T^*)}\stackrel{\mathcal{D}}{=}
e^{(r-q+\frac{1}{2}\sigma^2)(t-T)+\sigma\hat{W}_{T-t}},
\]
where $\hat{W}$ is a standard Brownian motion under $P^*$. 
By changing the integration variable $t\to T-t$ we obtain that
\begin{equation}\label{Cfeq2}
C_f(S_0, \kappa, r, q, \tau T, T)=e^{-qT}\,\E^*\left[\left(\kappa S_0-\frac{1}{(1-\tau)T}\int_{0}^{(1-\tau) T}S_0e^{(q-r-\frac12\sigma^2)t+\sigma \hat{W}_t}dt  \right)^+\right].
\end{equation}
We then obtain \eqref{eq-f-x-1} by realizing the right hand side of the above 
equation is exactly $P_x(S_0, \kappa S_0,q, r, 0, (1-\tau)T)$.

Similarly, for a floating-strike forward start Asian put option we have
\begin{align*}
&P_f(S_0, \kappa, r, q, \tau T, T)=e^{-rT}\E\left[ \left(\frac{1}{(1-\tau)T}\int_{\tau T}^T S_tdt-\kappa S_T \right)^+\right]\\
&\qquad =e^{-qT}S_0\,\E\left[e^{-\frac12\sigma^2T+\sigma W_T}\left(\frac{1}{(1-\tau)T}\int_{\tau T}^T \frac{S_t}{S_T}dt-\kappa  \right)^+\right]\\
&\qquad\qquad =e^{-qT}\,\E^*\left[\left(\frac{1}{(1-\tau)T}\int_{0}^{(1-\tau) T}S_0e^{(q-r-\frac12\sigma^2)t+\sigma \hat{W}_t}dt-\kappa S_0  \right)^+\right].
\end{align*}
This is indeed \eqref{eq-f-x-2}. The remaining symmetry relations
\eqref{eq-f-x-3}, \eqref{eq-f-x-4} are proved in a similar way.
\end{proof}

%%%%%%%%%%%%%%%%%%%%%%%%%%%%%%%%%%%%%%%%%%%%%%%%%%%%%%%%%%%%%%%%%

\subsection{Generalized Asian options}
\label{sec:4.2}

We consider here a more general type of Asian options with payoff
\begin{eqnarray}
\mbox{Call} &:& (\kappa S_T - A_T + K)^+ \,,
\\
\mbox{Put}  &:& (A_T - \kappa S_T - K)^+\,,
\end{eqnarray}
where the asset price average $A_T$ is defined with respect to the time period
$[0,T]$. Although these options are not forward start, they appear naturally
when pricing so-called seasoned floating-strike 
Asian options. These options are initially forward start, but their
evaluation requires the study of these payoffs when the valuation date
enters the averaging period.
We will call them generalized Asian options, following 
Linetsky \cite{LinetskyRisk}.

The prices of these options are given by expectations in the risk-neutral
measure. For example
\begin{equation}
C_{\rm gen}(S_0,\kappa,K,T )  = e^{-rT} \mathbb{E}\left[\left(\kappa S_T - 
   \frac{1}{T} \int_{0}^T S_t dt + K\right)^+\right]\,.
\end{equation}

The generalized Asian call option is in-the-money if
\begin{equation}
\kappa > 1 - K/S_0\,,
\end{equation}
and at-the-money if $\kappa=1-K/S_0$ and out-of-the-money otherwise.

We study here the short maturity $T\to 0$ asymptotics for the generalized
Asian options in the local volatility model. This is given by a generalization 
of Proposition 21 in \cite{PZAsian}.

\begin{theorem}\label{thm-gen-otm}
Assume that $S_t$ follows the local volatility model \eqref{eq-S-t} and
\eqref{eq-S-t-condition}. 
Then the short maturity asymptotics $T\to0$ for the 
prices of OTM generalized Asian options are as follows.

(1) When $\kappa<1 - K/S_0$, we have
\begin{equation}\label{eq-gen-out-call}
\lim_{T\to0}T\log C_{\rm gen}(S_0,\kappa, K,T)=-{\cal I}_{g}(S_0,\kappa, K),
\end{equation}
where
\begin{equation*}
{\cal I}_{g}(S_0,\kappa, K)=
\inf_{
\substack{\int_{0}^{1}e^{g(t)}dt=\kappa e^{g(1)} + K
\\
g(0)=\log S_{0}, g\in\mathcal{AC}[0,1]}}\frac{1}{2}\int_{0}^{1}\left(\frac{g'(t)}{\sigma(e^{g(t)})}\right)^{2}dt.
\end{equation*}

(2) When $\kappa>1-K/S_0$, we have
\begin{equation}\label{eq-gen-out-put}
\lim_{T\to0}T\log P_{\rm gen}(S_0,\kappa, K,T)=-{\cal I}_{g}(S_0, \kappa, K).
\end{equation}

(3)  When $\kappa=1-K/S_0$, we have 
\begin{equation}\label{eq-atm-gen}
\lim_{T\rightarrow 0}\frac{1}{\sqrt{T}}C_{\rm gen}(\kappa, K,T)
=\lim_{T\rightarrow 0}\frac{1}{\sqrt{T}}P_{\rm gen}(\kappa, K,T)
=\frac{\sigma(S_{0})S_{0}}{\sqrt{2\pi}}\sqrt{\kappa^2 - \kappa + \frac13} \,.
\end{equation}
\end{theorem}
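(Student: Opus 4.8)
The plan is to follow closely the proofs of Theorem~\ref{thm-f-otm} and Theorem~\ref{thm-atm}: treat the out-of-the-money cases (1)--(2) by a reduction lemma plus the contraction principle, and the at-the-money case (3) by the Gaussian approximation scheme. For part (1), I would first establish a reduction lemma in the spirit of Lemma~\ref{lemma-TCT}. Using H\"older's inequality together with the bound $(\kappa S_T - \frac{1}{T}\int_0^T S_t\,dt + K)^+ \le \kappa S_T + K$ and the polynomial moment estimate $\E[S_t^p]\le S_0^p e^{(p(r-q)+\frac12 p(p-1)\overline{\sigma}^2)t}$, which follows from \eqref{eq-S-t-condition} exactly as in Lemma~\ref{lemma-TCT}, one obtains
\begin{equation*}
\lim_{T\to0}T\log C_{\rm gen}(S_0,\kappa,K,T)=\lim_{T\to0}T\log\p\left(\frac{1}{T}\int_0^T S_t\,dt\le \kappa S_T+K\right).
\end{equation*}
Rescaling time gives $\frac1T\int_0^T S_t\,dt=\int_0^1 e^{X_{tT}}\,dt$ and $S_T=e^{X_T}$ with $X_t=\log S_t$, and $\{X_{tT}\}_{t\in[0,1]}$ satisfies the sample path large deviation principle on $L_\infty([0,1],\R)$ with rate function $I(g)=\frac12\int_0^1(g'(t)/\sigma(e^{g(t)}))^2\,dt$ for $g(0)=\log S_0$ (Varadhan~\cite{Varadhan67}), as recorded in the proof of Theorem~\ref{thm-otm}. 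Since $g\mapsto \kappa e^{g(1)}+K-\int_0^1 e^{g(t)}\,dt$ is continuous from $L_\infty([0,1],\R)$ to $\R$, the contraction principle (\cite{DZ}) yields $\lim_{T\to0}T\log C_{\rm gen}=-\inf\{I(g): \int_0^1 e^{g(t)}\,dt\le \kappa e^{g(1)}+K\}$. Finally, $\kappa<1-K/S_0$ means the constant path $g\equiv\log S_0$ has $\int_0^1 e^g=S_0>\kappa S_0+K$, so it violates the inequality; hence, as in the passage leading to Theorem~\ref{thm-otm}, the infimum is attained on the boundary $\int_0^1 e^{g(t)}\,dt=\kappa e^{g(1)}+K$, giving ${\cal I}_g(S_0,\kappa,K)$.

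Part (2) uses the identical scheme applied to $P_{\rm gen}(S_0,\kappa,K,T)=e^{-rT}\E[(\frac1T\int_0^T S_t\,dt-\kappa S_T-K)^+]$, where now the payoff is dominated by $\frac1T\int_0^T S_t\,dt$, so the same moment bound applies. The reduction lemma gives $\lim_{T\to0}T\log P_{\rm gen}=\lim_{T\to0}T\log\p(\int_0^1 e^{X_{tT}}\,dt\ge \kappa e^{X_T}+K)$, and since $\kappa>1-K/S_0$ the constant path now undershoots the constraint, so the infimum of $I$ over $\{\int_0^1 e^g\ge\kappa e^{g(1)}+K\}$ is again attained on the same equality constraint, producing the identical rate function ${\cal I}_g(S_0,\kappa,K)$.

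For part (3) I would repeat the two-step approximation of Theorem~\ref{thm-atm}: replace $S_t$ by the martingale $X_t=e^{-(r-q)t}S_t$ (error $O(T)$ for the price, by the Step~1 estimate there), and then replace $X_t$ by the Gaussian process $\hat S_t=S_0+\sigma(S_0)S_0 W_t$ (error $O(T)$, by Doob's inequality and the $L^2$ bound $\E[(X_T-\hat S_T)^2]\le MT^2$ from \cite{PZAsian} under the stated Lipschitz hypotheses). It then remains to compute $\E[(\kappa\hat S_T-\frac1T\int_0^T\hat S_t\,dt+K)^+]$. When $\kappa=1-K/S_0$ the deterministic part $\kappa S_0+K-S_0$ vanishes, so $\kappa\hat S_T-\frac1T\int_0^T\hat S_t\,dt+K=\sigma(S_0)S_0\big(\kappa W_T-\frac1T\int_0^T W_t\,dt\big)$, a centered Gaussian with variance
\begin{equation*}
\sigma(S_0)^2 S_0^2\left(\kappa^2 T-2\kappa\cdot\frac{T}{2}+\frac{T}{3}\right)=\sigma(S_0)^2 S_0^2\left(\kappa^2-\kappa+\frac13\right)T,
\end{equation*}
using $\E[W_T^2]=T$, $\E\big[W_T\int_0^T W_t\,dt\big]=\frac{T^2}{2}$, and $\E\big[(\int_0^T W_t\,dt)^2\big]=\frac{T^3}{3}$. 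Hence the leading term is $\sigma(S_0)S_0\sqrt{\kappa^2-\kappa+\frac13}\sqrt{T}\,\E[Z^+]=\frac{\sigma(S_0)S_0}{\sqrt{2\pi}}\sqrt{\kappa^2-\kappa+\frac13}\sqrt{T}$ with $Z\sim N(0,1)$, and the put value is identical by symmetry of the Gaussian, giving \eqref{eq-atm-gen}.

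The bulk of the argument is bookkeeping, since each step is a verbatim adaptation of results already in the paper; the only genuinely new computation is the variance $\kappa^2-\kappa+\frac13$, which is elementary. The one point needing a little care is the reduction of the inequality-constrained variational problem to the equality-constrained one in both out-of-the-money regimes: one must check that the condition $\kappa<1-K/S_0$ (resp.\ $\kappa>1-K/S_0$) places the constant path strictly on the ``wrong'' side of the constraint, so that, as in the proof of Theorem~\ref{thm-otm}, the large deviation infimum is realized on the boundary. One must also verify the $\limsup$/$\liminf$ sandwich in the reduction lemma, but this goes through exactly as in Lemma~\ref{lemma-TCT} thanks to the uniform bounds on $\sigma$ in \eqref{eq-S-t-condition}.
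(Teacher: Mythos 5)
Your proposal is correct and follows exactly the route the paper intends: the paper omits the proof, noting only that the result is "a generalization of Proposition 21 in \cite{PZAsian}," and your argument assembles it from the same ingredients used for Lemma~\ref{lemma-TCT}, Theorem~\ref{thm-otm}, Theorem~\ref{thm-atm}, and Theorem~\ref{thm-f-otm}. The only genuinely new computation, the variance $\kappa^2 T - 2\kappa\cdot\frac{T}{2} + \frac{T}{3} = (\kappa^2-\kappa+\frac13)T$ of $\kappa W_T - \frac1T\int_0^T W_t\,dt$, is carried out correctly and matches \eqref{eq-atm-gen}.
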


In the Black-Scholes model we can give a more explicit result for the
rate function of the generalized Asian options.

\begin{figure}[ht]
\begin{center}
\includegraphics[height=80mm]{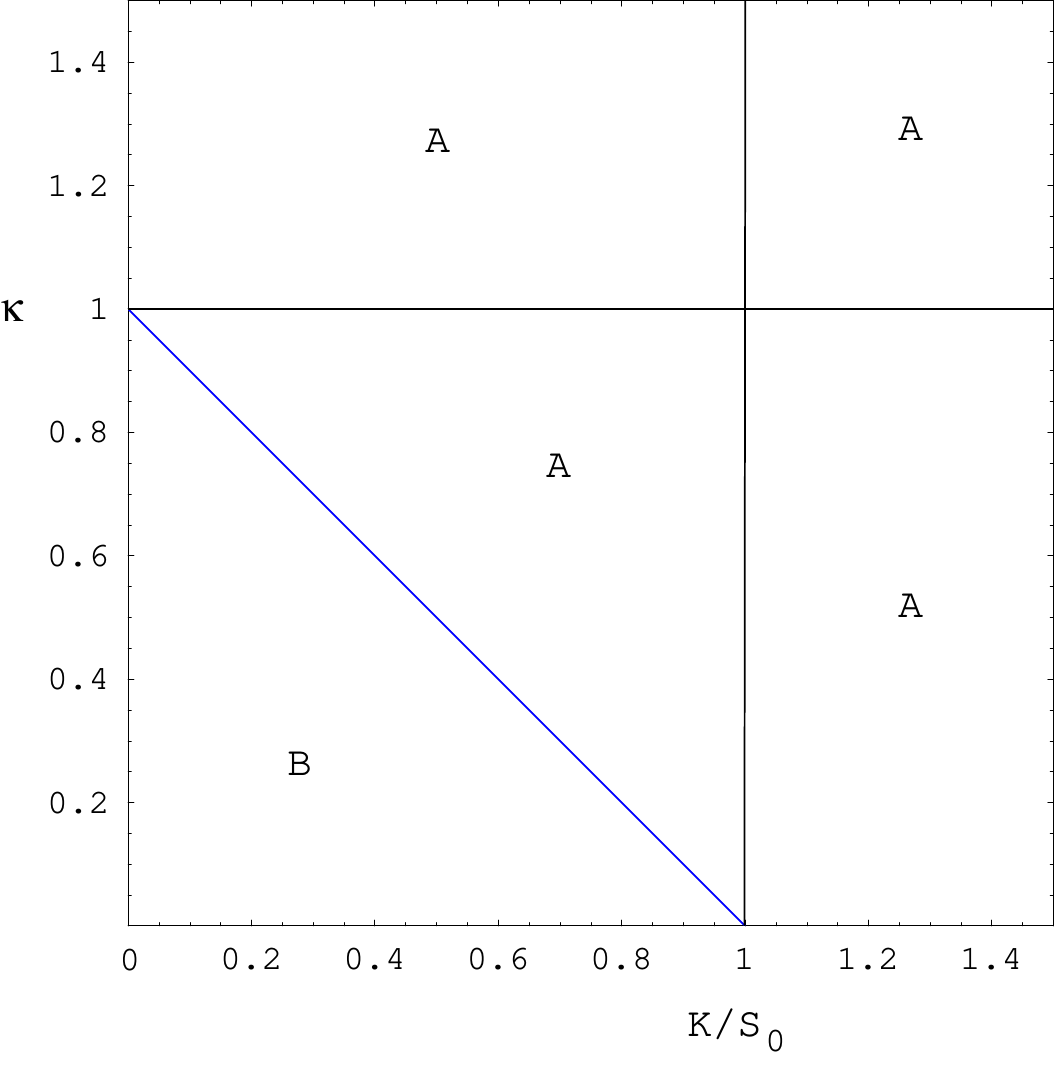}
\end{center}
\caption{Regions in the $(K/S_0,\kappa)$ plane for the rate function of the
generalized Asian option. The points on the blue line correspond to 
ATM generalized Asian options $K/S_0+\kappa=1$. The rate function 
$\mathcal{J}_g(\kappa, K/S_0)$ vanishes along this line.}
\label{Fig:reg}
\end{figure}

\begin{proposition}\label{prop:gen}
The rate function for the generalized Asian option in the Black-Scholes
model has the following form.

(1) In the region A of the $(K/S_0,\kappa)$ plane (above the blue line
$K/S_0+\kappa=1$ in Fig.~\ref{Fig:reg}), we have for the rate function of 
an OTM call 

\begin{eqnarray}\label{J1sol}
\mathcal{J}_g(\kappa,K/S_0) = \sigma^2 \mathcal{I}_g(\kappa,K/S_0)=
\frac12\beta^2 - 2\beta \frac{\gamma}{\gamma+1} \frac{e^\beta-1}{e^\beta+\gamma}\,,
\end{eqnarray}
where 
\begin{eqnarray}\label{gsol}
\gamma= \frac{e^\beta}{\kappa\beta + \sqrt{1+\kappa^2\beta^2}}\,,
\end{eqnarray}
and
$\beta$ is the solution of the equation
\begin{eqnarray}\label{betaeq}
\frac{\gamma+1}{\gamma+e^\beta} \frac{e^\beta-1}{\beta} = 
\kappa e^\beta \left( \frac{\gamma+1}{\gamma + e^\beta}\right)^2 
+ \frac{K}{S_0}\,.
\end{eqnarray}

(2) in the region B of the $(K/S_0,\kappa)$ plane (below the blue line
$K/S_0+\kappa=1$ in Fig.~\ref{Fig:reg}), we have the rate function of 
an OTM put 

\begin{equation}\label{J2sol}
\mathcal{J}_g(\kappa,K/S_0) = \sigma^2 \mathcal{I}_g(\kappa,K/S_0)=
2\xi \left( \tan(\xi+\eta) - \tan\eta - \xi \right)\,,
\end{equation}
where $\eta$ is the solution of the equation 
\begin{equation}\label{etaeq}
\frac{1}{2\xi} \sin (2(\xi+\eta)) = \kappa\,.
\end{equation}
This determines $\eta$ up to a discrete ambiguity
\begin{equation}\label{etasol}
\eta=
\begin{cases}
-\xi + \frac12\arcsin(2\xi\kappa) + n \pi \\
-\xi - \frac12\arcsin(2\xi\kappa) + (n+\frac12) \pi \\
\end{cases}
,\qquad
n \in \mathbb{N}.
\end{equation}
Finally, $\xi$ is given by the solution of the equation
\begin{equation}\label{xieq}
\frac{1}{\xi} \cos^2\eta \left( \tan(\xi+\eta) - \tan\eta \right) =
\kappa \frac{\cos^2\eta}{\cos^2(\xi+\eta)}+ \frac{K}{S_0}\,.
\end{equation}

\end{proposition}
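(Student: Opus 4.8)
The plan is to solve the variational problem defining $\mathcal{I}_g$ in Theorem~\ref{thm-gen-otm} explicitly in the constant-volatility case by the method of Lagrange multipliers, exactly as for the standard and forward-start Asian options. First I would substitute $g(t)=\varphi(t)+\log S_0$, so that $\varphi(0)=0$ and the constraint becomes $\int_0^1 e^{\varphi}\,dt=\kappa e^{\varphi(1)}+K/S_0$; with $\sigma(\cdot)\equiv\sigma$ the objective is $\frac{1}{\sigma^2}\cdot\frac12\int_0^1(\varphi')^2\,dt$, hence $\mathcal{I}_g=\mathcal{J}_g/\sigma^2$ with $\mathcal{J}_g$ the $\sigma$-free functional. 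Introducing a multiplier $\lambda$ for the (boundary-dependent) constraint, the stationarity conditions are the Euler--Lagrange equation $\varphi''=-\lambda e^{\varphi}$ on $(0,1)$, the fixed left endpoint $\varphi(0)=0$, and the transversality condition at the free right endpoint $\varphi'(1)+\lambda\kappa e^{\varphi(1)}=0$.

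Next I would integrate the Liouville equation $\varphi''=-\lambda e^{\varphi}$. The sign of $\lambda$ — equivalently whether $(K/S_0,\kappa)$ lies above or below the line $K/S_0+\kappa=1$, on which $\lambda=0$ and $\mathcal{J}_g=0$ — selects one of two two-parameter families. In region~A one takes $\varphi(t)=\beta t-2\log\frac{e^{\beta t}+\gamma}{1+\gamma}$, which automatically has $\varphi(0)=0$ and $\lambda=2\beta^2\gamma/(1+\gamma)^2$; in region~B one takes $\varphi(t)=2\log\cos\eta-2\log\cos(\xi t+\eta)$, again with $\varphi(0)=0$ and $\lambda=-2\xi^2/\cos^2\eta$. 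Imposing the transversality condition reduces in region~A to $\gamma^2+2\kappa\beta e^{\beta}\gamma-e^{2\beta}=0$, whose positive root is (\ref{gsol}), and in region~B to $\frac12\sin(2(\xi+\eta))=\kappa\xi$, which is (\ref{etaeq}) and fixes $\eta$ up to the discrete ambiguity (\ref{etasol}). Then imposing the constraint $\int_0^1 e^{\varphi}\,dt=\kappa e^{\varphi(1)}+K/S_0$ — a direct integration of a rational function of $e^{\beta t}$, resp.\ of $\sec^2(\xi t+\eta)$ — yields exactly (\ref{betaeq}), resp.\ (\ref{xieq}). Finally the value $\mathcal{J}_g=\frac12\int_0^1(\varphi')^2\,dt$ is obtained by quadrature: in region~A, $(\varphi')^2=\beta^2\bigl(1-\tfrac{2e^{\beta t}}{\gamma+e^{\beta t}}\bigr)^2$ integrates (the logarithmic terms cancelling) to (\ref{J1sol}); in region~B, $(\varphi')^2=4\xi^2\tan^2(\xi t+\eta)=4\xi^2(\sec^2(\xi t+\eta)-1)$ integrates to (\ref{J2sol}). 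As a sanity check, setting $\kappa=0$ forces $\gamma=e^{\beta}$, resp.\ $\eta=-\xi$, and recovers the rate function and the equations of Lemma~\ref{lemma-J} for the standard Asian option.

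The main obstacle, as in \cite{PZAsian}, is not the quadrature but the qualitative analysis. One must (i) show that among the branches (\ref{etasol}) of $\eta$ (and the corresponding choice in region~A) exactly one gives an admissible stationary path — with $\gamma+e^{\beta t}>0$ on $[0,1]$, resp.\ with $\xi t+\eta$ staying in a single branch of $\tan$ — and that it is the one realizing the least energy; and (ii) verify that this unique stationary path is actually the global minimizer. For (ii), since $\frac12\int(\varphi')^2$ is convex but the constraint manifold is not, I would argue via existence of a minimizer (coercivity and lower semicontinuity of the rate functional on $\mathcal{AC}[0,1]$ with the $L_2$ norm of the derivative, as underlies the sample-path LDP of Varadhan) together with the fact that any minimizer must solve the Euler--Lagrange system above, so the explicit solution is the only candidate and hence the minimizer. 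Part~(2) is entirely parallel to Part~(1), the only change being the sign of $\lambda$ and the attendant replacement of the exponential solution family by the trigonometric one.
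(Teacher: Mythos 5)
Your proposal is correct and follows exactly the route the paper itself defers to (the Lagrange-multiplier/Euler--Lagrange argument of Prop.~23 in \cite{PZAsian}): the two solution families of $\varphi''=-\lambda e^{\varphi}$, the transversality condition $\varphi'(1)+\lambda\kappa e^{\varphi(1)}=0$ yielding (\ref{gsol}) and (\ref{etaeq}), the integral constraint yielding (\ref{betaeq}) and (\ref{xieq}), and the quadratures for (\ref{J1sol}) and (\ref{J2sol}) all check out, as does the $\kappa=0$ reduction to Lemma~\ref{lemma-J}. Since the paper omits the proof entirely, your write-up (including the honest flagging of branch selection and global minimality as the remaining qualitative work) actually supplies more detail than the original.
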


\begin{proof}
The rate function in the Black-Scholes model is given by the solution of the 
variational problem
\begin{equation}
\mathcal{I}_g(\kappa,K) = \mbox{inf}_f \frac{1}{2\sigma^2} \int_0^1 [f'(t)]^2 dt\,,
\end{equation}
where the infimum is taken over all functions $f(t)$ satisfying $f(0)=0$
and 
\begin{equation}\label{intconst0}
\int_0^1 e^{f(t)} dt = \kappa e^{f(1)} + \frac{K}{S_0}\,.
\end{equation}
The proof follows closely that of Prop.~23 in \cite{PZAsian} and will be
omitted. 
\end{proof}

We note a few properties of the rate function for generalized Asian options.

\begin{proposition}
The rate function $\mathcal{J}_g(\kappa,K/S_0)$ for the generalized
Asian options in the Black-Scholes model has the following properties.

(i) The rate function vanishes along the ATM line $\kappa+K/S_0=1$.

(ii) It is symmetric under the exchange of its arguments
\begin{equation}
\mathcal{J}_g(\kappa,K/S_0) = \mathcal{J}_g(K/S_0,\kappa)\,.
\end{equation}

\end{proposition}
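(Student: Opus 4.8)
The plan is to argue directly from the variational characterization of $\mathcal{I}_g(\kappa,K/S_0)$ used in the proof of Proposition~\ref{prop:gen}: it is the infimum of $\frac{1}{2\sigma^2}\int_0^1 [f'(t)]^2\,dt$ over $f\in\mathcal{AC}[0,1]$ with $f(0)=0$ and $\int_0^1 e^{f(t)}\,dt = \kappa e^{f(1)} + K/S_0$. Since $\mathcal{J}_g = \sigma^2\mathcal{I}_g$, it suffices to establish both properties for $\mathcal{I}_g$.

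For part (i), the functional is nonnegative, so $\mathcal{I}_g \geq 0$, and it equals $0$ as soon as some admissible path has zero energy, i.e.\ is constant. The only constant path with $f(0)=0$ is $f\equiv 0$, and substituting it into the constraint gives $1 = \kappa + K/S_0$. Hence on the line $\kappa + K/S_0 = 1$ the path $f\equiv 0$ is admissible with zero energy, so $\mathcal{I}_g = 0$ (and likewise $\mathcal{J}_g = 0$) along this line. One may additionally remark that off this line no constant path is admissible, so strict convexity of the energy forces any admissible path to be non-constant and hence of strictly positive energy, making the rate function strictly positive away from the line; this is not needed for the stated claim.

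For part (ii), the key is a time-reversal of the path, mirroring the reflection $h(t)=\varphi(1-t)-\varphi(1)$ already exploited in the proof of Proposition~\ref{prop-float-s}. Given an admissible $f$ for the parameters $(\kappa, K/S_0)$, set $g(t) := f(1-t) - f(1)$. Then $g\in\mathcal{AC}[0,1]$, $g(0) = f(1)-f(1) = 0$, and $\int_0^1 [g'(t)]^2\,dt = \int_0^1 [f'(1-t)]^2\,dt = \int_0^1 [f'(s)]^2\,ds$, so the energy is preserved. For the constraint, the substitution $s = 1-t$ yields
\[
\int_0^1 e^{g(t)}\,dt = e^{-f(1)}\int_0^1 e^{f(s)}\,ds = e^{-f(1)}\big(\kappa e^{f(1)} + K/S_0\big) = \kappa + \frac{K}{S_0}\,e^{g(1)},
\]
which is exactly the admissibility constraint for the swapped parameters $(K/S_0,\kappa)$. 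The map $f\mapsto g$ is an involution, hence a bijection between the two admissible classes, and it preserves the energy; therefore the two infima coincide, i.e.\ $\mathcal{I}_g(\kappa,K/S_0) = \mathcal{I}_g(K/S_0,\kappa)$, and the same for $\mathcal{J}_g$.

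The argument is essentially mechanical, and I do not anticipate a genuine obstacle; the only points requiring minor care are that the reflection $t\mapsto 1-t$ preserves $\mathcal{AC}[0,1]$ and the boundary value at $0$, and that the integral constraint transforms as claimed under it — both immediate. The substance is simply recognizing the correct path involution.
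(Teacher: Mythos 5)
Your proof is correct and follows essentially the same route as the paper: part (i) uses the admissibility of the constant path $f\equiv 0$ exactly on the line $\kappa+K/S_0=1$, and part (ii) uses the time-reversal $g(t)=f(1-t)-f(1)$, which is precisely the substitution $f(t)=h(1-t)+f(1)$ in the paper's argument. Your additional observations (that the map is an involution and hence a bijection between the two admissible classes, and the remark on strict positivity off the ATM line) are accurate but not needed.
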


\begin{proof}

(i) It is easy to see that $f(t)=0$ satisfies
the integral constraint (\ref{intconst0}) provided that 
$\kappa + \frac{K}{S_0}=1$. Thus the optimal path along this line is $f(t)=0$ 
and the rate function vanishes. In fact it is easy to see that this holds
also in the more general local volatility model and is not specific to the
Black-Scholes model.

(ii) The integral constraint (\ref{intconst0}) for $\mathcal{J}_g(\kappa,K/S_0)$
can be written equivalently in terms of the function $h(t)$
defined by $f(t) = h(1-t) + f(1)$ as
\begin{equation}
e^{f(1)} \int_0^1 e^{h(t)} dt = \kappa e^{f(1)} + \frac{K}{S_0}\,.
\end{equation}
Noting that $h(0)=0,h(1) = - f(1)$, this can be expressed further as
\begin{equation}
\int_0^1 e^{h(t)} dt = \kappa  + \frac{K}{S_0} e^{h(1)}\,,
\end{equation}
which is just the integral constraint for $\mathcal{J}_g(K/S_0,\kappa)$. 
The rate function has the same form when expressed in terms of $h(t)$
by noting that $f'(t) = - h'(1-t)$. Thus the optimal path for 
$\mathcal{J}_f(\kappa,K/S_0)$
is mapped by this change of variable into the optimal path for 
$\mathcal{J}_g(K/S_0,\kappa)$, and the rate function takes the same value for
both cases.
This concludes the proof of the result.

\end{proof}

%%%%%%%%%%%%%%%%%%%%%%%%%%%%%%%%%%%%%%%%%%%%%%%%%%%%%%%%%%%%%%%%%

\subsection{Numerical tests for forward start floating strike Asian options}
\label{sec:4.3}

We consider here the pricing of floating strike forward start Asian options.
Following the same approach as in \cite{PZAsianCEV}, we consider them as
call/put options on the underlying $B_{[\tau T,T]} := \kappa S_T - 
A_{[\tau T,T]}$. For $\kappa \geq 0$ the support of $B_{[\tau T,T]}$
is the entire real axis. A normal approximation is thus more appropriate
than a log-normal approximation for the distribution of this random variable.

The forward price of the $B_{[\tau T,T]}$ asset is
\begin{equation}
F_f(\tau T,T) := \mathbb{E}[B_{[\tau T,T]}] = S_0 \left( \kappa e^{(r-q)T} -
\frac{e^{(r-q)T} - e^{(r-q)\tau T}}{(r-q)(1-\tau)T} \right)\,.
\end{equation}

We propose the price approximation expressed as the Bachelier formula
\begin{equation}\label{Csol}
C_f(\kappa,\tau) = e^{-r T} \left(F_f(\tau T,T) N(d) - \frac{1}{\sqrt{2\pi}}
\Sigma_N \sqrt{T} e^{-\frac12 d^2} \right)\,,
\end{equation}
with $d=\frac{F_f(\tau T,T)}{\Sigma_N\sqrt{T}}$. This corresponds to a 
call option on the underlying $B_{[\tau T,T]}$ with zero strike, which is
the payoff  of the option considered $(B_{[\tau T,T]})^+$.

Requiring agreement with the short maturity option pricing asymptotics gives the 
following short maturity asymptotics for the normal equivalent volatility
$\Sigma_N(K,T)$.

\begin{proposition}
Assume $r=q=0$. 

(1) the short-maturity limit $T\to 0$ of the normal (Bachelier) equivalent
volatility of an OTM forward start floating strike Asian option in the BS
model is
\begin{equation}
\lim_{T\to 0} \Sigma_{\rm N}^2(\kappa,T) = \sigma^2 \frac{(K-S_0)^2}
{2\mathcal{J}_{\rm BS}(\kappa)} (1-\tau)\,.
\end{equation}

(2) the short-maturity limit $T\to 0$ of the normal (Bachelier) equivalent
volatility of an ATM $\kappa=1$ forward start floating strike Asian option 
in the BS model is
\begin{equation}\label{SigNATM}
\lim_{T\to 0} \Sigma_{\rm N}(1,T) = \sigma S_0 \frac{1}{\sqrt{3}} \sqrt{1-\tau}\,.
\end{equation}
\end{proposition}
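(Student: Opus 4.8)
The plan is to pin down $\Sigma_N(\kappa,T)$ by demanding that the normal (Bachelier) approximation \eqref{Csol} reproduce the short‑maturity behavior of the true price $C_f(\kappa,T,\tau)$, whose asymptotics are already supplied by Theorem~\ref{thm-f-otm} and Proposition~\ref{prop-float-s}; this is the floating‑strike counterpart of Proposition~\ref{PropEquivVol} and proceeds exactly as the proof of Proposition~18 in \cite{PZAsian}. The first step is the observation that, with $r=q=0$, the forward of the underlying $B_{[\tau T,T]}=\kappa S_T-A_{[\tau T,T]}$ is $F_f(\tau T,T)=S_0(\kappa-1)=K-S_0$ with $K:=\kappa S_0$, independent of $T$, so that $d=(K-S_0)/(\Sigma_N\sqrt T)$ in \eqref{Csol}. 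The argument then splits according to whether this forward is nonzero ($\kappa\neq1$, part (1)) or zero ($\kappa=1$, part (2)).

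For part (1) we have $K-S_0\neq0$, hence $|d|\to\infty$ as $T\to0$ provided $\Sigma_N$ stays bounded away from $0$ and $\infty$. Granting those a priori bounds (see the last paragraph), the Mills‑ratio tail expansion of the standard normal distribution applied to \eqref{Csol} gives
\[
\log C_f(\kappa,\tau)=-\frac{(K-S_0)^2}{2\,\Sigma_N^2\,T}+O\!\left(\log\tfrac1T\right),\qquad T\to0,
\]
with the analogous identity for the OTM put when $\kappa>1$. On the other hand, Theorem~\ref{thm-f-otm} parts (1) and (2) together with Proposition~\ref{prop-float-s} give
\[
\lim_{T\to0}T\log C_f(\kappa,T,\tau)=-\mathcal{I}_{f}^{(BS)}(\kappa,\tau,\sigma)=-\frac{\mathcal{J}_{\rm BS}(\kappa)}{(1-\tau)\sigma^2}.
\]
Multiplying the first display by $T$ and letting $T\to0$ forces
\[
\lim_{T\to0}\Sigma_N^2(\kappa,T)=\frac{(K-S_0)^2}{2\,\mathcal{I}_{f}^{(BS)}(\kappa,\tau,\sigma)}=\sigma^2\,\frac{(K-S_0)^2}{2\,\mathcal{J}_{\rm BS}(\kappa)}\,(1-\tau),
\]
which is the claim; the same value is obtained from the put side by Bachelier put–call parity.

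For part (2), $\kappa=1$ makes $F_f(\tau T,T)=0$ identically, so $d=0$ and, evaluating the normal approximation \eqref{Csol} at $F_f=0$, the price reduces to $C_f(1,\tau)=\Sigma_N\sqrt T\,\E[Z^{+}]=\Sigma_N\sqrt T/\sqrt{2\pi}$ with $Z$ a standard normal variable (using $e^{-rT}=1$). Matching this against the ATM asymptotics of Theorem~\ref{thm-f-otm}(3), namely $\lim_{T\to0}\tfrac1{\sqrt T}C_f(1,T,\tau)=\tfrac{\sigma S_0}{\sqrt{6\pi}}\sqrt{1-\tau}$, yields
\[
\lim_{T\to0}\Sigma_N(1,T)=\sqrt{2\pi}\cdot\frac{\sigma S_0}{\sqrt{6\pi}}\sqrt{1-\tau}=\frac{\sigma S_0}{\sqrt3}\sqrt{1-\tau},
\]
which is \eqref{SigNATM}.

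The only genuine obstacle is the a priori two‑sided control of $\Sigma_N(\kappa,T)$ used in part (1): one must establish $0<\liminf_{T\to0}\Sigma_N(\kappa,T)\le\limsup_{T\to0}\Sigma_N(\kappa,T)<\infty$, which is what legitimizes the Mills‑ratio expansion and the reading off of the exponential rate. This follows from the monotonicity of the Bachelier price in $\Sigma_N$ together with the elementary two‑sided bounds on the true price already used in Lemma~\ref{lemma-TCT} (the lower bound of the type $\ep\,\p(B_{[\tau T,T]}\ge\ep)$ and the moment bounds analogous to \eqref{eq-claim}), exactly as in Proposition~18 of \cite{PZAsian}, so the detailed verification can be taken over verbatim from there. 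Note also that the matching in part (1) determines only the limit of $\Sigma_N^2$ and not its full $T$‑dependence, which is consistent with the statement concerning only the short‑maturity limit.
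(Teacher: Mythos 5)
Your proof is correct and follows exactly the matching argument the paper intends (the paper omits the proof, deferring to the analogous Proposition~17 of \cite{PZAsianCEV}): for $\kappa\neq 1$ you equate the exponential decay rate of the Bachelier price of a zero-strike call on $B_{[\tau T,T]}$ with the large-deviations rate $\mathcal{I}_f^{(BS)}(\kappa,\tau,\sigma)=\mathcal{J}_{\rm BS}(\kappa)/((1-\tau)\sigma^2)$, and for $\kappa=1$ you match the $\sqrt{T}$ coefficient against Theorem~\ref{thm-f-otm}(3), with the a priori two-sided bounds on $\Sigma_N$ handled as in Proposition~18 of \cite{PZAsian}. The only remark is that in part (2) you implicitly corrected the sign of the $\frac{1}{\sqrt{2\pi}}\Sigma_N\sqrt{T}e^{-d^2/2}$ term in \eqref{Csol}, which as printed would give a negative ATM price; your choice of $+$ is the correct Bachelier formula.
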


\begin{proof}
The proof is analogous to the proof of Prop.~17 in \cite{PZAsianCEV} and
will be omitted.
\end{proof}

We consider three benchmark cases, corresponding to a
forward start floating strike call
options with $\kappa=1$ with payoff $\left(S_T - A_{[\tau T,T]}\right)^+$ and
maturity $T=1$ year. The averaging period is as follows
\begin{eqnarray}
\mbox{Case I:} && (1-\tau)T = 60 \mbox{ days}\,,\quad 
\tau = 1- \frac{60}{365}= \frac{305}{365}\,; \\
\mbox{Case II:} && (1-\tau)T = 100 \mbox{ days}\,,\quad 
\tau = 1- \frac{100}{365}= \frac{265}{365}\,; \\
\mbox{Case III:} && (1-\tau)T = 182 \mbox{ days}\,,\quad 
\tau = 1- \frac{182}{365}= \frac{183}{365}\,. 
\end{eqnarray}
They correspond to the benchmark case considered
in Tables II,III and IV of \cite{TCL}. The initial asset price
is $S_0=100$, the interest rate is $r=10\%$ and dividend rate $q=0$.

The forward $F_f(\tau T,T)$ for the three cases is 
$F_f(\tau T,T) = 0.9034\,, 1.5\,, 2.71$,
which is positive. Thus the options are in-the-money. However in the
short maturity limit $T\to 0$ these options would be ATM since $\kappa=1$.
For this reason we will use the ATM normal implied volatility (\ref{SigNATM})
in the numerical estimates below.

In Table~\ref{tab:3} we show results for the benchmark cases considered
in Tables II,III,IV of \cite{TCL}. The results of the
analytical approximation of \cite{TCL}, and the result of MC simulation,
are compared against the short maturity asymptotic result following from
Eq.~(\ref{Csol}). The asymptotic result is lower than the MC simulation
result. The asymptotic result is sensitive to the interest rate $r$ through
the forward rate $F_f(\tau T,T)$. The deviation from the exact result increases
with $r$, as seen in the numerical tests in \cite{PZAsian}. For the case
considered here the interest rate $r=10\%$ is rather large, which explains 
the larger
difference comparing with the benchmark results. However we note qualitative
agreement with the benchmark scenarios, as the differences with the MC
simulation is always below 10\% in relative value.

\begin{table}[!ht]
\caption{\label{tab:3} 
Numerical tests for the forward start floating strike Asian options,
comparing with benchmark test results in \cite{TCL}. $C_f(1,\tau)$ gives the 
result of the short maturity asymptotic approximation (\ref{Csol}).
TCL shows the analytical approximation of \cite{TCL} and MC shows the result
of a Monte Carlo simulation.}
\begin{center}
\begin{tabular}{ccccc}
\hline
$\sigma$ & $\tau$ & $C_f(1,\tau)$ & TCL & MC  \\
\hline\hline
0.2 & 305/365 & 2.13005  & 2.304166 & 2.264616 \\
0.3 & 305/365 & 2.96462  & 3.223310 & 3.171457 \\
0.4 & 305/365 & 3.80438  & 4.142445 & 4.084590 \\
0.5 & 305/365 & 4.64622  & 5.058229 & 4.999982 \\
0.6 & 305/365 & 5.48911  & 5.969642 & 5.908669 \\
\hline
0.2 & 265/365 & 2.92733  & 3.147266 & 3.105413 \\
0.3 & 265/365 & 3.99604  & 4.320233 & 4.268604 \\
0.4 & 265/365 & 5.07577  & 5.493439 & 5.493439 \\
0.5 & 265/365 & 6.15994  & 6.660697 & 6.620761 \\
0.6 & 265/365 & 7.24634  & 7.821191 & 7.790436 \\
\hline
0.2 & 183/365 & 4.33054  & 4.609475 & 4.554739 \\
0.3 & 183/365 & 5.74906  & 6.152161 & 6.096058 \\
0.4 & 183/365 & 7.19388  & 7.696422 & 7.661532 \\
0.5 & 183/365 & 8.64938  & 9.231254 & 9.228435 \\
0.6 & 183/365 & 10.1102  & 10.76042 & 10.798441 \\
\hline
\end{tabular}
\end{center}
\end{table}

\appendix
\section{Notations}

We summarize here the notations for the rate functions used in the main text.

\begin{itemize}

\item $\mathcal{I}(S_0,K,\tau)$ is the rate function for an
Asian option with averaging starting at time zero in the local volatility model.

\item $\mathcal{I}_{\rm fwd}(S_0,K,\tau)$ is the rate function for a forward 
start Asian option in the local volatility model. This depends on $S_0,K$ in a 
non-trivial way.

\item $\mathcal{I}_{\rm fwd}^{(BS)}(K/S_0,\sigma,\tau) = 
\frac{1}{\sigma^2} \mathcal{J}_{\rm fwd}^{(BS)}(K/S_0,\tau)$ is the rate 
function for
a forward start Asian option in the Black-Scholes model. This depends only
on the ratio $K/S_0$. 

\item $\mathcal{I}_{\rm BS}(K/S_0,\sigma) = 
\frac{1}{\sigma^2} \mathcal{J}_{\rm BS}(K/S_0)$ is the rate function for
an Asian option in the Black-Scholes model with averaging starting at time 0. 
This corresponds to $\tau=0$ and is related to the forward start rate
functions defined above as $\mathcal{I}_{\rm BS}(K/S_0,\sigma) = 
\mathcal{I}_{\rm fwd}^{(BS)}(K/S_0,\sigma,0)$ and
$\mathcal{J}_{\rm BS}(K/S_0)=\mathcal{J}_{\rm fwd}^{(BS)}(K/S_0,0)$.

\item $\mathcal{I}_f(S_0,\kappa,\tau)$
is the rate function for a forward start floating-strike Asian option 
in the local volatility model. 

\item $\mathcal{I}_f^{(BS)}(\kappa,\tau,\sigma)$
is the rate function for a forward start floating-strike Asian option 
in the Black-Scholes model. 

\item ${\cal I}_{g}(S_0,\kappa, K)$ is the rate function for the
generalized Asian option.

\item ${\cal J}_{g}(\kappa, K/S_0)$ is the rate function for the
generalized Asian option in the Black-Scholes model.

\end{itemize}

\section*{Acknowledgements}
Lingjiong Zhu is partially supported by NSF Grant DMS-1613164.

\end{document}